\providecommand{\U}[1]{\protect\rule{.1in}{.1in}}
\newtheorem{theorem}{Theorem}
\newtheorem{definition}{Definition}
\newtheorem{example}{Example}
\newtheorem{lemma}{Lemma}
\begin{document}

\title{Entanglement-assisted quantum convolutional coding}
\author{Mark M. Wilde}
\email{mwilde@gmail.com}
\author{Todd A. Brun}
\email{tbrun@usc.edu}
\affiliation{Communication Sciences Institute, Department of
Electrical Engineering, University of Southern California, Los Angeles,
California 90089 USA}

\begin{abstract}
We show how to protect a stream of quantum information from decoherence
induced by a noisy quantum communication channel. We exploit preshared
entanglement and a convolutional coding structure to develop a theory of
entanglement-assisted quantum convolutional coding. Our construction produces
a Calderbank-Shor-Steane (CSS) entanglement-assisted quantum convolutional
code from two arbitrary classical binary convolutional codes. The rate and
error-correcting properties of the classical convolutional codes directly
determine the corresponding properties of the resulting entanglement-assisted
quantum convolutional code. We explain how to encode our
CSS\ entanglement-assisted quantum convolutional codes starting from a stream
of information qubits, ancilla qubits, and shared entangled bits.
\end{abstract}

\keywords{quantum convolutional codes, entanglement-assisted quantum convolutional codes, quantum
information theory, entanglement-assisted quantum codes}

\maketitle

\section{Introduction}

Quantum error correction theory
\cite{PhysRevA.52.R2493,PhysRevLett.77.793,PhysRevA.54.1098,thesis97gottesman,PhysRevLett.78.405,ieee1998calderbank}%
\ stands as the pivotal theoretical tool that will make reliable quantum
computing and quantum communication possible. Any future quantum information
processing device will operate faithfully only if it employs an error
correction scheme. This scheme can be an active scheme
\cite{thesis97gottesman},\ a passive scheme
\cite{PhysRevLett.79.3306,mpl1997zanardi,PhysRevLett.81.2594},\ or a
combination of both techniques
\cite{kribs:180501,qic2006kribs,poulin:230504,isit2007brun,arxiv2007brun}.

Mermin proclaims it a \textquotedblleft miracle\textquotedblright\ that
quantum error correction is even possible \cite{book2007mermin}.\ Various
obstacles such as the no-cloning theorem \cite{nat1982}, measurement
destroying a quantum state, and continuous quantum errors seem to pose an
insurmountable barrier to a protocol for quantum error correction. Despite
these obstacles, Shor demonstrated the first quantum error-correcting code
that reduces the negative effects of decoherence on a quantum bit
\cite{PhysRevA.52.R2493}. Shor's code overcame all of the above difficulties
and established the basic principles for constructing a general theory of
quantum error correction
\cite{thesis97gottesman,PhysRevLett.78.405,ieee1998calderbank}.

Gottesman formalized the theory of quantum block coding by establishing the
stabilizer formalism \cite{thesis97gottesman}. The stabilizer formalism allows
one to import self-orthogonal classical block codes for use in quantum error
correction \cite{ieee1998calderbank}. This technique has the benefit of
exploiting the large body of research on classical coding theory
\cite{book1983code}\ for use in quantum error correction, but the
self-orthogonality constraint limits the classical block codes that we can import.

Bowen was the first to extend the stabilizer formalism by providing an example
of a code that exploits entanglement shared between a sender and a receiver
\cite{PhysRevA.66.052313}. The underlying assumption of Bowen's code is that
the sender and receiver share a set of noiseless ebits (entangled
qubits)\ before quantum communication begins. Many quantum protocols such as
teleportation \cite{PhysRevLett.70.1895}\ and superdense coding
\cite{PhysRevLett.69.2881}\ are \textquotedblleft
entanglement-assisted\textquotedblright\ protocols because they assume that
noiseless ebits are available.

Brun, Devetak, and Hsieh generalized Bowen's example by constructing a theory
of stabilizer codes that employs ancilla qubits and shared ebits for encoding
a quantum error-correcting code \cite{arx2006brun,science2006brun}. The
so-called entanglement-assisted stabilizer formalism subsumes the stabilizer
formalism as the theory of active quantum error correction.

The major benefit of the entanglement-assisted stabilizer formalism\ is that
we can construct an entanglement-assisted quantum code from two arbitrary
classical binary block codes or from an arbitrary classical quaternary block
code. The rates and error-correcting properties of the classical codes
translate to the resulting quantum codes. The entanglement-assisted stabilizer
formalism may be able to reduce the problem of finding high-performance
quantum codes approaching the quantum capacity
\cite{PhysRevA.55.1613,capacity2002shor,ieee2005dev,PhysRevLett.83.3081,ieee2002bennett}%
\ to the problem of finding good classical linear codes approaching the
classical capacity \cite{book1991cover}.

Another extension of the theory of quantum error correction protects a
potentially-infinite stream of quantum information against the corruption
induced by a noisy quantum communication channel
\cite{PhysRevLett.91.177902,arxiv2004olliv,isit2006grassl,ieee2006grassl,ieee2007grassl,isit2005forney,ieee2007forney}%
. These quantum convolutional codes possess several advantages over quantum
block codes. A quantum convolutional code typically has lower encoding and
decoding complexity and superior code rate when compared to a block code that
protects the same number of information qubits \cite{ieee2007forney}.

Forney et al. have determined a method for importing an arbitrary classical
self-orthogonal quaternary code for use as a quantum convolutional code
\cite{isit2005forney,ieee2007forney}.\ The technique is similar to that for
importing a classical block code as a quantum block code
\cite{ieee1998calderbank}. One limitation of this technique is that the
self-orthogonality constraint is more restrictive in the convolutional
setting. Each generator for the quantum convolutional code must commute not
only with the other generators, but it must commute also with any arbitrary
shift of itself and any arbitrary shift of the other generators. Forney et al.
performed specialized searches to determine classical quaternary codes that
satisfy the restrictive self-orthogonality constraint \cite{ieee2007forney}.

In this paper, we develop a theory of entanglement-assisted quantum
convolutional coding for a broad class of codes. Our major result is that we
can produce an entanglement-assisted quantum convolutional code from two
\textit{arbitrary} classical binary convolutional codes. The resulting quantum
convolutional codes admit a Calderbank-Shor-Steane (CSS)\ structure
\cite{PhysRevA.54.1098,PhysRevLett.77.793,book2000mikeandike}. The rates and
error-correcting properties of the two binary classical convolutional codes
directly determine the corresponding properties of the entanglement-assisted
quantum convolutional code.

Our techniques for encoding and decoding are also an expansion of previous
techniques from quantum convolutional coding theory. Previous techniques for
encoding and decoding include finite-depth operations only. A finite-depth
operation propagates errors to a finite number of neighboring qubits in the
qubit stream. We introduce an infinite-depth operation to the set of
shift-invariant Clifford operations and explain it in detail in
Section~\ref{sec:infinite-depth-ops}. We must be delicate when using
infinite-depth operations because they can propagate errors to an infinite
number of neighboring qubits in the qubit stream. We explain our assumptions
in detail in Section~\ref{sec:eaqcc-iefd}\ for including infinite-depth
operations in our entanglement-assisted quantum convolutional codes. An
infinite-depth operation gives more flexibility when designing encoding
circuits---similar to the way in which an infinite-impulse response filter
gives more flexibility in the design of classical convolutional circuits. It
also is the key operation enabling us to import arbitrary classical
convolutional codes for entanglement-assisted quantum coding.

Our CSS\ entanglement-assisted quantum convolutional codes divide into two
classes based on certain properties of the classical codes from which we
produce them. These properties of the classical codes determine the structure
of the encoding and decoding circuit for the code, and the structure of the
encoding and decoding circuit in turn determines the class of the
entanglement-assisted quantum convolutional code.

\begin{enumerate}
\item Codes in the first class admit both a finite-depth encoding and decoding circuit.

\item Codes in the second class have an encoding circuit that employs both
finite-depth and infinite-depth operations. Their decoding circuits have
finite-depth operations only.
\end{enumerate}

We structure our work as follows. Section~\ref{sec:rev-conv-stab} reviews the
stabilizer formalism for quantum block codes, entanglement-assisted quantum
codes, and convolutional stabilizer codes. We review the important isomorphism
that allows us to work with matrices of binary polynomials rather than
infinite tensor products of Pauli matrices. Section~\ref{sec:finite-depth-ops}%
\ reviews finite-depth Clifford operations for use in encoding and decoding
\cite{isit2006grassl,ieee2006grassl,ieee2007grassl}. We outline the operation
of an entanglement-assisted quantum convolutional code and present our main
theorem in Section~\ref{sec:eaqcc}. This theorem shows how to produce a CSS
entanglement-assisted quantum convolutional code from two arbitrary classical
binary convolutional codes. The theorem gives the rate and error-correcting
properties of a CSS\ entanglement-assisted quantum convolutional code as a
function of the parameters of the classical convolutional codes.
Section~\ref{sec:eaqcc-fefd} completes the proof of the theorem for our first
class of entanglement-assisted quantum convolutional codes. In
Section~\ref{sec:infinite-depth-ops}, we introduce an infinite-depth encoding
operation to the set of shift-invariant Clifford operations and discuss its
effect on both the stabilizer and the logical operators for the information
qubits. Section~\ref{sec:eaqcc-iefd} completes the proof of our theorem for
the second class of entanglement-assisted quantum convolutional codes. We
discuss the implications of the assumptions for the different classes of
entanglement-assisted quantum convolutional codes while developing the
constructions. Our hope is that our theory will produce high-performance
quantum convolutional codes by importing high-performance classical
convolutional codes.

\section{Review of the Stabilizer Formalism}

\label{sec:rev-conv-stab}The stabilizer formalism is a mathematical framework
for quantum error correction \cite{PhysRevA.54.1862,thesis97gottesman}. This
framework has many similarities with classical coding theory, and it is even
possible to import a classical code for use in quantum error correction by
employing the CSS\ construction
\cite{PhysRevA.54.1098,PhysRevLett.77.793,book2000mikeandike}. We briefly
review the stabilizer theory for quantum block codes, entanglement-assisted
quantum block codes, and quantum convolutional codes (see Refs.
\cite{ieee2007forney,arx2007wilde}\ for a more detailed review).

\subsection{Stabilizer Formalism for Quantum Block Codes}

\label{sec:stabilizer-review}The following four matrices%
\[
I\equiv%
\begin{bmatrix}
1 & 0\\
0 & 1
\end{bmatrix}
,\ X\equiv%
\begin{bmatrix}
0 & 1\\
1 & 0
\end{bmatrix}
,\ Y\equiv%
\begin{bmatrix}
0 & -i\\
i & 0
\end{bmatrix}
,\ Z\equiv%
\begin{bmatrix}
1 & 0\\
0 & -1
\end{bmatrix}
,
\]
in the Pauli group $\Pi=\left\{  I,X,Y,Z\right\}  $\ are the most important in
formulating a quantum error-correcting code. Two crucial properties of these
matrices are useful: each matrix in $\Pi$ has eigenvalues equal to $+1$ or
$-1$, and any two matrices in $\Pi$ either commute or anticommute. Matrices in
$\Pi$ act on a two-dimensional complex vector, or equivalently, a single qubit.

In general, a quantum error-correcting code uses $n$ physical qubits to
protect a smaller set of information qubits against decoherence or quantum
noise. An $n$-qubit quantum error-correcting code employs elements of the
Pauli group $\Pi^{n}$. The Pauli group $\Pi^{n}$ consists of $n$-fold tensor
products of Pauli matrices:%
\begin{equation}
\Pi^{n}=\left\{
\begin{array}
[c]{c}%
e^{i\phi}A_{1}\otimes\cdots\otimes A_{n}:\forall j\in\left\{  1,\ldots
,n\right\}  ,\\
A_{j}\in\Pi,\ \ \phi\in\left\{  0,\pi/2,\pi,3\pi/2\right\}
\end{array}
\right\}  .
\end{equation}
We liberally omit the tensor product symbol in what follows so that
$A_{1}\cdots A_{n}\equiv A_{1}\otimes\cdots\otimes A_{n}$. The above two
crucial properties for the single-qubit Pauli group $\Pi$ still hold for the
Pauli group $\Pi^{n}$ (up to an irrelevant phase for the eigenvalue property).
Matrices in $\Pi^{n}$ act on a $2^{n}$-dimensional complex vector, or
equivalently, an $n$-qubit quantum register.

We can phrase the theory of quantum error correction in purely mathematical
terms using elements of $\Pi^{n}$. Consider a matrix $g_{1}\in\Pi^{n}$ that is
not equal to $\pm I$. Matrix $g_{1}$ then has two eigenspaces each of size
$2^{n-1}$. We can identify one eigenspace with the eigenvalue $+1$ and the
other eigenspace with eigenvalue $-1$. Consider a matrix $g_{2}\in\Pi^{n}$
different from $g_{1}$ that commutes with $g_{1}$. Matrix $g_{2}$ also has two
eigenspaces each of size $2^{n-1}$ and identified similarly by its eigenvalues
$\pm1$. Both $g_{1}$ and $g_{2}$ have simultaneous eigenspaces because they
commute. These matrices together have four different eigenspaces, each of size
$2^{n-2}$ and identified by the eigenvalues $\pm1,\pm1$ of $g_{1}$ and $g_{2}$
respectively. We can continue this process of adding more commuting and
independent matrices to a set $\mathcal{S}$. The matrices in $\mathcal{S}$ are
independent in the sense that no matrix in $\mathcal{S}$ is a product of two
or more other matrices in $\mathcal{S}$. Adding more matrices from $\Pi^{n}%
$\ to $\mathcal{S}$ continues to divide the eigenspaces of matrices in
$\mathcal{S}$. In general, suppose $\mathcal{S}$ consists of $n-k$ independent
and commuting matrices $g_{1}$, \ldots, $g_{n-k}\in\Pi^{n}$. These $n-k$
matrices then have $2^{n-k}$ different eigenspaces each of size $2^{k}$ and
identified by the eigenvalues $\pm1$, \ldots, $\pm1$\ of $g_{1}$, \ldots,
$g_{n-k}$ respectively. Consider that the Hilbert space of $k$ qubits has size
$2^{k}$. A dimension count immediately suggests that we can encode $k$ qubits
into one of the eigenspaces of $\mathcal{S}$. We typically encode these $k$
qubits into the simultaneous $+1$-eigenspace of $g_{1}$, \ldots, $g_{n-k}$.
This eigenspace is the \textit{codespace}. An $\left[  n,k\right]  $ quantum
error-correcting code encodes $k$ information qubits into the simultaneous
$+1$-eigenspace of $n-k$ matrices $g_{1}$, \ldots, $g_{n-k}\in\Pi^{n}$. The
rate of an $\left[  n,k\right]  $ code is the ratio of information qubits to
physical qubits: $k/n$.%
\begin{figure}
[ptb]
\begin{center}
\includegraphics[
natheight=3.386600in,
natwidth=8.973300in,
height=1.292in,
width=3.4039in
]%
{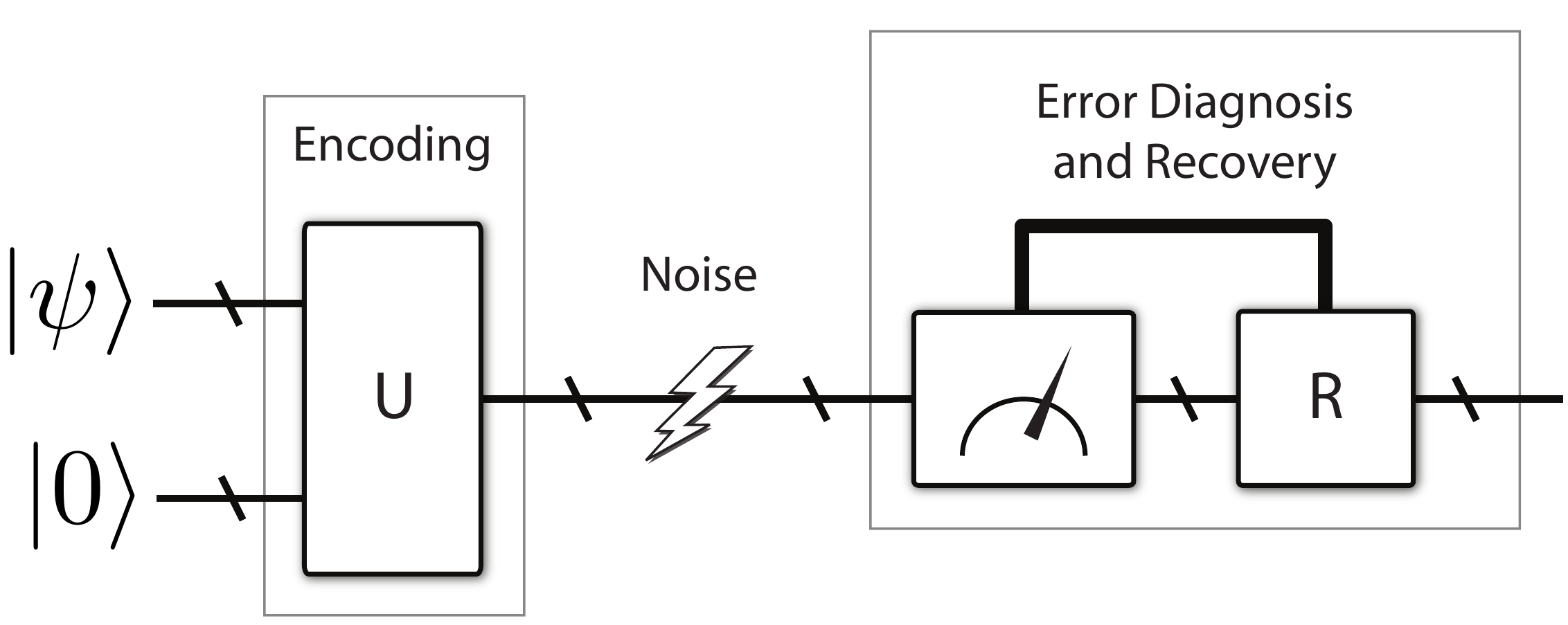}%
\caption{The operation of a general stabilizer code. Thin lines denote quantum
information and thick lines denote classical information. Slanted bars denote
multiple qubits. A sender employs a unitary encoding operation $U$ to encode a
set of information qubits in the state $\left\vert \psi\right\rangle $ with
the help of ancilla qubits each in the state $\left\vert 0\right\rangle $. The
sender transmits the encoded qubits over the noisy quantum communication
channel. The receiver performs quantum measurements to diagnose which error
has occurred. He finally performs a recovery operation $R$ to reverse the
error from the channel.}%
\label{fig:stabilizer}%
\end{center}
\end{figure}

The operation of an $\left[  n,k\right]  $ quantum error-correcting code
consists of four steps. Figure~\ref{fig:stabilizer}\ highlights these steps.
First, a unitary operation $U$ encodes $k$ qubits and $n-k$ ancilla qubits
into the simultaneous $+1$-eigenspace of the matrices $g_{1}$, \ldots,
$g_{n-k}$. The sender transmits the $n$ encoded qubits by using the noisy
quantum communication channel $n$ times. The receiver performs quantum
measurements of the $n-k$ matrices $g_{1}$, \ldots, $g_{n-k}$. These
measurements learn only about errors that may occur and do not disturb the
encoded quantum information. Each measurement gives a bit result equal to $+1$
or $-1$, and the result of all the measurements is to project the $n$-qubit
quantum register into one of the $2^{n-k}$ different eigenspaces of $g_{1}$,
\ldots, $g_{n-k}$. Suppose that no error occurs. Then the measurements project
the $n$ qubits into the simultaneous $+1$-eigenspace and return a bit vector
consisting of $n-k$ ones. Now suppose that a quantum error in an error set
$\mathcal{E}$ occurs. The error takes the encoded quantum state out of the
codespace and into one of the other $2^{n-k}-1$ orthogonal eigenspaces. The
measurements can detect that an error has occurred because the result of the
measurements is a bit vector differing from the all ones vector. The receiver
may be able to identify uniquely which error has occurred if it satisfies the
following quantum error correction conditions:%
\[
\forall E_{a},E_{b}\in\mathcal{E\ \ }\exists\ g_{i}\in\mathcal{S}:\left\{
g_{i},E_{a}^{\dag}E_{b}\right\}  =0\text{ \ or \ }E_{a}^{\dag}E_{b}%
\in\mathcal{S}.
\]
The first condition states that errors are detectable if they anticommute with
one of the generators in $\mathcal{S}$, and the second condition states that
errors have no effect on the encoded state if they are in $\mathcal{S}$. If
the receiver can identify which error occurs, he can then apply unitary
operation $R$\ that is the inverse of the error. He finally performs a
decoding unitary that decodes the $k$ information qubits.

We comment briefly on the encoding operation $U$. The encoding operation $U$
is a special type of unitary matrix called a Clifford operation. A Clifford
operation $U$\ is one that preserves elements of the Pauli group under
conjugation:\ $A\in\Pi^{n}\Rightarrow UAU^{\dag}\in\Pi^{n}$. The CNOT\ gate,
the Hadamard gate $H$, and the phase gate $P$ suffice to implement any unitary
matrix in the Clifford group \cite{thesis97gottesman}. A quantum code with the
CSS\ structure needs only the CNOT\ and Hadamard gates for encoding and
decoding. The matrix for the CNOT\ gate acting on two qubits is%
\begin{equation}
\text{CNOT}=%
\begin{bmatrix}
1 & 0 & 0 & 0\\
0 & 1 & 0 & 0\\
0 & 0 & 0 & 1\\
0 & 0 & 1 & 0
\end{bmatrix}
,
\end{equation}
the matrix for the Hadamard gate $H$ acting on a single qubit is%
\begin{equation}
H=\frac{1}{\sqrt{2}}%
\begin{bmatrix}
1 & 1\\
1 & -1
\end{bmatrix}
,
\end{equation}
and the matrix for the phase gate $P$ acting on a single qubit is%
\begin{equation}
P=%
\begin{bmatrix}
1 & 0\\
0 & i
\end{bmatrix}
.
\end{equation}
For the CNOT\ gate, the first qubit is the \textquotedblleft
control\textquotedblright\ qubit and the second qubit is the \textquotedblleft
target\textquotedblright\ qubit. The standard basis for elements of the
two-qubit Pauli group $\Pi^{2}$ is as follows%
\begin{equation}%
\begin{array}
[c]{cc}%
Z & I\\
I & Z\\
X & I\\
I & X
\end{array}
,
\end{equation}
because any element of $\Pi^{2}$ is a product of the above four matrices up to
an irrelevant phase. The standard basis for $\Pi^{1}$ is $X$ and $Z$ for the
same reasons. The CNOT\ gate transforms the standard basis of $\Pi^{2}$ under
conjugation as follows%
\begin{equation}%
\begin{array}
[c]{cc}%
Z & I\\
I & Z\\
X & I\\
I & X
\end{array}
\rightarrow%
\begin{array}
[c]{cc}%
Z & I\\
Z & Z\\
X & X\\
I & X
\end{array}
,
\end{equation}
where the first qubit is the control and the second qubit is the target. The
Hadamard gate $H$ transforms the standard basis of $\Pi^{1}$ under conjugation
as follows:%
\begin{equation}%
\begin{array}
[c]{c}%
Z\\
X
\end{array}
\rightarrow%
\begin{array}
[c]{c}%
X\\
Z
\end{array}
,
\end{equation}
and the phase gate $P$ transforms the standard basis as follows:%
\begin{equation}%
\begin{array}
[c]{c}%
Z\\
X
\end{array}
\rightarrow%
\begin{array}
[c]{c}%
Z\\
Y
\end{array}
.
\end{equation}
Appendix of Ref.~\cite{arx2007wilde} details an algorithm that determines an
encoding circuit consisting of CNOT, $H$, and $P$ gates for any stabilizer
code or any entanglement-assisted stabilizer code (we review
entanglement-assisted codes in the next section).

Another aspect of the theory of quantum error correction is later useful for
our purposes in quantum convolutional coding. This aspect concerns the
information qubits and the operators that change them. Consider that the
initial unencoded state of a quantum error-correcting code is a simultaneous
+1-eigenstate of the matrices $Z_{k+1},\ldots,Z_{n}$ where $Z_{i}$ has a $Z$
matrix operating on qubit $i$ and the identity $I$ on all other qubits.
Therefore, the matrices $Z_{k+1},\ldots,Z_{n}$ constitute a stabilizer for the
unencoded state. The initial unencoded logical operators for the information
qubits are $Z_{1},X_{1},\ldots,Z_{k},X_{k}$. The encoding operation $U$
rotates the unencoded stabilizer matrices $Z_{k+1},\ldots,Z_{n}$ and the
unencoded logical operators $Z_{1},X_{1},\ldots,Z_{k},X_{k}$ to the encoded
stabilizer $\bar{Z}_{k+1}$, \ldots, $\bar{Z}_{n}$ and the encoded logical
operators $\bar{Z}_{1},\bar{X}_{1},\ldots,\bar{Z}_{k},\bar{X}_{k}$
respectively. The encoded matrices $\bar{Z}_{k+1},\ldots,\bar{Z}_{n}$ are
respectively equivalent to the matrices $g_{1}$, \ldots, $g_{n-k}$ in the
above discussion. The encoded operators obey the same commutation relations as
their unencoded counterparts. We would violate the uncertainty principle if
this invariance does not hold. Therefore, each of the encoded logical
operators commutes with elements of the stabilizer $\mathcal{S}$. Let
$A$\ denote an arbitrary logical operator from the above set and let $\bar
{Z}_{i}$ denote an arbitrary element of the stabilizer $\mathcal{S}$. The
operator $A\bar{Z}_{i}$ (or equivalently $\bar{Z}_{i}A$) is an equivalent
logical operator because $A\bar{Z}_{i}$ and $A$ have the same effect on an
encoded state $\left\vert \bar{\psi}\right\rangle $:%
\begin{equation}
\bar{Z}_{i}A\left\vert \bar{\psi}\right\rangle =A\bar{Z}_{i}\left\vert
\bar{\psi}\right\rangle =A\left\vert \bar{\psi}\right\rangle .
\end{equation}
We make extensive use of the above fact in our work.

The logical operators also provide a useful way to characterize the
information qubits. Gottesman showed that the logical operators for the
information qubits provide a straightforward way to characterize the
information qubits as they progress through a quantum circuit
\cite{thesis97gottesman}. As an example of this technique, he develops quantum
teleportation in the stabilizer formalism. The logical operators at the
beginning of the protocol are $X_{1}$ and $Z_{1}$ and become $X_{3}$ and
$Z_{3}$ at the end of the protocol. The quantum information in qubit one
teleports to qubit three because the logical operators act on only qubit three
at the end of the protocol. We use the same idea throughout this paper to
determine if our decoding circuits have truly decoded the information qubits.

It is possible to produce a stabilizer code from two classical binary block
codes by employing the CSS\ construction. The elements of the stabilizer group
of a CSS\ stabilizer code commute if and only if the codewords of one
classical code are orthogonal to the codewords of the other classical code
with respect to the binary inner product. The codes that we can import must
satisfy this condition because the commuting condition is essential in
formulating a quantum code. The entanglement-assisted stabilizer formalism
finds a clever way around this restriction by exploiting entanglement shared
between sender and receiver.

\subsection{Entanglement-Assisted Stabilizer Formalism for Quantum Block
Codes}

The entanglement-assisted stabilizer formalism is a significant extension of
the standard stabilizer formalism that incorporates shared entanglement as a
resource for encoding \cite{arx2006brun,science2006brun}. Several references
provide a review of this technique and generalizations of the basic theory to
block \cite{luo:010303} and convolutional \cite{arx2007wilde}\ entanglement
distillation protocols, continuous-variable codes \cite{pra2007wildeEA}, and
entanglement-assisted operator codes for discrete-variable
\cite{isit2007brun,arxiv2007brun}\ and continuous-variable systems
\cite{prep2007wildeEAOQEC}.

An entanglement-assisted code employs ebits or Bell states in addition to
ancilla qubits for quantum redundancy. We express the state $\left\vert
\Phi^{+}\right\rangle $\ of an ebit shared between a sender Alice and a
receiver Bob as follows:%
\begin{equation}
\left\vert \Phi^{+}\right\rangle \equiv\frac{\left\vert 00\right\rangle
^{AB}+\left\vert 11\right\rangle ^{AB}}{\sqrt{2}}.
\end{equation}

The advantage of the entanglement-assisted stabilizer formalism is that it
allows us to exploit the error-correcting properties of an arbitrary set of
Pauli matrices. They do not necessarily have to form a commuting set. In
particular, this construction allows us to produce a quantum block code from
two arbitrary classical binary block codes by employing the CSS\ construction.
Two high-performance classical block codes lead to a high-performance
entanglement-assisted quantum code. The entanglement-assisted method allows us
to exploit the full error-correcting power of classical coding theory.

An $\left[  n,k;c\right]  $\ entanglement-assisted code uses $c$ ebits and
$n-k-c$ ancilla qubits to encode $k$ information qubits. It operates as
follows. The sender and receiver share $c$ ebits before quantum communication
begins. The sender encodes her $k$ information qubits with the help of $n-k-c$
ancilla qubits and her half of the $c$ ebits. She performs an encoding
operation $U$\ on her $n$ qubits and sends them over a noisy quantum
communication channel. The noisy channel affects these $n$ qubits only and
does not affect the receiver's half of the $c$ ebits. The receiver combines
his half of the $c$ ebits with those he receives from the noisy quantum
channel. He performs measurements on all $n+c$ qubits to diagnose an error
that may occur on the $n$ qubits. He learns which error occurs and performs a
recovery operation that eliminates the error. Figure~\ref{fig:EA-code}%
\ illustrates the operation of an entanglement-assisted stabilizer code.%
\begin{figure}
[ptb]
\begin{center}
\includegraphics[
natheight=4.266100in,
natwidth=10.639800in,
height=1.3785in,
width=3.4238in
]%
{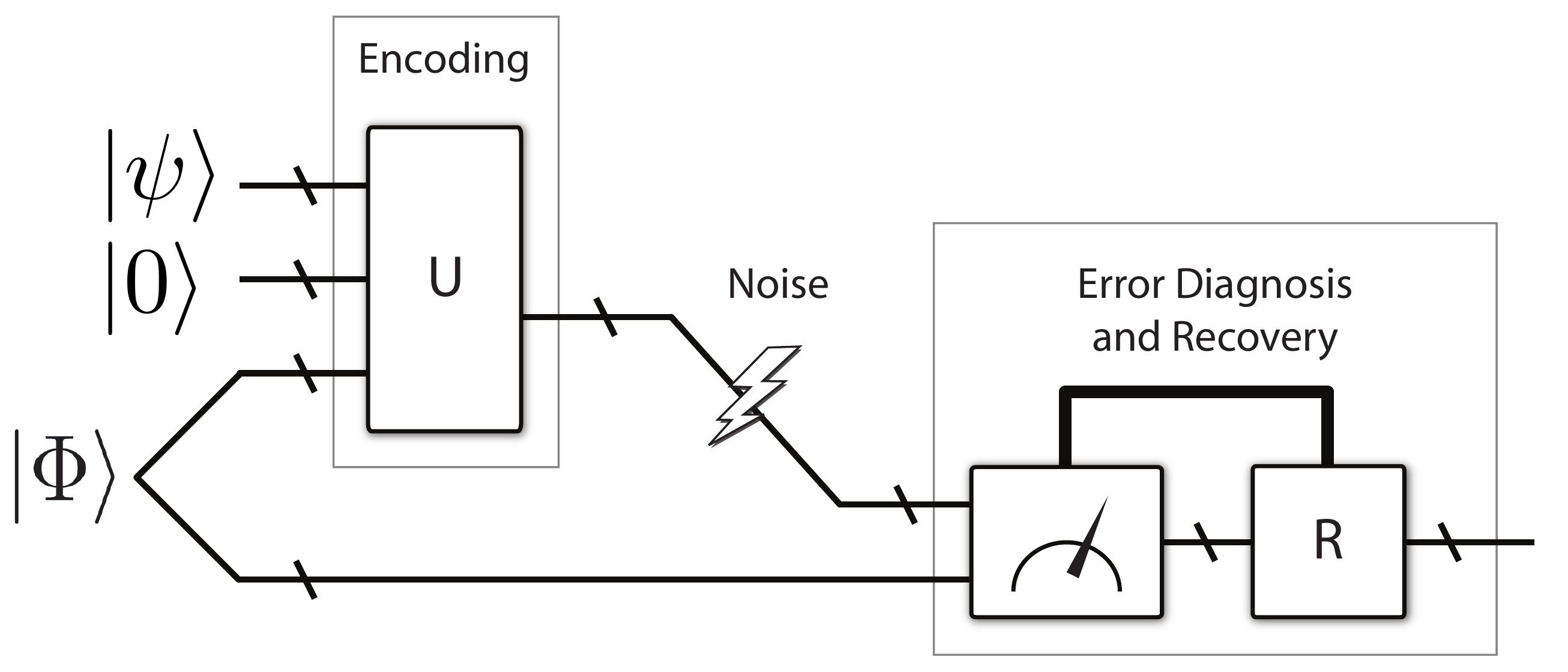}%
\caption{The operation of a general entanglement-assisted stabilizer code. The
sender encodes a set of information qubits with the help of ancilla qubits and
her half of a set of shared ebits. She sends her encoded qubits over a noisy
quantum communication channel. The entanglement-assisted communication
paradigm assumes that the receiver's half of the shared ebits remain noiseless
throughout this process. The receiver combines the noisy qubits with his half
of the shared ebits. He performs measurements on all of the qubits to diagnose
which error occurs and reverses the effect of this error by performing a
recovery operation.}%
\label{fig:EA-code}%
\end{center}
\end{figure}

Suppose we have an arbitrary set of Pauli matrices in $\Pi^{n}$ whose
error-correcting properties we would like to exploit. We do not necessarily
know beforehand how many ebits we require for the Pauli matrices to form a
commuting set, and we would like a method to determine the minimum number of
ebits. Several methods exist
\cite{arx2006brun,science2006brun,isit2007brun,arxiv2007brun,arx2007wilde},
but the algorithm in the Appendix of Ref.~\cite{arx2007wilde} determines the
minimum number of ebits required for the code, the encoding and decoding
circuit for the code, and the measurements the receiver performs to diagnose
errors. It essentially \textquotedblleft kills three birds with one
stone.\textquotedblright\ The algorithms we employ in this work are similar to
the algorithm in Ref.~\cite{arx2007wilde}, but they are quite a bit more
complicated because of the convolutional nature of our codes.

\subsubsection{Rate of an Entanglement-Assisted Quantum Code}

We can interpret the rate of an entanglement-assisted quantum convolutional
code in three different ways \cite{science2006brun,arx2006brun,arx2007wilde}.
Suppose that an entanglement-assisted quantum code encodes $k$ qubits in $n$
qubits with the help of $c$ ebits.

\begin{enumerate}
\item The \textquotedblleft entanglement-assisted\textquotedblright\ rate
assumes that entanglement shared between sender and receiver is free. Bennett
et al. make this assumption when deriving the entanglement-assisted capacity
of a quantum channel for sending quantum information
\cite{PhysRevLett.83.3081,ieee2002bennett}. The entanglement-assisted rate for
the above example is $k/n$.

\item The \textquotedblleft trade-off' rate assumes that entanglement is not
free and a rate pair determines performance. The first number in the pair is
the number of noiseless qubits generated per channel use, and the second
number in the pair is the number of ebits consumed per channel use. The rate
pair for the above example is $\left(  k/n,c/n\right)  $. Quantum information
theorists have computed asymptotic trade-off curves that bound the rate region
in which achievable rate pairs lie \cite{arx2005dev}. Brun et al.'s
construction for an entanglement-assisted quantum block code minimizes the
number $c$ of ebits given a fixed number $k$ and $n$ of information qubits and
encoded qubits respectively \cite{arx2006brun,science2006brun}.

\item The \textquotedblleft catalytic rate\textquotedblright\ assumes that
bits of entanglement are built up at the expense of transmitted qubits
\cite{arx2006brun,science2006brun}. A noiseless quantum channel or the encoded
use of noisy quantum channel are two different ways to build up entanglement
between a sender and receiver. The catalytic rate for the above code is
$\left(  k-c\right)  /n$.
\end{enumerate}

Which interpretation is most reasonable depends on the context in which we use
the code. In any case, the parameters $n$, $k$, and $c$ ultimately govern
performance, regardless of which definition of the rate we use to interpret
that performance.

\subsection{Stabilizer Formalism for Quantum Convolutional Codes}

We review the theory of convolutional stabilizer codes by considering a set of
Pauli matrices that stabilize a stream of encoded qubits. We follow with the
most important part of this review---the isomorphism from the set of Pauli
sequences to the module over the ring of binary polynomials
\cite{arxiv2004olliv,isit2006grassl,ieee2007forney}. We name it the
Pauli-to-binary (P2B)\ isomorphism. The P2B\ isomorphism is important because
it is easier to perform manipulations with vectors of binary polynomials than
with Pauli sequences.

We review the notation and basic definitions first. A Pauli sequence
$\mathbf{A}$\ is a countably infinite tensor product of Pauli matrices $A_{i}%
$:%
\[
\mathbf{A}=%
{\displaystyle\bigotimes\limits_{i=0}^{\infty}}
\ A_{i}.
\]
The weight of a Pauli sequence is the number of Pauli matrices in the
countably-infinite tensor product that are not equal to the identity matrix. A
Pauli sequence has finite support if its weight is finite. Let $\Pi
^{\mathbb{Z}^{+}}$ denote the set of all Pauli sequences and let
$F(\Pi^{\mathbb{Z}^{+}})$ denote the set of Pauli sequences with finite support.

\begin{definition}
\label{def:qcc}A rate-$k/n$ quantum convolutional code consists of a basic set
$\mathcal{G}_{0}$\ of $n-k$ generators and all of their $n$-qubit shifts
\cite{PhysRevLett.91.177902,arxiv2004olliv,ieee2007forney}. The generators in
$\mathcal{G}_{0}$ commute with each other and with all of their $n$-qubit
shifts. The parameters $k$ and $n$ satisfy $0\leq k\leq n$ and the basic set
$\mathcal{G}_{0}$ is as follows:%
\[
\mathcal{G}_{0}=\left\{  \mathbf{G}_{i}\in F(\Pi^{\mathbb{Z}^{+}}):1\leq i\leq
n-k\right\}  .
\]
A frame of the code consists of $n$ qubits.
\end{definition}

The operation of a rate-$k/n$ quantum convolutional code begins with the
sender encoding a stream of information qubits. Figure 3 of Ref.
\cite{arx2007wilde} illustrates the basic operation of a quantum convolutional
code. The sender encodes $n-k$ ancilla qubits and $k$ information qubits per
frame \cite{ieee2007grassl,isit2006grassl} and transmits the encoded qubits
over a noisy quantum channel. The above stabilizer $\mathcal{G}_{0}$\ and all
of its shifts act like a parity check matrix for the quantum convolutional
code. The receiver measures the generators in the stabilizer to determine an
error syndrome. It is important that the generators in $\mathcal{G}_{0}$ have
finite weight so that the receiver can perform the measurements and produce an
error syndrome. It is also important that the generators have a block-band
form so that the receiver can perform the measurements online as the noisy
encoded qubits arrive. The receiver processes the error syndrome with a method
such as the Viterbi algorithm \cite{itit1967viterbi}\ or any other decoding
algorithm \cite{book1999conv} to determine the most likely error for each
frame of quantum data. The receiver performs a unitary that reverses the
errors. He finally processes the encoded qubits with a decoding circuit to
recover the original stream of information qubits.

\subsubsection{The P2B Isomorphism}

We now review the P2B isomorphism from the set of phase-free Pauli sequences
to the module over the ring of binary polynomials
\cite{arxiv2004olliv,ieee2007forney,arx2007wilde}. We illustrate it by example
(see Ref. \cite{arx2007wilde}\ for a more rigorous development.)

Suppose the following two basic generators specify a rate-1/3 quantum
convolutional code \cite{isit2005forney,ieee2007forney}:%
\begin{equation}
\cdots\left\vert
\begin{array}
[c]{c}%
III\\
III
\end{array}
\right\vert
\begin{array}
[c]{c}%
XXX\\
ZZZ
\end{array}
\left\vert
\begin{array}
[c]{c}%
XZY\\
ZYX
\end{array}
\right\vert \left.
\begin{array}
[c]{c}%
III\\
III
\end{array}
\right\vert \cdots\label{eq:pauli-gens}%
\end{equation}
The vertical bars indicate that we shift by multiples of three to obtain the
other generators in the quantum convolutional code. Observe that the above two
generators commute with all of their three-qubit shifts.

The P2B isomorphism is a mapping from the above stabilizer generators to a
matrix whose entries are binary polynomials. The left side of the matrix is
the \textquotedblleft Z\textquotedblright\ matrix and the right side of the
matrix is the \textquotedblleft X\textquotedblright\ matrix. We consider the
entries in the first frame of the stabilizer generators in
(\ref{eq:pauli-gens})\ for now and map these entries to a matrix with binary
entries. The first frame of the first generator in (\ref{eq:pauli-gens}) has
\textquotedblleft X\textquotedblright\ entries only and the first frame of the
second generator in (\ref{eq:pauli-gens}) has \textquotedblleft
Z\textquotedblright\ entries only. The binary matrix corresponding to the
entries in the first frame is as follows:%
\[
H_{0}=\left[  \left.
\begin{array}
[c]{ccc}%
0 & 0 & 0\\
1 & 1 & 1
\end{array}
\right\vert
\begin{array}
[c]{ccc}%
1 & 1 & 1\\
0 & 0 & 0
\end{array}
\right]  .
\]
The vertical bar now indicates the separation of the \textquotedblleft
Z\textquotedblright\ matrix on the left and the \textquotedblleft
X\textquotedblright\ matrix on the right. A \textquotedblleft
Y\textquotedblright\ entry maps to a \textquotedblleft1\textquotedblright\ in
both the \textquotedblleft Z\textquotedblright\ and \textquotedblleft
X\textquotedblright\ matrix. Let us consider the entries in the second frame
of (\ref{eq:pauli-gens}). They map to the following binary matrix:%
\[
H_{1}=\left[  \left.
\begin{array}
[c]{ccc}%
0 & 1 & 1\\
1 & 1 & 0
\end{array}
\right\vert
\begin{array}
[c]{ccc}%
1 & 0 & 1\\
0 & 1 & 1
\end{array}
\right]  .
\]
We form a matrix of binary polynomials by incorporating the delay transform or
$D$-transform. The following binary polynomial matrix $H\left(  D\right)  $
fully specifies the quantum convolutional code:%
\begin{align*}
H\left(  D\right)   &  =H_{0}+H_{1}\cdot D\\
&  =\left[  \left.
\begin{array}
[c]{ccc}%
0 & D & D\\
1+D & 1+D & 1
\end{array}
\right\vert
\begin{array}
[c]{ccc}%
1+D & 1 & 1+D\\
0 & D & D
\end{array}
\right]  .
\end{align*}

The above description of a quantum convolutional code with a binary polynomial
matrix is powerful because it allows us to perform manipulations with finite
polynomials rather than with countably-infinite sequences of Pauli matrices
(classical convolutional coding theory exploits the same idea
\cite{book1999conv}). The first and second rows of $H\left(  D\right)  $
capture all of the information about the first and second generators
in\ (\ref{eq:pauli-gens}) and all of their three-qubit shifts. We obtain the
$nl$-shift of either of the above generators by multiplying the corresponding
row in $H\left(  D\right)  $\ by $D^{l}$.

\subsubsection{The Shifted Symplectic Product}

The shifted symplectic product $\odot$ provides a way to determine the
commutative properties of any convolutional stabilizer code
\cite{arxiv2004olliv,arx2007wilde} (See Ref.~\cite{arx2007wilde} for a
detailed discussion of the shifted symplectic product with examples). Let
$z_{1}\left(  D\right)  $ and $z_{2}\left(  D\right)  $ denote the first and
second respective rows of the \textquotedblleft Z\textquotedblright\ matrix of
$H\left(  D\right)  $. Let $x_{1}\left(  D\right)  $ and $x_{2}\left(
D\right)  $ be the first and second respective rows of the \textquotedblleft
X\textquotedblright\ matrix of $H\left(  D\right)  $. Let
\begin{align*}
h_{1}\left(  D\right)   &  =\left(  z_{1}\left(  D\right)  |x_{1}\left(
D\right)  \right)  ,\\
h_{2}\left(  D\right)   &  =\left(  z_{2}\left(  D\right)  |x_{2}\left(
D\right)  \right)  ,
\end{align*}
denote the first and second respective rows of $H\left(  D\right)  $. The
vectors $h_{1}\left(  D\right)  $ and $h_{2}\left(  D\right)  $ specify the
first and second respective generators in (\ref{eq:pauli-gens}). We define the
shifted symplectic product of $h_{1}\left(  D\right)  $ and $h_{2}\left(
D\right)  $ as follows:%
\[
\left(  h_{1}\odot h_{2}\right)  \left(  D\right)  =z_{1}\left(
D^{-1}\right)  \cdot x_{2}\left(  D\right)  +x_{1}\left(  D^{-1}\right)  \cdot
z_{2}\left(  D\right)  ,
\]
where $\cdot$ denotes the binary inner product and addition is binary.

The shifted symplectic product $\left(  h_{1}\odot h_{2}\right)  \left(
D\right)  $ vanishes in the above case. The shifted symplectic products
$\left(  h_{1}\odot h_{1}\right)  \left(  D\right)  $ and $\left(  h_{2}\odot
h_{2}\right)  \left(  D\right)  $ also vanish. The shifted symplectic product
between two vectors of binary polynomials vanishes if and only if their
corresponding Pauli sequences commute \cite{arxiv2004olliv,arx2007wilde}. Time
reversal (substituting $D^{-1}$ for $D$) ensures that the shifted symplectic
product checks commutativity for every shift of the two Pauli sequences being
compared. The cases where the shifted symplectic product does not vanish
(where the two Pauli sequences anticommute for one or more shifts) are
important for constructing entanglement-assisted quantum convolutional codes.

\subsubsection{Row and Column Operations}

We can perform row operations on binary polynomial matrices for quantum
convolutional codes. A row operation is merely a \textquotedblleft
mental\textquotedblright\ operation that has no effect on the states in the
codespace or on the error-correcting properties of the code. We have three
types of row operations:

\begin{enumerate}
\item An elementary row operation multiplies a row times an arbitrary binary
polynomial and adds the result to another row. This additive invariance holds
for any code that admits a description within the stabilizer formalism.
Additive codes are invariant under multiplication of the stabilizer generators
in the \textquotedblleft Pauli\ picture\textquotedblright\ or under row
addition in the \textquotedblleft
binary-polynomial\ picture.\textquotedblright

\item Another type of row operation is to multiply a row by an arbitrary power
of $D$. Ollivier and Tillich discuss such row operations as \textquotedblleft
multiplication of a line by $D$\textquotedblright\ and use them to find
encoding operations for their quantum convolutional codes
\cite{arxiv2004olliv}. Grassl and R\"{o}tteler use this type of operation to
find a subcode of a given quantum convolutional code with an equivalent
asymptotic rate and equivalent error-correcting properties
\cite{isit2006grassl}. We use this type of row operation in each of our three
classes of entanglement-assisted quantum convolutional codes.

\item We also employ row operations that multiply a row by an arbitrary
polynomial (not necessarily a power of $D$). We only use these operations when
the receiver performs a measurement to diagnose an error. This type of row
operation occurs when we have generators with infinite weight that we would
like to reduce to finite weight so that the receiver can perform measurements
in an online fashion as qubits arrive from the noisy channel. We use this type
of row operation in our second and third classes of entanglement-assisted
quantum convolutional codes.
\end{enumerate}

A\ row operation does not change the shifted symplectic product when all
generators commute. A row operation \textit{does} change the shifted
symplectic product of a set of generators that do not commute. It is a
convenient tool for constructing our entanglement-assisted quantum
convolutional codes.

We can also perform column operations on binary polynomial matrices for
quantum convolutional codes. Column operations change the error-correcting
properties of the code and are important for realizing a periodic encoding
circuit for the code. We have two types of column operations:

\begin{enumerate}
\item An elementary column operation multiplies one column by an arbitrary
binary polynomial and adds the result to another column. We implement
elementary column operations with gates from the shift-invariant Clifford
group \cite{ieee2007grassl,isit2006grassl}.

\item Another column operation is to multiply column $i$\ in both the
\textquotedblleft X\textquotedblright\ and \textquotedblleft
Z\textquotedblright\ matrix by $D^{l}$ where$\ l\in\mathbb{Z}$. We perform
this operation by delaying or advancing the processing of qubit $i$\ by $l$
frames relative to the original frame.
\end{enumerate}

A column operation implemented on the \textquotedblleft X\textquotedblright%
\ side of the binary polynomial matrix has a corresponding effect on the
\textquotedblleft Z\textquotedblright\ side of the binary polynomial matrix.
This corresponding effect is a manifestation of the Heisenberg uncertainty
principle because commutation relations remain invariant with respect to the
action of quantum gates. The shifted symplectic product is therefore invariant
with respect to column operations from the shift-invariant Clifford group. We
describe possible column operations for implementing encoding circuits in the
next section.

\section{Finite-Depth Clifford Operations}

One of the main advantages of a quantum convolutional code is that its
encoding circuit has a periodic form. We can encode a stream of quantum
information with the same physical routines or devices and therefore reduce
encoding and decoding complexity.

Ollivier and Tillich were the first to discuss encoding circuits for quantum
convolutional codes \cite{PhysRevLett.91.177902,arxiv2004olliv}. They provided
a set of necessary and sufficient conditions to determine when an encoding
circuit is noncatastrophic. A noncatastrophic encoding circuit does not
propagate uncorrected errors infinitely through the decoded information qubit
stream. Classical convolutional coding theory has a well developed theory of
noncatastrophic encoding circuits \cite{book1999conv}.

Grassl and R\"{o}tteler later showed that Ollivier and Tillich's conditions
for a circuit to be noncatastrophic are too restrictive
\cite{isit2006grassl,ieee2006grassl,ieee2007grassl}. They found subcodes of
quantum convolutional codes that admit noncatastrophic encoders. The
noncatastrophic encoders are a sequence of Clifford circuits with finite
depth. They developed a formalism for encapsulating the periodic structure of
an encoding circuit by decomposing the encoding circuit as a set of elementary
column operations. Their decoding circuits are exact inverses of their
encoding circuits because their decoding circuits perform the encoding
operations in reverse order.

\begin{definition}
A \textit{finite-depth operation} transforms every finite-weight\ stabilizer
generator to one with finite weight.
\end{definition}

\label{sec:finite-depth-ops}We review the finite-depth operations in the
shift-invariant Clifford group
\cite{isit2006grassl,ieee2006grassl,ieee2007grassl}. The shift-invariant
Clifford group is an extension of the Clifford group operations mentioned in
Section~\ref{sec:stabilizer-review}. We describe how finite-depth operations
in the shift-invariant Clifford group affect the binary polynomial matrix for
a code. Each of the following operations acts on every frame of a quantum
convolutional code.

\begin{enumerate}
\item The sender performs a CNOT\ from qubit $i$ to qubit $j$ in every frame
where qubit $j$ is in a frame delayed by $k$. The effect on the binary
polynomial matrix is to multiply column $i$ by $D^{k}$ and add the result to
column $j$ in the \textquotedblleft X\textquotedblright\ matrix and to
multiply column$\ j$ by $D^{-k}$ and add the result to column $i$ in the
\textquotedblleft Z\textquotedblright\ matrix.

\item A Hadamard on qubit $i$ swaps column $i$ in the \textquotedblleft
X\textquotedblright\ matrix with column $i$ in the \textquotedblleft
Z\textquotedblright\ matrix.

\item A phase gate on qubit $i$ adds column $i$ from the \textquotedblleft
X\textquotedblright\ matrix to column $i$ in the \textquotedblleft
Z\textquotedblright\ matrix.

\item A controlled-phase gate from qubit $i$ to qubit $j$ in a frame delayed
by $k$ multiplies column $i$ in the \textquotedblleft X\textquotedblright%
\ matrix by $D^{k}$ and adds the result to column $j$ in the \textquotedblleft
Z\textquotedblright\ matrix. It also multiples column $j$ in the
\textquotedblleft X\textquotedblright\ matrix by $D^{-k}$ and adds the result
to column $i$ in the \textquotedblleft Z\textquotedblright\ matrix.

\item A controlled-phase gate from qubit $i$ to qubit $i$ in a frame delayed
by $k$ multiplies column $i$ in the \textquotedblleft X\textquotedblright%
\ matrix by $D^{k}+D^{-k}$ and adds the result to column $i$ in the
\textquotedblleft Z\textquotedblright\ matrix.
\end{enumerate}

We use finite-depth operations extensively in this work, but we employ only
the above Hadamard and CNOT\ gates because our entanglement-assisted quantum
convolutional codes have the CSS\ structure.
Figure~\ref{fig:example-eaqcc-fefd}\ gives an example of an
entanglement-assisted quantum convolutional code that employs several
finite-depth operations. The circuit encodes a stream of information qubits
with the help of ebits shared between sender and receiver.

Multiple CNOT\ gates can realize an elementary column operation as described
at the end of Section~\ref{sec:rev-conv-stab}. Suppose the elementary column
operation multiplies column $i$ in the \textquotedblleft X\textquotedblright%
\ matrix by $f\left(  D\right)  $ and adds the result to column $j$.
Polynomial $f\left(  D\right)  $ is a summation of some finite set $\left\{
l_{1},\ldots,l_{n}\right\}  $ of powers of $D$:%
\[
f\left(  D\right)  =D^{l_{1}}+\cdots+D^{l_{n}}.
\]
We can realize $f\left(  D\right)  $ by performing a CNOT\ gate from qubit $i$
to qubit $j$ in a frame delayed by $l_{i}$ for each $i\in\left\{
1,\ldots,n\right\}  $.

\section{CSS\ Entanglement-Assisted Quantum Convolutional Codes}

\label{sec:eaqcc}%
\begin{figure*}
[ptb]
\begin{center}
\includegraphics[
natheight=8.879900in,
natwidth=19.139999in,
height=2.9573in,
width=6.141in
]
{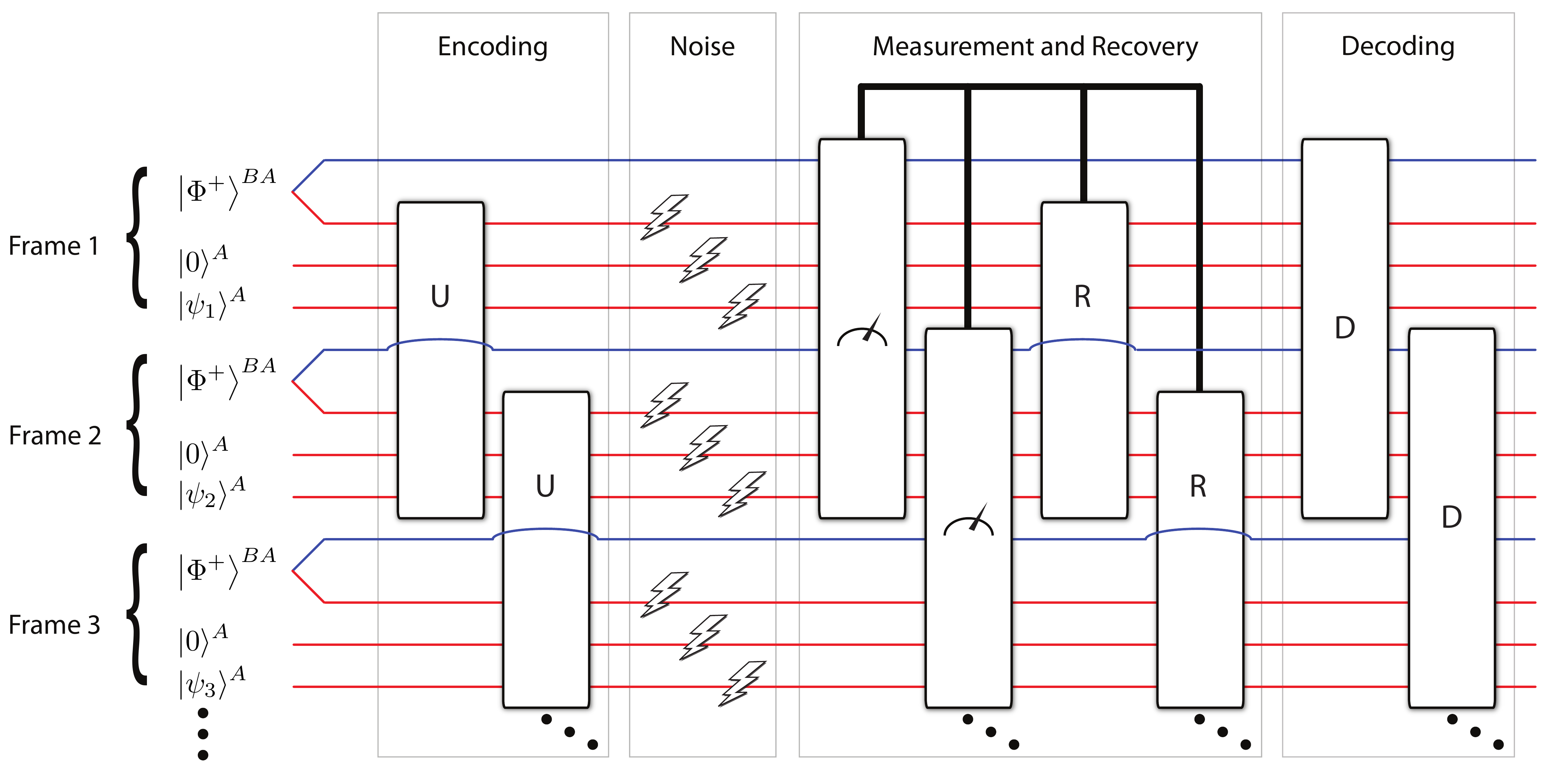}
\end{center}
\caption
{(Color online) An entanglement-assisted quantum convolutional code operates on a stream of qubits
partitioned into a countable number of frames. The sender encodes
the frames of information qubits, ancilla qubits, and half of shared ebits with
a repeated, overlapping encoding circuit $U$. The noisy channel affects the sender's encoded qubits
but does not affect the receiver's half of the shared ebits. The receiver performs overlapping
measurements on both the encoded qubits and his half of the shared ebits. These measurements
produce an error syndrome which the receiver can process to determine the most likely error.
The receiver reverses the errors on the noisy qubits from the sender. The final decoding circuit operates
on all qubits in a frame and recovers
the original stream of information qubits.}
\label{fig:eaqcc}
\end{figure*}%
An entanglement-assisted quantum convolutional code operates similarly to a
standard quantum convolutional code. The main difference is that the sender
and receiver share entanglement in the form of ebits. An $[[n,k;c]]$
entanglement-assisted quantum convolutional code encodes $k$ information
qubits per frame with the help of $c$ ebits and $n-k-c$ ancilla qubits per
frame. Figure~\ref{fig:eaqcc} highlights the main features of the operation of
an entanglement-assisted quantum convolutional code. The sender encodes a
stream of quantum information using both additional ancillas and ebits. The
sender performs the encoding operations on her qubits only (i.e., not
including the halves of the ebits in possession of the receiver). The encoding
operations have a periodic structure so that the same operations act on qubits
in different frames and give the code a memory structure. The sender can
perform these encoding operations in an online manner as she places more
qubits in the unencoded qubit stream. The sender transmits her encoded qubits
over the noisy quantum communication channel. The noisy channel does not
affect the receiver's half of the shared ebits. The receiver combines the
received noisy qubits with his half of the ebits and performs measurements to
diagnose errors that may occur. These measurements may overlap on some of the
same qubits. The receiver then diagnoses errors using a classical technique
such as Viterbi error estimation \cite{itit1967viterbi}, reverses the errors
that the channel introduces, and finally performs an online decoding circuit
that outputs the original information qubit stream.

Our main theorem below allows us to import two arbitrary classical
convolutional codes for use as a CSS entanglement-assisted quantum
convolutional code. Grassl and R\"{o}tteler were the first to construct
CSS\ quantum convolutional codes from two classical binary convolutional codes
that satisfy an orthogonality constraint---the polynomial parity check
matrices $H_{1}\left(  D\right)  $ and $H_{2}\left(  D\right)  $ of the two
classical codes are orthogonal with respect to the shifted symplectic product
\cite{ieee2007grassl}:%
\begin{equation}
H_{1}\left(  D\right)  H_{2}^{T}\left(  D^{-1}\right)  =0.
\end{equation}
The resulting symplectic code has a self-orthogonal parity-check matrix when
we join them together using the CSS\ construction. Our theorem generalizes the
work of Grassl and R\"{o}tteler because we can import two \textit{arbitrary}
classical binary convolutional codes---the codes do not necessarily have to
obey the self-orthogonality constraint.

The theorem gives a direct way to compute the amount of entanglement that the
code requires. The number of ebits required is equal to the rank of a
particular matrix derived from the check matrices of the two classical codes.
It generalizes an earlier theorem that determines the amount of entanglement
required for an entanglement-assisted quantum block code \cite{isit2007brun}.

Theorem~\ref{thm:main} also provides a formula to compute the performance
parameters of the entanglement-assisted quantum convolutional code from the
performance parameters of the two classical codes. This formula ensures that
high-rate classical convolutional codes produce high-rate
entanglement-assisted quantum convolutional codes. Our constructions also
ensure high performance for the \textquotedblleft trade-off\textquotedblright%
\ and \textquotedblleft catalytic\textquotedblright\ rates by minimizing the
number of ebits that the codes require.

We begin the proof of the theorem in this section and complete it in different
ways for each of our two classes of entanglement-assisted quantum
convolutional codes in Sections~\ref{sec:eaqcc-fefd} and \ref{sec:eaqcc-iefd}.
The proofs detail how to encode a stream of information qubits, ancilla
qubits, and shared ebits into a code that has the CSS\ structure.

\begin{theorem}
\label{thm:main}Let $H_{1}\left(  D\right)  $ and $H_{2}\left(  D\right)  $ be
the respective check matrices corresponding to noncatastrophic, delay-free
encoders for classical binary convolutional codes $C_{1}$ and $C_{2}$. Suppose
that classical code $C_{i}$ encodes $k_{i}$ information bits with $n$ bits per
frame where $i=1,2$. The respective\ dimensions of $H_{1}\left(  D\right)  $
and $H_{2}\left(  D\right)  $ are thus $\left(  n-k_{1}\right)  \times n$ and
$\left(  n-k_{2}\right)  \times n$. Then the resulting entanglement-assisted
quantum convolutional code encodes $k_{1}+k_{2}-n+c$ information qubits per
frame and is an $\left[  \left[  n,k_{1}+k_{2}-n+c;c\right]  \right]  $
entanglement-assisted quantum convolutional code. The code requires $c$ ebits
per frame where $c$ is equal to the rank of $H_{1}\left(  D\right)  H_{2}%
^{T}\left(  D^{-1}\right)  $.
\end{theorem}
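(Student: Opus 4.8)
The plan is to build the CSS stabilizer from the two classical check matrices, compute its shifted symplectic Gram matrix, and then read the code parameters off the entanglement-assisted counting rule. First I would form the quantum convolutional stabilizer whose binary polynomial matrix has the block form
\[
\begin{bmatrix}
H_1\left(  D\right)  & 0\\
0 & H_2\left(  D\right)
\end{bmatrix},
\]
with the left block playing the role of the ``Z'' matrix and the right block the ``X'' matrix. This is the CSS prescription: the rows of $H_1\left(  D\right)  $ give $Z$-type generators and the rows of $H_2\left(  D\right)  $ give $X$-type generators, together with all of their $n$-qubit shifts. Since $H_1\left(  D\right)  $ and $H_2\left(  D\right)  $ are check matrices of the two classical codes, they have $n-k_1$ and $n-k_2$ rows, so this stabilizer contributes $M=\left(  n-k_1\right)  +\left(  n-k_2\right)  =2n-k_1-k_2$ independent generators per frame (independence follows because the two blocks occupy disjoint ``Z'' and ``X'' slots and each $H_i\left(  D\right)  $ has full row rank).

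Next I would compute the shifted symplectic products using $\left(  h_a\odot h_b\right)  \left(  D\right)  =z_a\left(  D^{-1}\right)  \cdot x_b\left(  D\right)  +x_a\left(  D^{-1}\right)  \cdot z_b\left(  D\right)  $. Because each $Z$-type generator has a vanishing ``X'' part and each $X$-type generator has a vanishing ``Z'' part, the $Z$-type generators commute among themselves, the $X$-type generators commute among themselves, and the only nonvanishing products occur between a $Z$-type and an $X$-type generator. Collecting these, the full Gram matrix takes the block anti-diagonal form
\[
\Omega\left(  D\right)  =\begin{bmatrix} 0 & J\left(  D^{-1}\right)  \\ J^T\left(  D\right)  & 0 \end{bmatrix},
\qquad J\left(  D\right)  =H_1\left(  D\right)  H_2^T\left(  D^{-1}\right)  .
\]
Both off-diagonal blocks are obtained from $J\left(  D\right)  $ by transposition and the substitution $D\mapsto D^{-1}$, each of which preserves rank over the field of binary rational functions. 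Hence the rank of $\Omega\left(  D\right)  $ equals exactly twice the rank of $J\left(  D\right)  $.

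I would then invoke the entanglement-assisted counting rule: a set of $M$ generators on $n$ qubits per frame whose Gram matrix has rank $2c$ specifies an entanglement-assisted quantum convolutional code that consumes $c$ ebits per frame and encodes $n-M+c$ information qubits per frame. Setting $\mathrm{rank}\,\Omega\left(  D\right)  =2c$ gives $c=\mathrm{rank}\,J\left(  D\right)  =\mathrm{rank}\left(  H_1\left(  D\right)  H_2^T\left(  D^{-1}\right)  \right)  $, the promised ebit count, and substituting $M=2n-k_1-k_2$ into $n-M+c$ yields $k_1+k_2-n+c$ information qubits per frame, establishing the $\left[  \left[  n,k_1+k_2-n+c;c\right]  \right]  $ parameters.

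I expect the main obstacle to be twofold. First, the rank in the ebit formula must be interpreted over the correct object---the module of binary Laurent polynomials, equivalently the field $\mathbb{F}_2\left(  D\right)  $ of binary rational functions---and I must confirm that the block diagonalization and the rank argument are valid there rather than merely over $\mathbb{F}_2$. Second, the counting rule I invoke is the convolutional analogue of the block entanglement-assisted result, so I must check that it transfers faithfully to the infinite, shift-invariant setting; this is where the hypotheses that the encoders are noncatastrophic and delay-free enter, guaranteeing the block-band structure needed for a well-defined convolutional code. The explicit realization of the encoding circuit---in particular, deciding when finite-depth operations suffice and when an infinite-depth operation is unavoidable---is the genuinely hard part, and it is precisely what the later sections must supply to complete the proof for the two classes of codes.
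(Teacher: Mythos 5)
Your setup and your Gram-matrix computation are correct: the CSS quantum check matrix you write down is exactly the one the paper starts from in (\ref{eq:orig-gens}), the shifted symplectic Gram matrix is indeed block anti-diagonal with blocks built from $J(D)=H_1(D)H_2^T(D^{-1})$, and since $D\mapsto D^{-1}$ is a field automorphism of $\mathbb{F}_2(D)$ and transposition preserves rank, $\mathrm{rank}\,\Omega(D)=2\,\mathrm{rank}\,J(D)$ does hold over the rational function field. The parameter arithmetic $k=n-M+c=k_1+k_2-n+c$ is also right. The gap is the step you yourself flag: the ``entanglement-assisted counting rule'' for convolutional codes is not an off-the-shelf result you may invoke --- it is, in substance, the theorem being proved. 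In the block case that rule follows from symplectic Gram--Schmidt over $\mathbb{F}_2$, where every nondegenerate pair of generators can be normalized by Clifford operations and matched to one ebit. Over the ring of binary Laurent polynomials this normalization is obstructed: row reduction can only bring $J(D)$ to its Smith form, whose invariant factors are polynomials rather than units, and a nontrivial invariant factor cannot be removed by shift-invariant finite-depth Clifford operations. Knowing $\mathrm{rank}\,\Omega(D)=2c$ therefore does not by itself yield a physical, shift-invariant, online encoder and decoder with finite-weight measurable generators, which is what the assertion ``is an $\left[\left[n,k_1+k_2-n+c;c\right]\right]$ entanglement-assisted quantum convolutional code'' actually requires.

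This is precisely why the paper's proof is constructive and splits into two classes. Lemma~\ref{lemma:general-ops} uses the noncatastrophic, delay-free hypothesis (unit invariant factors of each $H_i(D)$) to reduce the check matrix to the $E(D)$, $F(D)$ form; Lemma~\ref{lemma:fefd} then handles the case where the invariant factors of $H_1(D)H_2^T(D^{-1})$ are powers of $D$, for which finite-depth circuits suffice; and Lemma~\ref{lemma:ieid} handles the remaining case by introducing infinite-depth operations, with the attendant noiseless-encoding assumption --- a distinction your abstract counting argument cannot see at all. Your proposal defers exactly this content to ``the later sections,'' so as written it is a plan plus a correct determination of what $c$ must be, not a proof. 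To close the gap you would have to prove the convolutional counting rule itself, i.e., exhibit shift-invariant Clifford circuits (and ebit augmentations) realizing the symplectic reduction and a finite-depth decoder recovering the information qubits, which lands you back in the paper's Smith-form construction.
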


Let us begin the proof of the above theorem by constructing an
entanglement-assisted quantum convolutional code. Consider the following
quantum check matrix in CSS\ form:%
\begin{equation}
\left[  \left.
\begin{array}
[c]{c}%
H_{1}\left(  D\right) \\
0
\end{array}
\right\vert
\begin{array}
[c]{c}%
0\\
H_{2}\left(  D\right)
\end{array}
\right]  . \label{eq:orig-gens}%
\end{equation}
We label the above matrix as a \textquotedblleft quantum check
matrix\textquotedblright\ for now because it does not necessarily correspond
to a commuting stabilizer. The quantum check matrix corresponds to a set of
Pauli sequences whose error-correcting properties are desirable.

The following lemma begins the proof of the above theorem. It details an
initial decomposition of the above quantum check matrix for each of our two
classes of entanglement-assisted quantum convolutional codes.

\begin{lemma}
\label{lemma:general-ops}Elementary row and column operations relate the
quantum check matrix\ in (\ref{eq:orig-gens}) to the following matrix%
\[
\left[  \left.
\begin{array}
[c]{cc}%
E\left(  D\right)  & F\left(  D\right) \\
0 & 0
\end{array}
\right\vert
\begin{array}
[c]{cc}%
0 & 0\\
I & 0
\end{array}
\right]  .
\]
where $E\left(  D\right)  $ is dimension $\left(  n-k_{1}\right)
\times\left(  n-k_{2}\right)  $, $F\left(  D\right)  $ is $\left(
n-k_{1}\right)  \times k_{2}$, the identity matrix is $\left(  n-k_{2}\right)
\times\left(  n-k_{2}\right)  $, and the null matrix on the right is $\left(
n-k_{2}\right)  \times k_{2}$. We give a definition of $E\left(  D\right)  $
and $F\left(  D\right)  $ in the following proof.
\end{lemma}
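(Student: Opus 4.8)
The goal is to transform the CSS quantum check matrix
\[
\left[  \left.
\begin{array}{c}
H_{1}(D)\\
0
\end{array}
\right\vert
\begin{array}{c}
0\\
H_{2}(D)
\end{array}
\right]
\]
into the target form using only elementary row and column operations. The key structural observation is that the "Z" block and the "X" block each carry one of the two classical check matrices, and they do not interact under the symplectic pairing in a way that obstructs independent manipulation. So my plan is to process the two halves separately.

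The approach I would take first focuses on the "X" side, which contains $H_{2}(D)$ in its lower row-block. Since $H_{2}(D)$ is the check matrix of a noncatastrophic, delay-free encoder, it has full row rank $n-k_{2}$ over the field of rational functions in $D$, and the delay-free/noncatastrophic hypotheses guarantee that we can bring it to a form where an $(n-k_{2})\times(n-k_{2})$ block is the identity without introducing negative powers of $D$ or catastrophic inverses. Concretely, I would apply column operations (which are the shift-invariant Clifford gates from Section~\ref{sec:finite-depth-ops}) and permit elementary row operations on $H_{2}(D)$ to reduce it to $[\,I \mid 0\,]$, so the lower-right portion of the matrix becomes exactly $\left.\begin{array}{cc} I & 0\end{array}\right.$ as required. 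The crucial point is that these same column operations must be applied consistently to the "Z" side as well, since a column operation on the "X" matrix induces a corresponding (inverse-transpose) operation on the "Z" matrix by the Heisenberg/uncertainty argument invoked at the end of Section~\ref{sec:rev-conv-stab}. This is exactly what turns $H_{1}(D)$ in the upper-left "Z" block into the pair $E(D), F(D)$.

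Having fixed the column operations by the reduction of $H_{2}(D)$, the definition of $E(D)$ and $F(D)$ is then forced: $E(D)$ and $F(D)$ are simply the images of $H_{1}(D)$ under the induced column transformation, partitioned according to how the $n$ columns split into the $n-k_{2}$ columns that became the identity block and the remaining $k_{2}$ columns. I would write the column-reducing transformation for $H_{2}(D)$ as an invertible matrix $S(D)$ acting on the right, so that $H_{2}(D)S(D) = [\,I \mid 0\,]$, and then set $[\,E(D)\mid F(D)\,] = H_{1}(D)\,(S^{-1}(D))^{T}$ evaluated at the appropriate argument (with the $D\to D^{-1}$ time-reversal bookkeeping that the shifted symplectic product demands). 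The lower-left "Z" block stays zero because the row containing $H_{2}(D)$ had no "Z" component to begin with, and the upper-right "X" block stays zero for the dual reason.

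The step I expect to be the main obstacle is verifying that the reduction of $H_{2}(D)$ to $[\,I\mid 0\,]$ can be accomplished \emph{without leaving the ring of admissible operations}---that is, without requiring division by a polynomial that would correspond to an infinite-depth (catastrophic) circuit on the "X" side. This is where the noncatastrophic, delay-free hypothesis on the encoder does the real work: it guarantees the existence of a polynomial (rather than merely rational) right-inverse structure that lets the pivoting proceed with genuine column operations. I would verify this by appealing to the standard fact from classical convolutional coding theory (see Ref.~\cite{book1999conv}) that a noncatastrophic, delay-free encoder admits a polynomial right inverse, and then checking that the corresponding shift-invariant Clifford gates are all finite-depth. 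Once this admissibility is established, the partition defining $E(D)$ and $F(D)$ is immediate, and the claimed dimensions follow by simply counting: $E(D)$ spans the $n-k_{2}$ pivoted columns and $F(D)$ the remaining $k_{2}$ columns, with $n-k_{1}$ rows inherited from $H_{1}(D)$.
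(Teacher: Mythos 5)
Your proposal is correct and takes essentially the same route as the paper: the paper likewise uses the noncatastrophic, delay-free hypothesis (via the Smith form $H_{2}(D)=A_{2}(D)\left[\,I\;\;0\,\right]B_{2}(D)$ with unit invariant factors) to reduce the ``X'' block to $\left[\,I\;\;0\,\right]$ by row and column operations, with the column operations $B_{2}^{-1}(D)$ inducing the time-reversed transpose $B_{2}^{T}(D^{-1})$ on the ``Z'' side, and then defines $E(D)$ and $F(D)$ as the resulting partition of the transformed $H_{1}$ block. The only cosmetic difference is that the paper also row-reduces $H_{1}(D)$ to $B_{1a}(D)$ beforehand, so its $E(D),F(D)$ differ from yours by the unimodular left factor $A_{1}(D)$, which is immaterial to the lemma as stated.
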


%

\begin{proof}%
The Smith form \cite{book1999conv} of $H_{i}\left(  D\right)  $ for each
$i=1,2$\ is%
\begin{equation}
H_{i}\left(  D\right)  =A_{i}\left(  D\right)  \left[
\begin{array}
[c]{cc}%
I & 0
\end{array}
\right]  B_{i}\left(  D\right)  , \label{eq:first-Smith-form}%
\end{equation}
where $A_{i}\left(  D\right)  $ is $\left(  n-k_{i}\right)  \times\left(
n-k_{i}\right)  $, the matrix in brackets is $\left(  n-k_{i}\right)  \times
n$, and $B_{i}\left(  D\right)  $ is $n\times n$ \cite{book1999conv}. Let
$B_{ia}\left(  D\right)  $\ be the first $n-k_{i}$ rows of $B_{i}\left(
D\right)  $ and let $B_{ib}\left(  D\right)  $ be the last $k_{i}$ rows of
$B_{i}\left(  D\right)  $:%
\[
B_{i}\left(  D\right)  =\left[
\begin{array}
[c]{c}%
B_{ia}\left(  D\right) \\
B_{ib}\left(  D\right)
\end{array}
\right]  .
\]
The $\left(  n-k_{i}\right)  \times\left(  n-k_{i}\right)  $ identity matrix
in brackets in (\ref{eq:first-Smith-form}) indicates that the invariant
factors of $H_{i}\left(  D\right)  $ for each $i=1,2$ are all equal to one
\cite{book1999conv}. The invariant factors are all unity for both check
matrices because the check matrices correspond to noncatastrophic, delay-free
encoders \cite{book1999conv}. The matrices $A_{i}\left(  D\right)  $ and
$B_{i}\left(  D\right)  $ are a product of a sequence of elementary row and
column operations respectively \cite{book1999conv}.

Premultiplying $H_{i}\left(  D\right)  $ by $A_{i}^{-1}\left(  D\right)  $
gives a check matrix $H_{i}^{\prime}\left(  D\right)  $ for each $i=1,2$.
Matrix $H_{i}^{\prime}\left(  D\right)  $ is a check matrix for code $C_{i}$
with equivalent error-correcting properties as $H_{i}\left(  D\right)
$\ because row operations relate the two matrices. This new check matrix
$H_{i}^{\prime}\left(  D\right)  $\ is equal to the first $n-k_{i}$ rows of
matrix $B_{i}\left(  D\right)  $:%
\[
H_{i}^{\prime}\left(  D\right)  =B_{ia}\left(  D\right)  .
\]

The invariant factors of $H_{1}\left(  D\right)  H_{2}^{T}\left(
D^{-1}\right)  $ are equivalent to those of $H_{1}^{\prime}\left(  D\right)
H_{2}^{\prime T}\left(  D^{-1}\right)  $ because they are related by row and
column operations \cite{book1999conv}:%
\begin{multline}
H_{1}\left(  D\right)  H_{2}^{T}\left(  D^{-1}\right)
=\label{eq:check-relation}\\
A_{1}\left(  D\right)  H_{1}^{\prime}\left(  D\right)  H_{2}^{\prime T}\left(
D^{-1}\right)  A_{2}^{T}\left(  D^{-1}\right)  .
\end{multline}

We now decompose the above quantum check matrix into a basic form using
elementary row and column operations. The row operations have no effect on the
error-correcting properties of the code, and the column operations correspond
to elements of an encoding circuit. We later show how to incorporate ebits so
that the quantum check matrix forms a valid commuting stabilizer.

Perform the row operations in matrices $A_{i}^{-1}\left(  D\right)  $ for both
check matrices $H_{i}\left(  D\right)  $. The quantum check matrix becomes%
\begin{equation}
\left[  \left.
\begin{array}
[c]{c}%
B_{1a}\left(  D\right) \\
0
\end{array}
\right\vert
\begin{array}
[c]{c}%
0\\
B_{2a}\left(  D\right)
\end{array}
\right]  . \label{eq:orig-qcm}%
\end{equation}
The error-correcting properties of the above generators are equivalent to
those of the generators in (\ref{eq:orig-gens}) because row operations relate
the two sets of generators. The matrix $B_{2}\left(  D\right)  $ corresponds
to a sequence of elementary column operations $B_{2,i}\left(  D\right)  $:%
\[
B_{2}\left(  D\right)  =B_{2,1}\left(  D\right)  \cdots B_{2,l}\left(
D\right)  =\prod_{i=1}^{l}B_{2,i}\left(  D\right)  .
\]
The inverse matrix $B_{2}^{-1}\left(  D\right)  $ is therefore equal to the
above sequence of operations in reverse order:%
\[
B_{2}^{-1}\left(  D\right)  =B_{2,l}\left(  D\right)  \cdots B_{2,1}\left(
D\right)  =\prod_{i=l}^{1}B_{2,i}\left(  D\right)  .
\]
Perform the elementary column operations in $B_{2}^{-1}\left(  D\right)  $
with CNOT\ and SWAP\ gates \cite{isit2006grassl}. The effect of each
elementary column operation $B_{2,i}\left(  D\right)  $ is to postmultiply the
\textquotedblleft X\textquotedblright\ matrix by $B_{2,i}\left(  D\right)  $
and to postmultiply the \textquotedblleft Z\textquotedblright\ matrix by
$B_{2,i}^{T}\left(  D^{-1}\right)  $. Therefore the effect of all elementary
operations is to postmultiply the \textquotedblleft Z\textquotedblright%
\ matrix by $B_{2}^{T}\left(  D^{-1}\right)  $ because%
\[
\prod_{i=l}^{1}B_{2,i}^{T}\left(  D^{-1}\right)  =\left(  \prod_{i=1}%
^{l}B_{2,i}\left(  D^{-1}\right)  \right)  ^{T}=B_{2}^{T}\left(
D^{-1}\right)  .
\]
The quantum check matrix in (\ref{eq:orig-qcm})\ becomes%
\begin{equation}
\left[  \left.
\begin{array}
[c]{c}%
B_{1a}\left(  D\right)  B_{2}^{T}\left(  D^{-1}\right) \\
0
\end{array}
\right\vert
\begin{array}
[c]{cc}%
0 & 0\\
I & 0
\end{array}
\right]  . \label{eq:lemma-temp-QCM}%
\end{equation}
Let $E\left(  D\right)  $ be equal to the first $n-k_{1}$ rows and $n-k_{2}$
columns of the \textquotedblleft Z\textquotedblright\ matrix:%
\[
E\left(  D\right)  \equiv B_{1,a}\left(  D\right)  B_{2,a}^{T}\left(
D^{-1}\right)  ,
\]
and let $F\left(  D\right)  $ be equal to the first $n-k_{1}$ rows and last
$k_{2}$ columns of the \textquotedblleft Z\textquotedblright\ matrix:%
\[
F\left(  D\right)  \equiv B_{1,a}\left(  D\right)  B_{2,b}^{T}\left(
D^{-1}\right)  .
\]
The quantum check matrix in (\ref{eq:lemma-temp-QCM}) is then equivalent to
the following matrix%
\begin{equation}
\left[  \left.
\begin{array}
[c]{cc}%
E\left(  D\right)  & F\left(  D\right) \\
0 & 0
\end{array}
\right\vert
\begin{array}
[c]{cc}%
0 & 0\\
I & 0
\end{array}
\right]  , \label{eq:last-eq-fefd}%
\end{equation}
where each matrix above has the dimensions stated in the theorem above.%
\end{proof}%
The above operations end the initial set of operations that each of our two
classes of entanglement-assisted quantum convolutional codes employs. We
outline the remaining operations for each class of codes in what follows.

\section{Entanglement-Assisted Quantum Convolutional Codes with Finite-Depth
Encoding and Decoding Circuits}

\label{sec:eaqcc-fefd}This section details entanglement-assisted quantum
convolutional codes in our first class. Codes in the first class admit an
encoding and decoding circuit that employ finite-depth operations only. The
check matrices for codes in this class have a property that allows this type
of encoding and decoding. The following lemma gives the details of this
property, and the proof outlines how to encode and decode this class of
entanglement-assisted quantum convolutional codes.

\begin{lemma}
\label{lemma:fefd}Suppose the Smith form\ of $H_{1}\left(  D\right)  H_{2}%
^{T}\left(  D^{-1}\right)  $ is%
\[
H_{1}\left(  D\right)  H_{2}^{T}\left(  D^{-1}\right)  =A\left(  D\right)
\left[
\begin{array}
[c]{cc}%
\Gamma\left(  D\right)  & 0\\
0 & 0
\end{array}
\right]  B\left(  D\right)  ,
\]
where $A\left(  D\right)  $ is an $\left(  n-k_{1}\right)  \times\left(
n-k_{1}\right)  $ matrix, $B\left(  D\right)  $ is an $\left(  n-k_{2}\right)
\times\left(  n-k_{2}\right)  $ matrix, $\Gamma\left(  D\right)  $ is a
diagonal $c\times c$ matrix whose entries are powers of $D$, and the matrix in
brackets has dimension $\left(  n-k_{1}\right)  \times\left(  n-k_{2}\right)
$. Then the resulting entanglement-assisted quantum convolutional code has
both a finite-depth encoding and decoding circuit.
\end{lemma}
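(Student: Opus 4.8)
The plan is to finish the reduction begun in Lemma~\ref{lemma:general-ops} by carrying the quantum check matrix all the way to a canonical entanglement-assisted form using \emph{only} finite-depth column operations, and then to read off both circuits: the encoding circuit is this sequence of finite-depth operations, and the decoding circuit performs them in reverse order. The whole argument hinges on one structural fact, which I would isolate first. From Eq.~(\ref{eq:check-relation}) and the identifications $H_i'(D)=B_{ia}(D)$ in the proof of Lemma~\ref{lemma:general-ops}, one has $H_1(D)H_2^T(D^{-1})=A_1(D)\,E(D)\,A_2^T(D^{-1})$ with $E(D)=B_{1a}(D)B_{2a}^T(D^{-1})$. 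Since $A_1(D)$ and $A_2(D)$ are unimodular, $E(D)$ has exactly the invariant factors of $H_1(D)H_2^T(D^{-1})$, so its Smith form carries the same diagonal block $\Gamma(D)$. The hypothesis that the entries of $\Gamma(D)$ are powers of $D$ says precisely that every nonzero invariant factor is a \emph{unit} of the Laurent ring $\mathbb{F}_2[D,D^{-1}]$; hence $\Gamma(D)$ inverts to a monomial matrix, and clearing such a factor is realized by advancing or delaying the relevant qubit by finitely many frames, a finite-depth column operation.

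With that in hand I would run the reduction in three finite-depth stages. First, factor $E(D)=A(D)\left[\begin{smallmatrix}\Gamma & 0\\0 & 0\end{smallmatrix}\right]B(D)$ into elementary operations; applying the row operations of $A^{-1}(D)$ to the top (pure-$Z$) generators is a mental manipulation that preserves the code and its finite generator weights, while applying the column operations of $B^{-1}(D)$—each a CNOT or SWAP with a bounded frame delay—is finite-depth. Second, clear the diagonal block: because each entry of $\Gamma(D)$ is a power of $D$, multiplying the corresponding $c$ columns by $\Gamma^{-1}(D)$ is a pure frame advance/delay and turns the leading block into $I_c$. At this point $c$ of the $Z$-type generators couple to $c$ of the $X$-type generators through an identity, exhibiting the $c$ anticommuting ebit pairs of Theorem~\ref{thm:main}; correspondingly the symplectic Gram matrix, which one computes to be $\left[\begin{smallmatrix}0 & E(D^{-1})\\ E^T(D) & 0\end{smallmatrix}\right]$, collapses to its canonical rank-$2c$ form. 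Third, I would remove the residual block $F(D)$ by finite-depth column operations on the last $k_2$ columns together with Hadamards, separating the surviving generators into ancilla operators and the logical operators $\bar Z,\bar X$ for the $k_1+k_2-n+c$ information qubits.

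Because every stage is built from CNOTs, SWAPs, Hadamards, and finite frame shifts, the composite transformation is finite-depth, giving the finite-depth encoding circuit; and since each primitive gate is its own inverse or inverts to another gate of the same kind (a delay $D^{\ell}$ inverts to the advance $D^{-\ell}$), the reverse sequence yields a finite-depth decoding circuit. A Gottesman-style check that the unencoded $Z_i,X_i$ map to finite-weight encoded operators and back then confirms that the information qubits are genuinely recovered. The main obstacle—and the entire reason this is a distinct class—is the clearing step for $\Gamma(D)$: it is finite-depth \emph{only} because the invariant factors are units. Were any invariant factor a genuine polynomial such as $1+D$, clearing it would require multiplying a column by $1/(1+D)=1+D+D^2+\cdots$, an infinite-depth operation; that is exactly the second class treated in Section~\ref{sec:eaqcc-iefd}. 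The one point demanding care is to verify that the auxiliary clearing of $F(D)$ and of the off-diagonal coupling never covertly reintroduces such a rational inverse, i.e.\ that these steps stay within finite-support column operations.
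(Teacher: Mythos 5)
Your proposal reproduces the skeleton of the paper's own proof: start from the output of Lemma~\ref{lemma:general-ops}, observe that $E(D)$ inherits the invariant factors of $H_{1}(D)H_{2}^{T}(D^{-1})$ (so the hypothesis says all nonzero invariant factors are units $D^{\ell}$ of $\mathbb{F}_{2}[D,D^{-1}]$), carry out the Smith reduction of $E(D)$ with row operations plus CNOT/SWAP/Hadamard column operations and frame shifts, let $c$ ebits repair the $c$ anticommuting pairs (your computation of the Gram matrix $\bigl[\begin{smallmatrix}0 & E(D^{-1})\\ E^{T}(D) & 0\end{smallmatrix}\bigr]$ is correct, as is your diagnosis that the class boundary sits exactly at monomial invariant factors), and take encoder and decoder to be the same finite-depth sequence run in opposite orders. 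Two minor deviations are harmless: the paper never clears $\Gamma(D)$ to $I$ by column shifts, but instead keeps $\Gamma(D)$ and realizes the code as a Grassl--R\"{o}tteler subcode of the ebit stabilizer via row delays (a mental operation); and your encode/decode labels are swapped relative to the paper's (the derived reduction, run forward on the received qubits, is the \emph{decoder}; its reverse is the encoder), which does not matter since finite-depth operations are closed under inversion.

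The genuine gap is exactly the step you flag in your last sentence and then leave open: the residual block $F(D)$. This is not a routine bookkeeping check; it is where the paper does additional real work, and it requires an input that your argument never invokes. The rows of $F(D)$ split into two kinds after the $E(D)$ reduction. The rows paired with $\Gamma(D)$ (the block $F_{1}(D)$ in Eq.~(\ref{eq:init-example-inter-step})) can indeed be cleared by finite-depth Hadamard/CNOT gates played against the monomial columns of $\Gamma(D)$. But the pure-ancilla rows carry a block $F_{2}(D)$ with no other nonzero entries to play against, so the only way to bring them to the unencoded ancilla form with finite-depth operations is a \emph{second} Smith decomposition $F_{2}(D)=A_{F}(D)\left[\begin{array}{cc}\Gamma_{F}(D) & 0\end{array}\right]B_{F}(D)$ together with the claim that $\Gamma_{F}(D)$ is again diagonal with entries that are powers of $D$. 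That claim does \emph{not} follow from the lemma's hypothesis on $H_{1}(D)H_{2}^{T}(D^{-1})$; the paper derives it from a different source, namely that $H_{2}(D)$ corresponds to a noncatastrophic encoder, so that the Laurent-polynomial row and column operations already performed can change the relevant invariant factors only by powers of $D$. Without this argument your third stage is an assertion, and it is precisely the assertion that can fail: were some entry of $\Gamma_{F}(D)$ equal to, say, $1+D$, clearing it would demand the infinite-depth operation $1/(1+D)$ and the conclusion of Lemma~\ref{lemma:fefd} would collapse into the second class of Section~\ref{sec:eaqcc-iefd}. To complete your proof you must add (i) the Smith decomposition of $F_{2}(D)$ and (ii) the noncatastrophicity argument showing its invariant factors are monomials.
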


%

\begin{proof}%
We begin the proof of this theorem by continuing where the proof of
Lemma~\ref{lemma:general-ops} ends. The crucial assumption for the above lemma
is that the invariant factors of $H_{1}\left(  D\right)  H_{2}^{T}\left(
D^{-1}\right)  $ are all powers of $D$. The Smith form of $E\left(  D\right)
$ in (\ref{eq:last-eq-fefd}) therefore becomes%
\[
A_{1}^{-1}\left(  D\right)  A\left(  D\right)  \left[
\begin{array}
[c]{cc}%
\Gamma\left(  D\right)  & 0\\
0 & 0
\end{array}
\right]  B\left(  D\right)  A_{2}^{-1}\left(  D\right)  ,
\]
by employing the hypothesis of Lemma~\ref{lemma:fefd} and
(\ref{eq:check-relation}). The rank of both $H_{1}\left(  D\right)  H_{2}%
^{T}\left(  D^{-1}\right)  $ and $E\left(  D\right)  $ is equal to $c$.

Perform the inverse of the row operations in $A_{1}^{-1}\left(  D\right)
A\left(  D\right)  $ on the first $n-k_{1}$ rows of the quantum check matrix
in (\ref{eq:last-eq-fefd}). Perform the inverse of the column operations in
matrix $B\left(  D\right)  A_{2}^{-1}\left(  D\right)  $ on the first
$n-k_{2}$ columns of the quantum check matrix in (\ref{eq:last-eq-fefd}). We
execute these column operations with Hadamard, CNOT,\ and SWAP\ gates. These
column operations have a corresponding effect on columns in the
\textquotedblleft X\textquotedblright\ matrix, but we can exploit the identity
matrix in the last $n-k_{2}$ rows of the \textquotedblleft X\textquotedblright%
\ matrix to counteract this effect. We perform row operations on the last
$n-k_{2}$ rows of the matrix that act as the inverse of the column operations,
and therefore the quantum check matrix in (\ref{eq:last-eq-fefd})\ becomes%
\[
\left[  \left.
\begin{array}
[c]{ccc}%
\Gamma\left(  D\right)  & 0 & F_{1}\left(  D\right) \\
0 & 0 & F_{2}\left(  D\right) \\
0 & 0 & 0\\
0 & 0 & 0
\end{array}
\right\vert
\begin{array}
[c]{ccc}%
0 & 0 & 0\\
0 & 0 & 0\\
I & 0 & 0\\
0 & I & 0
\end{array}
\right]  ,
\]
where $F_{1}\left(  D\right)  $ and $F_{2}\left(  D\right)  $ are the first
$c$ and $n-k_{1}-c$\ respective rows of $A^{-1}\left(  D\right)  A_{1}\left(
D\right)  F\left(  D\right)  $. We perform Hadamard and CNOT\ gates to clear
the entries in $F_{1}\left(  D\right)  $ in the \textquotedblleft
Z\textquotedblright\ matrix above. The quantum check matrix becomes%
\begin{equation}
\left[  \left.
\begin{array}
[c]{ccc}%
\Gamma\left(  D\right)  & 0 & 0\\
0 & 0 & F_{2}\left(  D\right) \\
0 & 0 & 0\\
0 & I & 0
\end{array}
\right\vert
\begin{array}
[c]{ccc}%
0 & 0 & 0\\
0 & 0 & 0\\
I & 0 & 0\\
0 & 0 & 0
\end{array}
\right]  . \label{eq:init-example-inter-step}%
\end{equation}

The Smith form of $F_{2}\left(  D\right)  $ is%
\[
F_{2}\left(  D\right)  =A_{F}\left(  D\right)  \left[
\begin{array}
[c]{cc}%
\Gamma_{F}\left(  D\right)  & 0
\end{array}
\right]  B_{F}\left(  D\right)  ,
\]
where $\Gamma_{F}\left(  D\right)  $ is a diagonal matrix whose entries are
powers of $D$, $A_{F}\left(  D\right)  $ is $\left(  n-k_{1}-c\right)
\times\left(  n-k_{1}-c\right)  $, and $B_{F}\left(  D\right)  $ is
$k_{2}\times k_{2}$. The Smith form of $F_{2}\left(  D\right)  $ takes this
particular form because the original check matrix $H_{2}\left(  D\right)  $ is
noncatastrophic and column operations with Laurent polynomials change the
invariant factors only up to powers of $D$.

Perform row operations corresponding to $A_{F}^{-1}\left(  D\right)  $ on the
second set of $n-k_{1}-c$ rows with $F_{2}\left(  D\right)  $ in
(\ref{eq:init-example-inter-step}). Perform column operations corresponding to
$B_{F}^{-1}\left(  D\right)  $ on columns $n-k_{2}+1,\ldots,n$ with Hadamard,
CNOT, and SWAP\ gates. The resulting quantum check matrix has the following
form:%
\begin{equation}
\left[  \left.
\begin{array}
[c]{cccc}%
\Gamma\left(  D\right)  & 0 & 0 & 0\\
0 & 0 & \Gamma_{F}\left(  D\right)  & 0\\
0 & 0 & 0 & 0\\
0 & I & 0 & 0
\end{array}
\right\vert
\begin{array}
[c]{cccc}%
0 & 0 & 0 & 0\\
0 & 0 & 0 & 0\\
I & 0 & 0 & 0\\
0 & 0 & 0 & 0
\end{array}
\right]  . \label{eq:final-quantum-check-matrix}%
\end{equation}

We have now completed the decomposition of the original quantum check matrix
in (\ref{eq:orig-gens}) for this class of entanglement-assisted quantum
convolutional codes. It is not possible to perform row or column operations to
decompose the above matrix any further. The problem with the above quantum
check matrix is that it does not form a valid quantum convolutional code. The
first set of rows with matrix $\Gamma\left(  D\right)  $ are not orthogonal
under the shifted symplectic product to the third set of rows with the
identity matrix on the \textquotedblleft X\textquotedblright\ side.
Equivalently, the set of Pauli sequences corresponding to the above quantum
check matrix do not form a commuting stabilizer. We can use entanglement
shared between sender and receiver to solve this problem. Entanglement adds
columns to the above quantum check matrix to resolve the issue. The additional
columns correspond to qubits on the receiver's side. We next show in detail
how to incorporate ancilla qubits, ebits, and information qubits to obtain a
valid stabilizer code. The result is that we can exploit the error-correcting
properties of the original code to protect the sender's qubits.

Consider the following check matrix corresponding to a commuting stabilizer:%
\begin{equation}
\left[  \left.
\begin{array}
[c]{ccccc}%
I & I & 0 & 0 & 0\\
0 & 0 & 0 & I & 0\\
0 & 0 & 0 & 0 & 0\\
0 & 0 & I & 0 & 0
\end{array}
\right\vert
\begin{array}
[c]{ccccc}%
0 & 0 & 0 & 0 & 0\\
0 & 0 & 0 & 0 & 0\\
I & I & 0 & 0 & 0\\
0 & 0 & 0 & 0 & 0
\end{array}
\right]  , \label{eq:bare-1st-stab}%
\end{equation}
where the identity matrices in the first and third sets of rows each have
dimension $c\times c$, the identity matrix in the second set of rows has
dimension $\left(  n-k_{1}-c\right)  \times\left(  n-k_{1}-c\right)  $, and
the identity matrix in the fourth set of rows has dimension $\left(
n-k_{2}-c\right)  \times\left(  n-k_{2}-c\right)  $. The first and third sets
of $c$\ rows stabilize a set of $c$ ebits shared between Alice and Bob. Bob
possesses the \textquotedblleft left\textquotedblright\ $c$ qubits and Alice
possesses the \textquotedblleft right\textquotedblright\ $n$ qubits. The
second and fourth sets of rows stabilize a set of $2\left(  n-c\right)
-k_{1}-k_{2}$ ancilla qubits that Alice possesses. The stabilizer therefore
stabilizes a set of $c$ ebits, $2\left(  n-c\right)  -k_{1}-k_{2}$ ancilla
qubits, and $k_{1}+k_{2}-n+c$ information qubits.

Observe that the last $n$ columns of the \textquotedblleft Z\textquotedblright%
\ and \textquotedblleft X\textquotedblright\ matrices in the above stabilizer
are similar in their layout to the entries in
(\ref{eq:final-quantum-check-matrix}). We can delay the rows of the above
stabilizer by an arbitrary amount to obtain the desired stabilizer. So the
above stabilizer is a subcode of the following stabilizer in the sense of
Ref.~\cite{isit2006grassl}:%
\[
\left[  \left.
\begin{array}
[c]{ccccc}%
\Gamma\left(  D\right)  & \Gamma\left(  D\right)  & 0 & 0 & 0\\
0 & 0 & 0 & \Gamma_{F}\left(  D\right)  & 0\\
0 & 0 & 0 & 0 & 0\\
0 & 0 & I & 0 & 0
\end{array}
\right\vert
\begin{array}
[c]{ccccc}%
0 & 0 & 0 & 0 & 0\\
0 & 0 & 0 & 0 & 0\\
I & I & 0 & 0 & 0\\
0 & 0 & 0 & 0 & 0
\end{array}
\right]  .
\]
The stabilizer in (\ref{eq:bare-1st-stab}) has equivalent error-correcting
properties to and the same asymptotic rate as the above desired stabilizer.
The above stabilizer matrix is an augmented version of the quantum check
matrix in (\ref{eq:final-quantum-check-matrix}) that includes entanglement.
The sender performs all of the encoding column operations detailed in the
proofs of this lemma and Lemma~\ref{lemma:general-ops} in reverse order. The
result of these operations is an $\left[  \left[  n,k_{1}+k_{2}-n+c;c\right]
\right]  $ entanglement-assisted quantum convolutional code with the same
error-correcting properties as the quantum check matrix in (\ref{eq:orig-gens}%
). The receiver decodes the original information-qubit stream by performing
the column operations in the order presented. The information qubits appear as
the last $k_{1}+k_{2}-n+c$ in each frame of the stream (corresponding to the
$k_{1}+k_{2}-n+c$ columns of zeros in both the \textquotedblleft
Z\textquotedblright\ and \textquotedblleft X\textquotedblright\ matrices
above).%
\end{proof}%

\begin{example}
\label{ex:fefd-example}Consider a classical convolutional code with the
following check matrix:%
\[
H\left(  D\right)  =\left[
\begin{array}
[c]{cc}%
1+D^{2} & 1+D+D^{2}%
\end{array}
\right]  .
\]
We can use $H\left(  D\right)  $ in an entanglement-assisted quantum
convolutional code to correct for both bit-flip errors and phase-flip errors.
We form the following quantum check matrix:%
\begin{equation}
\left[  \left.
\begin{array}
[c]{cc}%
1+D^{2} & 1+D+D^{2}\\
0 & 0
\end{array}
\right\vert
\begin{array}
[c]{cc}%
0 & 0\\
1+D^{2} & 1+D+D^{2}%
\end{array}
\right]  . \label{eq:desired-QCM-first-example}%
\end{equation}
This code falls in the first class of entanglement-assisted quantum
convolutional codes because $H\left(  D\right)  H^{T}\left(  D^{-1}\right)
=1$.\newline\newline We do not show the decomposition of the above check
matrix as outlined in Lemma~\ref{lemma:fefd}, but instead show how to encode
it starting from a stream of information qubits and ebits. Each frame has one
ebit and one information qubit.%
\begin{figure}
[ptb]
\begin{center}
\includegraphics[
natheight=9.753400in,
natwidth=7.920000in,
height=4.0335in,
width=3.2802in
]%
{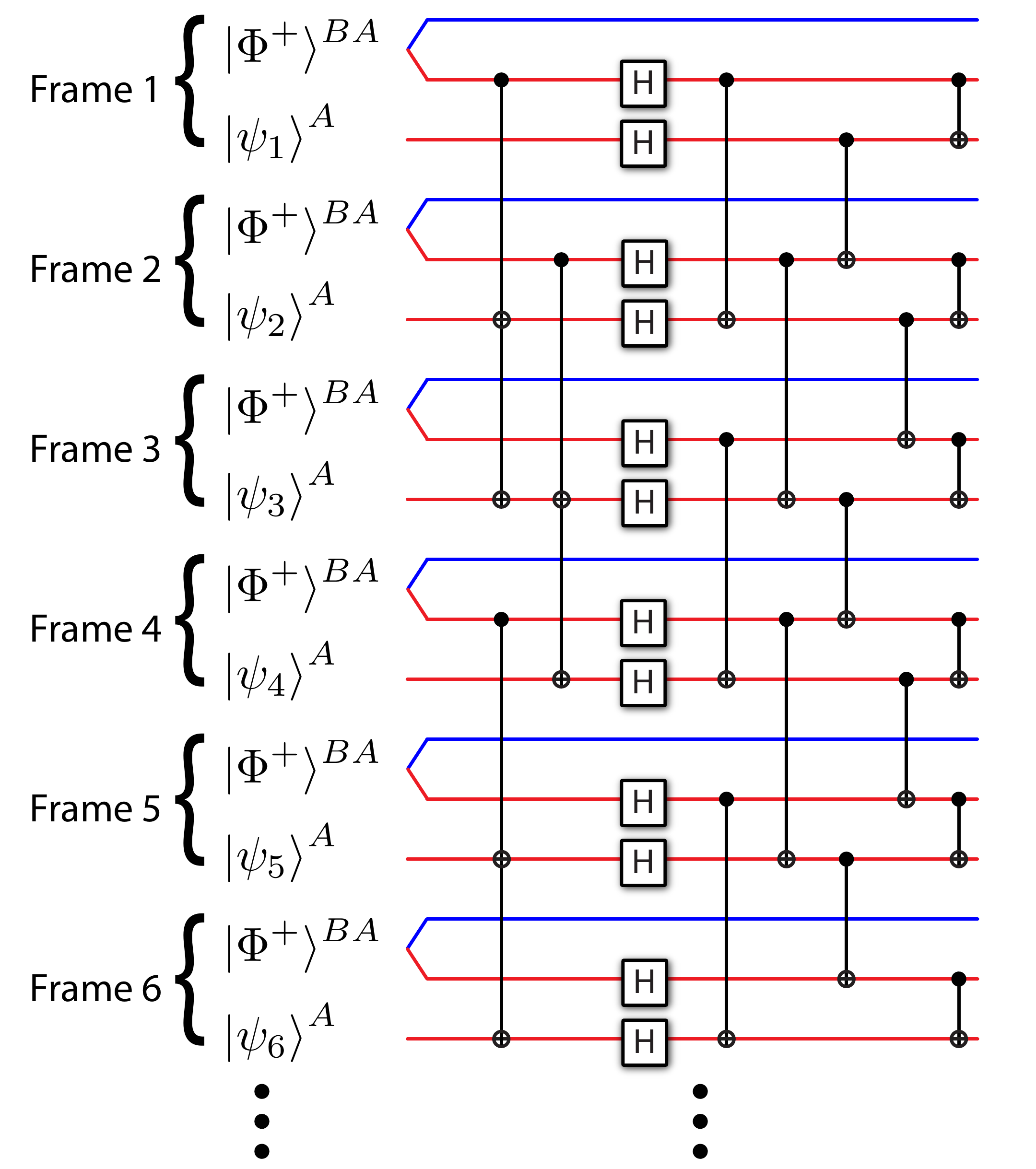}%
\caption{(Color online) The finite-depth encoding circuit for the
entanglement-assisted quantum convolutional code in
Example~\ref{ex:fefd-example}. The above operations in reverse order give a
valid decoding circuit.}%
\label{fig:example-eaqcc-fefd}%
\end{center}
\end{figure}
Let us begin with a polynomial matrix that stabilizes the unencoded state:%
\[
\left[  \left.
\begin{array}
[c]{ccc}%
1 & 1 & 0\\
0 & 0 & 0
\end{array}
\right\vert
\begin{array}
[c]{ccc}%
0 & 0 & 0\\
1 & 1 & 0
\end{array}
\right]  .
\]
Alice possesses the two qubits on the \textquotedblleft
right\textquotedblright\ and Bob possesses the qubit on the \textquotedblleft
left.\textquotedblright\ We label the middle qubit as \textquotedblleft qubit
one\textquotedblright\ and the rightmost qubit as \textquotedblleft qubit
two.\textquotedblright\ Alice performs a CNOT\ from qubit one to qubit two in
a delayed frame and a CNOT\ from qubit one to qubit two in a frame delayed by
two. The stabilizer becomes%
\[
\left[  \left.
\begin{array}
[c]{ccc}%
1 & 1 & 0\\
0 & 0 & 0
\end{array}
\right\vert
\begin{array}
[c]{ccc}%
0 & 0 & 0\\
1 & 1 & D+D^{2}%
\end{array}
\right]  .
\]
Alice performs Hadamard gates on both of her qubits. The stabilizer becomes%
\[
\left[  \left.
\begin{array}
[c]{ccc}%
1 & 0 & 0\\
0 & 1 & D+D^{2}%
\end{array}
\right\vert
\begin{array}
[c]{ccc}%
0 & 1 & 0\\
1 & 0 & 0
\end{array}
\right]  .
\]
Alice performs a CNOT\ from qubit one to qubit two in a delayed frame. The
stabilizer becomes%
\[
\left[  \left.
\begin{array}
[c]{ccc}%
1 & 0 & 0\\
0 & D & D+D^{2}%
\end{array}
\right\vert
\begin{array}
[c]{ccc}%
0 & 1 & D\\
1 & 0 & 0
\end{array}
\right]  .
\]
Alice performs a CNOT\ from qubit two to qubit one in a delayed frame. The
stabilizer becomes%
\[
\left[  \left.
\begin{array}
[c]{ccc}%
1 & 0 & 0\\
0 & D & 1+D+D^{2}%
\end{array}
\right\vert
\begin{array}
[c]{ccc}%
0 & 1+D^{2} & D\\
1 & 0 & 0
\end{array}
\right]  .
\]
Alice performs a CNOT\ from qubit one to qubit two. The stabilizer becomes%
\[
\left[  \left.
\begin{array}
[c]{ccc}%
1 & 0 & 0\\
0 & 1+D^{2} & 1+D+D^{2}%
\end{array}
\right\vert
\begin{array}
[c]{ccc}%
0 & 1+D^{2} & 1+D+D^{2}\\
1 & 0 & 0
\end{array}
\right]  .
\]
A row operation that switches the first row with the second row gives the
following stabilizer:%
\[
\left[  \left.
\begin{array}
[c]{ccc}%
0 & 1+D^{2} & 1+D+D^{2}\\
1 & 0 & 0
\end{array}
\right\vert
\begin{array}
[c]{ccc}%
1 & 0 & 0\\
0 & 1+D^{2} & 1+D+D^{2}%
\end{array}
\right]  .
\]
The entries on Alice's side of the above stabilizer have equivalent
error-correcting properties to the quantum check matrix in
(\ref{eq:desired-QCM-first-example}). Figure~\ref{fig:example-eaqcc-fefd}%
\ illustrates how the above operations encode a stream of ebits and
information qubits for our example.
\end{example}

\subsubsection{Discussion}

Codes in the first class are more useful in practice than those in the second
because their encoding and decoding circuits are finite depth. An uncorrected
error propagates only to a finite number of information qubits in the decoded
qubit stream. Codes in the first class therefore do not require any
assumptions about noiseless encoding or decoding.

The assumption about the invariant factors in the Smith form of $H_{1}\left(
D\right)  H_{2}^{T}\left(  D^{-1}\right)  $\ holds only for some classical
check matrices. Only a subclass of classical codes satisfy this assumption,
but it still expands the set of available quantum codes beyond those whose
check matrices $H_{1}\left(  D\right)  $ and $H_{2}\left(  D\right)  $ are
orthogonal. We need further techniques to handle the classical codes for which
this assumption does not hold. The following sections provide these further
techniques to handle a larger class of entanglement-assisted quantum
convolutional codes.

\section{Infinite-Depth Clifford Operations}

\label{sec:infinite-depth-ops}We now introduce a new type of operation, an
infinite-depth operation, to the set of operations in the shift-invariant
Clifford group available for encoding and decoding quantum convolutional
codes. We require infinite-depth operations to expand the set of classical
convolutional codes that we can import for quantum convolutional coding.

\begin{definition}
An \textit{infinite-depth operation} can transform a finite-weight stabilizer
generator to one with infinite weight (but does not necessarily do so to every
finite-weight generator).
\end{definition}

A decoding circuit with infinite-depth operations on qubits sent over the
noisy channel is undesirable because it spreads uncorrected errors infinitely
into the decoded information qubit stream. But an encoding circuit with
infinite-depth operations is acceptable if we assume a communication paradigm
in which the only noisy process is the noisy quantum channel.

We later show several examples of circuits that include infinite-depth
operations. Infinite-depth operations expand the possibilities for quantum
convolutional circuits in much the same way that incorporating feedback
expands the possibilities for classical convolutional circuits.

We illustrate the details of several infinite-depth operations for use in an
entanglement-assisted quantum convolutional code. We first provide some
specific examples of infinite-depth operations and then show how to realize an
arbitrary infinite-depth operation.

We consider both the stabilizer and the logical operators for the information
qubits in our analysis. Tracking both of these sets of generators is necessary
for determining the proper decoding circuit when including infinite-depth operations.

\subsection{Examples of Infinite-Depth Operations}

Our first example of an infinite-depth operation involves a stream of
information qubits and ancilla qubits. We divide the stream into frames of
three qubits where each frame has two ancilla qubits and one information
qubit. The following two generators and each of their three-qubit shifts
stabilize the qubit stream:%
\begin{equation}
\cdots\left\vert
\begin{array}
[c]{ccc}%
I & I & I\\
I & I & I
\end{array}
\right\vert
\begin{array}
[c]{ccc}%
Z & I & I\\
I & Z & I
\end{array}
\left\vert
\begin{array}
[c]{ccc}%
I & I & I\\
I & I & I
\end{array}
\right\vert \cdots\label{eq:id-unencoded-Paulis}%
\end{equation}
The binary polynomial matrix corresponding to this stabilizer is as follows:%
\begin{equation}
\left[  \left.
\begin{array}
[c]{ccc}%
1 & 0 & 0\\
0 & 1 & 0
\end{array}
\right\vert
\begin{array}
[c]{ccc}%
0 & 0 & 0\\
0 & 0 & 0
\end{array}
\right]  . \label{eq:id-unencoded-qubits}%
\end{equation}
We obtain any Pauli sequence in the stabilizer by multiplying the above rows
by a power of $D$ and applying the inverse of the P2B isomorphism. The logical
operators for the information qubits are as follows:%
\[
\cdots\left\vert
\begin{array}
[c]{ccc}%
I & I & I\\
I & I & I
\end{array}
\right\vert
\begin{array}
[c]{ccc}%
I & I & X\\
I & I & Z
\end{array}
\left\vert
\begin{array}
[c]{ccc}%
I & I & I\\
I & I & I
\end{array}
\right\vert \cdots
\]
They also admit a description with a binary polynomial matrix:%
\begin{equation}
\left[  \left.
\begin{array}
[c]{ccc}%
0 & 0 & 0\\
0 & 0 & 1
\end{array}
\right\vert
\begin{array}
[c]{ccc}%
0 & 0 & 1\\
0 & 0 & 0
\end{array}
\right]  . \label{eq-unencoded-info-qubits}%
\end{equation}
We refer to the above matrix as the \textquotedblleft information-qubit
matrix.\textquotedblright

\subsubsection{Encoding}

Suppose we would like to encode the above stream so that the following
generators stabilize it:%
\[
\cdots\left\vert
\begin{array}
[c]{ccc}%
I & I & I\\
I & I & I
\end{array}
\right\vert
\begin{array}
[c]{ccc}%
X & X & X\\
Z & Z & I
\end{array}
\left\vert
\begin{array}
[c]{ccc}%
X & X & I\\
I & I & I
\end{array}
\right\vert \cdots,
\]
or equivalently, the following binary polynomial matrix stabilizes it:%
\begin{equation}
\left[  \left.
\begin{array}
[c]{ccc}%
0 & 0 & 0\\
1 & 1 & 0
\end{array}
\right\vert
\begin{array}
[c]{ccc}%
D+1 & D+1 & 1\\
0 & 0 & 0
\end{array}
\right]  . \label{eq:desired-stabilizer}%
\end{equation}

We encode the above stabilizer using a combination of finite-depth operations
and an infinite-depth operation. We perform a Hadamard on the first qubit in
each frame and follow with a CNOT\ from the first qubit to the second and
third qubits in each frame. These operations transform the matrix in
(\ref{eq:id-unencoded-qubits}) to the following matrix%
\[
\left[  \left.
\begin{array}
[c]{ccc}%
0 & 0 & 0\\
1 & 1 & 0
\end{array}
\right\vert
\begin{array}
[c]{ccc}%
1 & 1 & 1\\
0 & 0 & 0
\end{array}
\right]  ,
\]
or equivalently transform the generators in (\ref{eq:id-unencoded-Paulis})\ to
the following generators:%
\[
\cdots\left\vert
\begin{array}
[c]{ccc}%
I & I & I\\
I & I & I
\end{array}
\right\vert
\begin{array}
[c]{ccc}%
X & X & X\\
Z & Z & I
\end{array}
\left\vert
\begin{array}
[c]{ccc}%
I & I & I\\
I & I & I
\end{array}
\right\vert \cdots
\]
The information-qubit matrix becomes%
\[
\left[  \left.
\begin{array}
[c]{ccc}%
0 & 0 & 0\\
1 & 0 & 1
\end{array}
\right\vert
\begin{array}
[c]{ccc}%
0 & 0 & 1\\
0 & 0 & 0
\end{array}
\right]  .
\]
We now perform an infinite-depth operation: a CNOT\ from the third qubit in
one frame to the third qubit in a delayed frame and repeat this operation for
all following frames. Figure~\ref{fig:inf-depth-simple}\ shows this operation
acting on our stream of qubits with three qubits per frame.%
\begin{figure}
[ptb]
\begin{center}
\includegraphics[
natheight=9.786200in,
natwidth=3.673700in,
height=3.1808in,
width=1.2047in
]%
{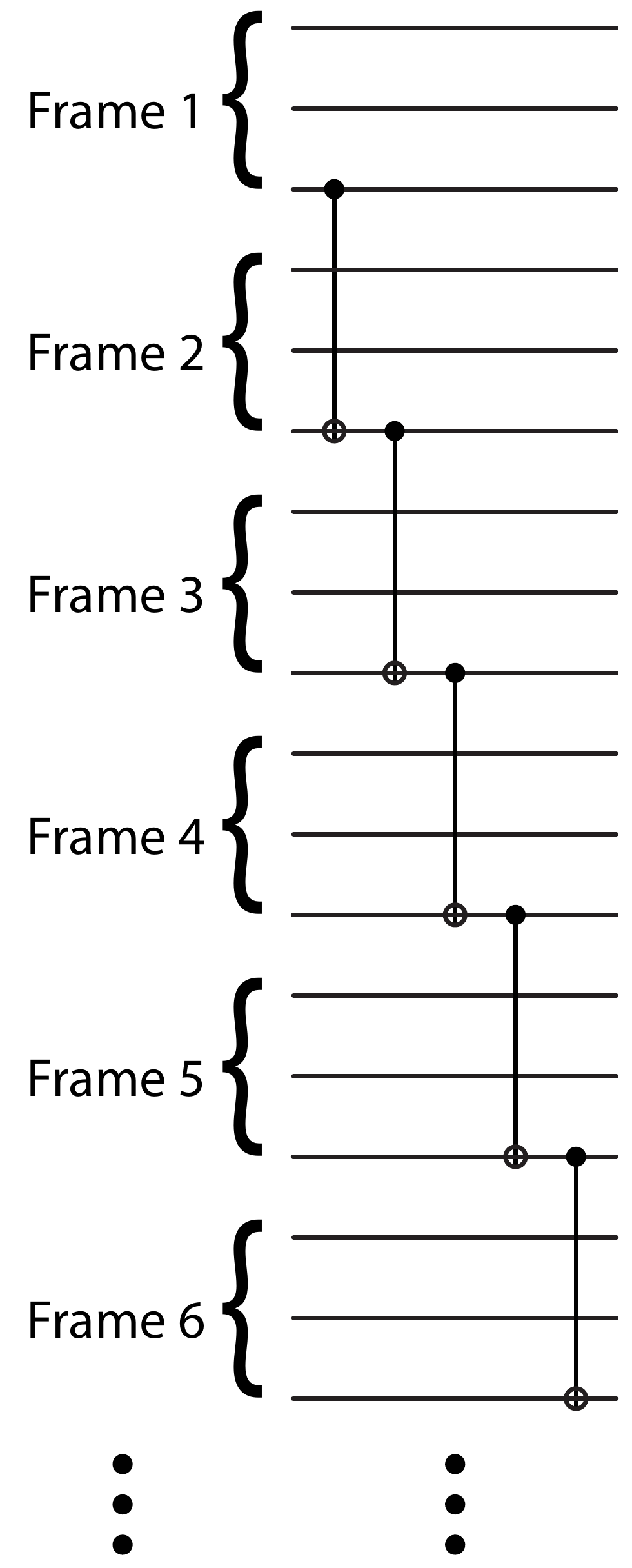}%
\caption{An example of an infinite-depth operation. A sequence of CNOT\ gates
acts on the third qubit of every frame. This infinite-depth operation
effectively multiplies the third column of the \textquotedblleft
X\textquotedblright\ side of the binary polynomial matrix by the rational
polynomial $1/\left(  1+D\right)  $ and multiplies the third column of the
\textquotedblleft Z\textquotedblright\ side of the binary polynomial matrix by
$1+D^{-1}$.}%
\label{fig:inf-depth-simple}%
\end{center}
\end{figure}
The effect of this operation is to translate the above stabilizer generators
as follows:%
\[
\cdots\left\vert
\begin{array}
[c]{ccc}%
I & I & I\\
I & I & I
\end{array}
\right\vert
\begin{array}
[c]{ccc}%
X & X & X\\
Z & Z & I
\end{array}
\left\vert
\begin{array}
[c]{ccc}%
I & I & X\\
I & I & I
\end{array}
\right\vert \left.
\begin{array}
[c]{ccc}%
I & I & X\\
I & I & I
\end{array}
\right\vert \cdots
\]
The first generator above and each of its three-qubits shifts is an
infinite-weight generator if the above sequence of CNOTs acts on the entire
countably-infinite qubit stream. We represent the above stabilizer with the
binary \textit{rational} polynomial matrix%
\begin{equation}
\left[  \left.
\begin{array}
[c]{ccc}%
0 & 0 & 0\\
1 & 1 & 0
\end{array}
\right\vert
\begin{array}
[c]{ccc}%
1 & 1 & 1/\left(  1+D\right) \\
0 & 0 & 0
\end{array}
\right]  , \label{eq:id-encoded-stabilizer}%
\end{equation}
where $1/\left(  1+D\right)  =1+D+D^{2}+\ldots$ is a repeating fraction. The
operation is infinite-depth because it translates the original finite-weight
stabilizer generator to one with infinite weight.

It is possible to perform a row operation that multiplies the first row by
$D+1$. This operation gives a stabilizer matrix that is equivalent to the
desired stabilizer in (\ref{eq:desired-stabilizer}). The receiver of the
encoded qubits measures the finite-weight stabilizer generators in
(\ref{eq:desired-stabilizer}) to diagnose errors. These measurements do not
disturb the information qubits because they also stabilize the encoded stream.

The above encoding operations transform the information-qubit matrix as
follows:%
\begin{equation}
\left[  \left.
\begin{array}
[c]{ccc}%
0 & 0 & 0\\
1 & 0 & 1+D^{-1}%
\end{array}
\right\vert
\begin{array}
[c]{ccc}%
0 & 0 & 1/\left(  1+D\right) \\
0 & 0 & 0
\end{array}
\right]  . \label{eq:id-encoded-info-qubits}%
\end{equation}
The infinite-depth operation on the third qubit has an effect on the
\textquotedblleft Z\textquotedblright\ or left side of the information-qubit
matrix as illustrated in the second row of the above matrix. The effect is to
multiply the third column of the \textquotedblleft Z\textquotedblright\ matrix
by $f\left(  D^{-1}\right)  $ if the operation multiplies the third column of
the \textquotedblleft X\textquotedblright\ matrix by $1/f\left(  D\right)  $.
This corresponding action on the \textquotedblleft Z\textquotedblright\ side
occurs because the commutation relations of the Pauli operators remain
invariant under quantum gates, or equivalently, the shifted symplectic product
remains invariant under column operations. The original shifted symplectic
product for the logical operators is one, and it remains as one because
$f((D^{-1})^{-1})/f\left(  D\right)  =1$.

\subsubsection{Decoding}

We perform finite-depth operations to decode the stream of information qubits.
Begin with the stabilizer and information-qubit matrix in
(\ref{eq:id-encoded-stabilizer})\ and (\ref{eq:id-encoded-info-qubits})
respectively. Perform a CNOT\ from the first qubit to the second qubit. The
stabilizer becomes%
\[
\left[  \left.
\begin{array}
[c]{ccc}%
0 & 0 & 0\\
0 & 1 & 0
\end{array}
\right\vert
\begin{array}
[c]{ccc}%
1 & 0 & 1/\left(  1+D\right) \\
0 & 0 & 0
\end{array}
\right]  ,
\]
and the information-qubit matrix does not change. Perform a CNOT\ from the
third qubit to the first qubit in the same frame and in a delayed frame. These
gates multiply column three in the \textquotedblleft X\textquotedblright%
\ matrix by $1+D$ and add the result to column one. The gates also multiply
column one in the \textquotedblleft Z\textquotedblright\ matrix by $1+D^{-1}$
and add the result to column three. The effect is as follows on both the
stabilizer%
\begin{equation}
\left[  \left.
\begin{array}
[c]{ccc}%
0 & 0 & 0\\
0 & 1 & 0
\end{array}
\right\vert
\begin{array}
[c]{ccc}%
0 & 0 & 1/\left(  1+D\right) \\
0 & 0 & 0
\end{array}
\right]  , \label{eq:id-decoded-stabilizer}%
\end{equation}
and the information-qubit matrix%
\begin{equation}
\left[  \left.
\begin{array}
[c]{ccc}%
0 & 0 & 0\\
1 & 0 & 0
\end{array}
\right\vert
\begin{array}
[c]{ccc}%
1 & 0 & 1/\left(  1+D\right) \\
0 & 0 & 0
\end{array}
\right]  . \label{eq:id-decoded-info}%
\end{equation}
We can multiply the logical operators by any element of the stabilizer and
obtain an equivalent logical operator \cite{thesis97gottesman}. We perform
this multiplication in the \textquotedblleft binary-polynomial
picture\textquotedblright\ by adding the first row of the stabilizer in
(\ref{eq:id-decoded-stabilizer}) to the first row of (\ref{eq:id-decoded-info}%
). The information-qubit matrix becomes%
\begin{equation}
\left[  \left.
\begin{array}
[c]{ccc}%
0 & 0 & 0\\
1 & 0 & 0
\end{array}
\right\vert
\begin{array}
[c]{ccc}%
1 & 0 & 0\\
0 & 0 & 0
\end{array}
\right]  , \label{eq:encoded-info-qubit-1st-ex}%
\end{equation}
so that the resulting logical operators act only on the first qubit of every
frame. We have successfully decoded the information qubits with finite-depth
operations. The information qubits teleport coherently
\cite{prl2004harrow,wilde:060303}\ from being the third qubit of each frame as
in (\ref{eq-unencoded-info-qubits}) to being the first qubit of each frame as
in (\ref{eq:encoded-info-qubit-1st-ex}). We exploit the above method of
encoding with infinite-depth operations and decoding with finite-depth
operations for the class of entanglement-assisted quantum convolutional codes
in Section~\ref{sec:eaqcc-iefd}.

\subsection{General Infinite-Depth Operations}

We discuss the action of a general infinite-depth operation on two weight-one
\textquotedblleft X\textquotedblright\ and \textquotedblleft
Z\textquotedblright\ Pauli sequences where each frame has one Pauli matrix.
Our analysis then determines the effect of an infinite-depth operation on an
arbitrary stabilizer or information-qubit matrix. The generators in the
\textquotedblleft Pauli picture\textquotedblright\ are as follows:%
\begin{equation}
\cdots\left\vert
\begin{array}
[c]{c}%
I\\
I
\end{array}
\right.  \left\vert
\begin{array}
[c]{c}%
X\\
Z
\end{array}
\right\vert \left.
\begin{array}
[c]{c}%
I\\
I
\end{array}
\right\vert \cdots, \label{eq:ie-unencoded-Paulis}%
\end{equation}
or as follows in the \textquotedblleft binary-polynomial
picture\textquotedblright:%
\[
\left[  \left.
\begin{array}
[c]{c}%
0\\
1
\end{array}
\right\vert
\begin{array}
[c]{c}%
1\\
0
\end{array}
\right]  .
\]
An infinite-depth $1/f\left(  D\right)  $ operation, where $f\left(  D\right)
$ is an arbitrary polynomial, should transform the above matrix to the
following one:%
\[
\left[  \left.
\begin{array}
[c]{c}%
0\\
f\left(  D^{-1}\right)
\end{array}
\right\vert
\begin{array}
[c]{c}%
1/f\left(  D\right) \\
0
\end{array}
\right]  .
\]
A circuit that performs this transformation preserves the shifted symplectic
product because $f\left(  D^{-1}\right)  \cdot1/f\left(  D^{-1}\right)  =1$.
The circuit should operate on a few qubits at a time and should be
shift-invariant---the same device or physical routines implement it.

First perform the long division expansion of binary rational polynomial
$1/f\left(  D\right)  $. This expansion has a particular repeating pattern
with period $l$. For example, suppose that $f\left(  D\right)  =1+D+D^{3}$.
Its long-division expansion is $1+D+D^{2}+D^{4}+D^{7}+D^{8}+D^{9}%
+D^{11}+\ldots$ and exhibits a repeating pattern with period seven. We want a
circuit that realizes the following Pauli generators%
\begin{equation}
\cdots\left\vert
\begin{array}
[c]{c}%
I\\
Z
\end{array}
\right.  \left\vert
\begin{array}
[c]{c}%
I\\
I
\end{array}
\right\vert
\begin{array}
[c]{c}%
I\\
Z
\end{array}
\left\vert
\begin{array}
[c]{c}%
X\\
Z
\end{array}
\right\vert
\begin{array}
[c]{c}%
X\\
I
\end{array}
\left\vert
\begin{array}
[c]{c}%
X\\
I
\end{array}
\right\vert
\begin{array}
[c]{c}%
I\\
I
\end{array}
\left\vert
\begin{array}
[c]{c}%
X\\
I
\end{array}
\right\vert \left.
\begin{array}
[c]{c}%
I\\
I
\end{array}
\right\vert \cdots, \label{eq:ie-desired-paulis}%
\end{equation}
where the pattern in the $X$ matrices is the same as the repeating polynomial
$1/f\left(  D\right)  $ and continues infinitely to the right, and the pattern
on the $Z$ matrices is the same as that in $f\left(  D^{-1}\right)  $\ and
terminates at the left. The above Pauli sequence is equivalent to the
following binary rational polynomial matrix:%
\[
\left[  \left.
\begin{array}
[c]{c}%
0\\
1+D^{-1}+D^{-3}%
\end{array}
\right\vert
\begin{array}
[c]{c}%
1/\left(  1+D+D^{3}\right) \\
0
\end{array}
\right]  .
\]

We now discuss a method that realizes an arbitrary rational polynomial
$1/f\left(  D\right)  $ as an infinite-depth operation. Our method for
encoding the generators in (\ref{eq:ie-desired-paulis}) from those in
(\ref{eq:ie-unencoded-Paulis}) consists of a \textquotedblleft
sliding-window\textquotedblright\ technique that determines transformation
rules for the circuit. The circuit is an additive, shift-invariant filtering
operation. It resembles an infinite-impulse response filter because the
sequence it produces extends infinitely. In general, the number $N$\ of qubits
that the encoding unitary operates on is as follows%
\[
N=\deg\left(  f\left(  D\right)  \right)  -\mathrm{del}\left(  f\left(
D\right)  \right)  +1,
\]
where $\deg\left(  f\left(  D\right)  \right)  $ and $\mathrm{del}\left(
f\left(  D\right)  \right)  $ are the respective highest and lowest powers of
polynomial $f\left(  D\right)  $. Therefore, our exemplary encoding unitary
operates on four qubits at a time. We delay the original sequence
in\ (\ref{eq:ie-unencoded-Paulis}) by three frames. These initial frames are
\textquotedblleft scratch\textquotedblright\ frames that give the encoding
unitary enough \textquotedblleft room\textquotedblright\ to generate the
desired Paulis in (\ref{eq:ie-desired-paulis}). The first set of
transformation rules is as follows%
\begin{equation}
\left.
\begin{array}
[c]{c}%
I\\
I
\end{array}
\right\vert
\begin{array}
[c]{c}%
I\\
I
\end{array}
\left\vert
\begin{array}
[c]{c}%
I\\
I
\end{array}
\right\vert
\begin{array}
[c]{c}%
X\\
Z
\end{array}
\rightarrow\left.
\begin{array}
[c]{c}%
I\\
Z
\end{array}
\right\vert
\begin{array}
[c]{c}%
I\\
I
\end{array}
\left\vert
\begin{array}
[c]{c}%
I\\
Z
\end{array}
\right\vert
\begin{array}
[c]{c}%
X\\
Z
\end{array}
, \label{eq:first-rule}%
\end{equation}
and generates the first four elements of the pattern in
(\ref{eq:ie-desired-paulis}). Now that the encoding unitary has acted on the
first four frames, we need to shift our eyes to the right by one frame in the
sequence in (\ref{eq:ie-desired-paulis})\ to determine the next set of rules.
So we shift the above outputs by one frame to the \textit{left} (assuming that
only identity matrices lie to the right) and determine the next set of
transformation rules that generate the next elements of the sequence in
(\ref{eq:ie-desired-paulis}):%
\[
\left.
\begin{array}
[c]{c}%
I\\
I
\end{array}
\right\vert
\begin{array}
[c]{c}%
I\\
Z
\end{array}
\left\vert
\begin{array}
[c]{c}%
X\\
Z
\end{array}
\right\vert
\begin{array}
[c]{c}%
I\\
I
\end{array}
\rightarrow\left.
\begin{array}
[c]{c}%
I\\
I
\end{array}
\right\vert
\begin{array}
[c]{c}%
I\\
Z
\end{array}
\left\vert
\begin{array}
[c]{c}%
X\\
Z
\end{array}
\right\vert
\begin{array}
[c]{c}%
X\\
I
\end{array}
.
\]
Shift the above outputs to the left by one frame to determine the next set of
transformation rules:%
\[
\left.
\begin{array}
[c]{c}%
I\\
Z
\end{array}
\right\vert
\begin{array}
[c]{c}%
X\\
Z
\end{array}
\left\vert
\begin{array}
[c]{c}%
X\\
I
\end{array}
\right\vert
\begin{array}
[c]{c}%
I\\
I
\end{array}
\rightarrow\left.
\begin{array}
[c]{c}%
I\\
Z
\end{array}
\right\vert
\begin{array}
[c]{c}%
X\\
Z
\end{array}
\left\vert
\begin{array}
[c]{c}%
X\\
I
\end{array}
\right\vert
\begin{array}
[c]{c}%
X\\
I
\end{array}
.
\]
We obtain the rest of the transformation rules by continuing this sliding
process, and we stop when the pattern in the sequence in
(\ref{eq:ie-desired-paulis}) begins to repeat:%
\[
\left.
\begin{array}
[c]{c}%
X\\
Z\\
X\\
X\\
I\\
X
\end{array}
\right\vert
\begin{array}
[c]{c}%
X\\
I\\
X\\
I\\
X\\
I
\end{array}
\left\vert
\begin{array}
[c]{c}%
X\\
I\\
I\\
X\\
I\\
I
\end{array}
\right\vert
\begin{array}
[c]{c}%
I\\
I\\
I\\
I\\
I\\
I
\end{array}
\rightarrow\left.
\begin{array}
[c]{c}%
X\\
Z\\
X\\
X\\
I\\
X
\end{array}
\right\vert
\begin{array}
[c]{c}%
X\\
I\\
X\\
I\\
X\\
I
\end{array}
\left\vert
\begin{array}
[c]{c}%
X\\
I\\
I\\
X\\
I\\
I
\end{array}
\right\vert
\begin{array}
[c]{c}%
I\\
I\\
X\\
I\\
I\\
X
\end{array}
.
\]
The above set of rules determines the encoding unitary and only a few of them
are actually necessary. We can multiply the rules together to form equivalent
rules because the circuit obeys additivity (in the \textquotedblleft
binary-polynomial picture\textquotedblright). The rules become as follows
after rearranging into a standard form:%
\[
\left.
\begin{array}
[c]{c}%
Z\\
I\\
I\\
I\\
X\\
I\\
I\\
I
\end{array}
\right\vert
\begin{array}
[c]{c}%
I\\
Z\\
I\\
I\\
I\\
X\\
I\\
I
\end{array}
\left\vert
\begin{array}
[c]{c}%
I\\
I\\
Z\\
I\\
I\\
I\\
X\\
I
\end{array}
\right\vert
\begin{array}
[c]{c}%
I\\
I\\
I\\
Z\\
I\\
I\\
I\\
X
\end{array}
\rightarrow\left.
\begin{array}
[c]{c}%
Z\\
I\\
I\\
Z\\
X\\
I\\
I\\
I
\end{array}
\right\vert
\begin{array}
[c]{c}%
I\\
Z\\
I\\
I\\
I\\
X\\
I\\
I
\end{array}
\left\vert
\begin{array}
[c]{c}%
I\\
I\\
Z\\
Z\\
I\\
I\\
X\\
I
\end{array}
\right\vert
\begin{array}
[c]{c}%
I\\
I\\
I\\
Z\\
X\\
I\\
X\\
X
\end{array}
.
\]
A CNOT\ from qubit one to qubit four and a CNOT\ from qubit three to qubit
four suffice to implement this circuit. We repeatedly apply these operations
shifting by one frame at a time to implement the infinite-depth operation. We
could have observed that these gates suffice to implement the
\textquotedblleft Z\textquotedblright\ transformation in the first set of
transformation rules in (\ref{eq:first-rule}), but we wanted to show how this
method generates the full periodic \textquotedblleft X\textquotedblright%
\ sequence in (\ref{eq:ie-desired-paulis}). Figure~\ref{fig:inf-depth-example}%
\ shows how the above encoding unitary acts on a stream of quantum
information.%
\begin{figure}
[ptb]
\begin{center}
\includegraphics[
natheight=4.633700in,
natwidth=5.233800in,
height=2.7484in,
width=3.1012in
]%
{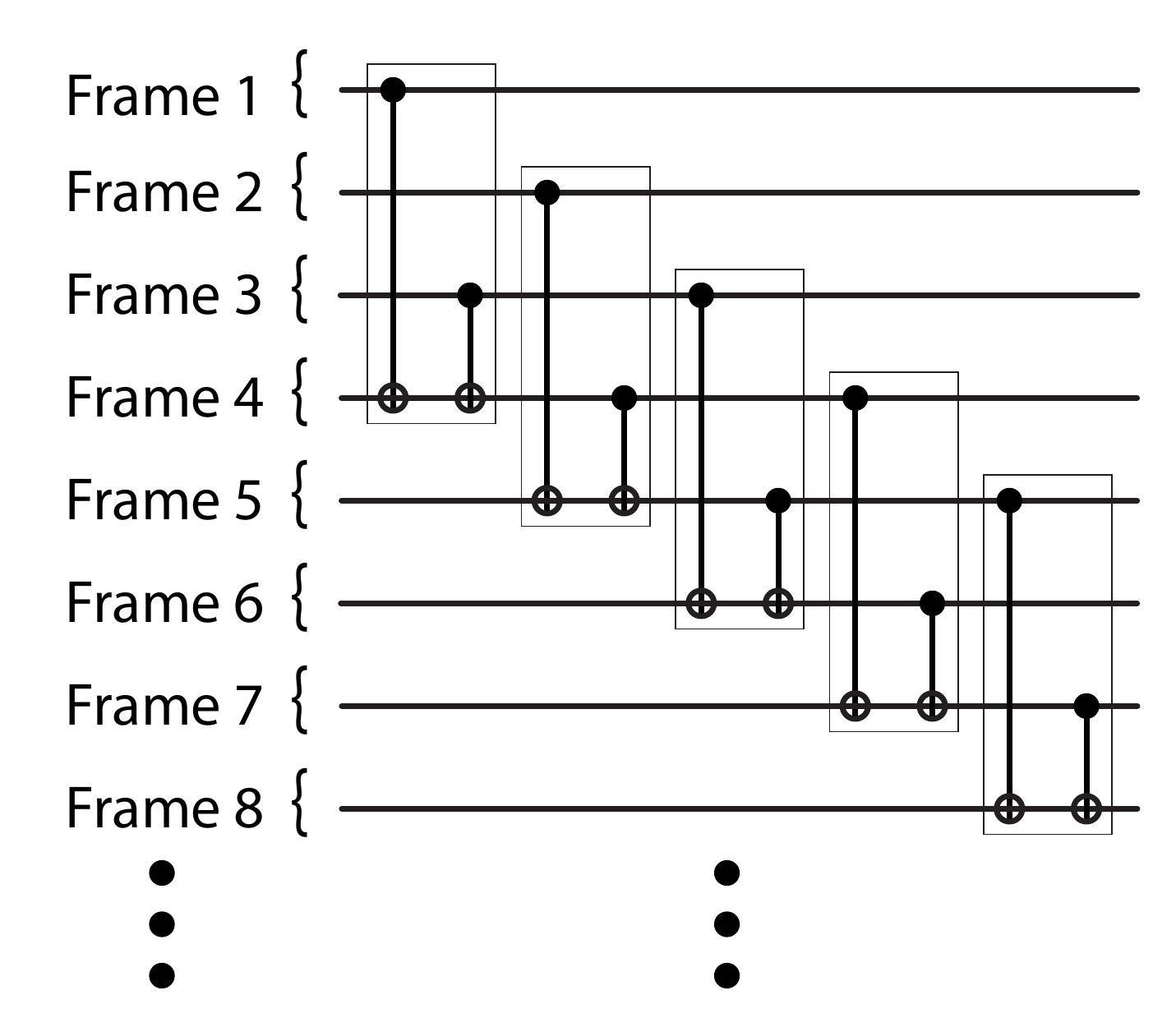}%
\caption{Another example of an infinite-depth operation. An infinite-depth
operation acts on qubit $i$ in every frame. This particular infinite-depth
operation multiplies column $i$ on the \textquotedblleft X\textquotedblright%
\ side of the binary polynomial matrix by $1/\left(  1+D+D^{3}\right)  $ and
multiplies column $i$ on the \textquotedblleft Z\textquotedblright\ side of
the binary polynomial matrix by $1+D^{-1}+D^{-3}$.}%
\label{fig:inf-depth-example}%
\end{center}
\end{figure}

We can determine the encoding unitary for an arbitrary rational polynomial
$1/f\left(  D\right)  $ using a similar method. Suppose that $\mathrm{del}%
\left(  f\left(  D\right)  \right)  =n$ and suppose $n\neq0$ as in the above
case. First delay or advance the frames if $n>0$ or if $n<0$ respectively.
Determine the CNOT\ gates that transform the \textquotedblleft
Z\textquotedblright\ Pauli sequence%
\[
\left[  \left.
\begin{array}
[c]{c}%
1
\end{array}
\right\vert
\begin{array}
[c]{c}%
0
\end{array}
\right]
\]
to%
\[
\left[  \left.
\begin{array}
[c]{c}%
D^{n}f\left(  D^{-1}\right)
\end{array}
\right\vert
\begin{array}
[c]{c}%
0
\end{array}
\right]  .
\]
These CNOT\ gates form the encoding circuit that transform both the
\textquotedblleft X\textquotedblright\ and \textquotedblleft
Z\textquotedblright\ Pauli sequences. We perform the encoding unitary, shift
by one frame, perform it again, and keep repeating. Our method encodes any
arbitrary polynomial $1/f\left(  D\right)  $ on the \textquotedblleft
X\textquotedblright\ side and $f\left(  D^{-1}\right)  $ on the
\textquotedblleft Z\textquotedblright\ side.

We can implement the \textquotedblleft time-reversed\textquotedblright%
\ polynomial $1/f\left(  D^{-1}\right)  $ on the \textquotedblleft
X\textquotedblright\ side by first delaying the frames by $m=\deg\left(
f\left(  D\right)  \right)  -\mathrm{del}\left(  f\left(  D\right)  \right)  $
frames and performing the circuit corresponding to $1/D^{m}\left(  f\left(
D^{-1}\right)  \right)  $. These operations implement the circuit $D^{m}%
/D^{m}\left(  f\left(  D^{-1}\right)  \right)  =1/f\left(  D^{-1}\right)  $.

\subsection{Infinite-Depth Operations in Practice}

We assume above that each of the infinite-depth operations acts on the entire
countably-infinite stream of qubits. In practice, each infinite-depth
operation acts on a finite number of qubits at a time so that the encoding and
decoding circuits operate in an \textquotedblleft online\textquotedblright%
\ manner. Therefore, each infinite-depth operation approximates its
corresponding rational polynomial. This approximation does not pose a barrier
to implementation. We can implement each of the above infinite-depth
operations by padding the initial qubits of the information qubit stream with
some \textquotedblleft scratch\textquotedblright\ qubits. We first transmit
these \textquotedblleft scratch\textquotedblright\ qubits that contain no
useful quantum information so that the later information qubits enjoy the full
protection of the code. These scratch qubits do not affect the asymptotic rate
of the code and merely serve as a convenience for implementing the
infinite-depth operations. From now on, we adhere to describing infinite-depth
operations with binary rational polynomials because it is more convenient to
do so mathematically.

\subsection{Entanglement-Assisted Quantum Convolutional Codes with
Infinite-Depth Operations}

In the section that follows, our entanglement-assisted quantum convolutional
codes have infinite-depth operations in their encoding circuits. This
possibility is acceptable because the entanglement-assisted communication
paradigm assumes that noiseless encoding is possible and that the receiver's
half of the ebits are noiseless. We later briefly discuss the effects of
relaxing this assumption in a realistic system.

Our decoding circuits in the second class of codes perform finite-depth
operations. Some of our decoding circuits are not the exact inverse of their
corresponding encoding circuits, but the decoding circuits invert the effect
of the encoding circuits because they produce the original stream of
information qubits at their output.

\section{Entanglement-Assisted Quantum Convolutional Codes with Infinite-Depth
Encoding and Finite-Depth Decoding Circuits}

\label{sec:eaqcc-iefd}This section details codes whose encoding circuits have
both infinite-depth and finite-depth operations. We therefore assume that
encoding is noiseless to eliminate the possibility of encoding errors
spreading infinitely into the encoded qubit stream. Their decoding circuits
require finite-depth operations only.

Just as with the previous class, this class of codes is determined by the
properties of their corresponding classical check matrices, as described in
the following lemma.

\begin{lemma}
\label{lemma:ieid}Suppose the Smith form of $E\left(  D\right)  $ does not
admit the form from Lemma~\ref{lemma:fefd}. Then the entanglement-assisted
quantum convolutional code has an encoding circuit with both infinite-depth
and finite-depth operations. Its decoding circuit has finite-depth operations.
\end{lemma}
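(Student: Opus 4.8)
The plan is to pick up exactly where the proof of Lemma~\ref{lemma:fefd} diverges. The hypothesis here is precisely the negation of that of Lemma~\ref{lemma:fefd}: since $\mathrm{rank}\,E(D)=\mathrm{rank}\,H_{1}(D)H_{2}^{T}(D^{-1})=c$ and the two matrices share the same invariant factors by~(\ref{eq:check-relation}), the failure of the Smith form of $E(D)$ to consist only of powers of $D$ means at least one invariant factor $\gamma_{i}(D)$ of $E(D)$ is a genuine, non-monomial polynomial. First I would run the \emph{identical} sequence of mental row operations and finite-depth (Hadamard/CNOT/SWAP) column operations used in Lemma~\ref{lemma:fefd} to drive the quantum check matrix into the layout of~(\ref{eq:final-quantum-check-matrix}), with one change: the diagonal block is now $\Gamma(D)=\mathrm{diag}(\gamma_{1}(D),\dots,\gamma_{c}(D))$ with polynomial entries, while $\Gamma_{F}(D)$ remains monomial because $H_{2}(D)$ is noncatastrophic and Laurent column operations alter its invariant factors only up to powers of $D$, exactly as argued for $F_{2}(D)$ in Lemma~\ref{lemma:fefd}.

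The obstruction is then transparent. In Lemma~\ref{lemma:fefd} the monomial block let us read~(\ref{eq:bare-1st-stab}) as a subcode of the desired stabilizer by \emph{delaying} rows, i.e.\ by multiplying each offending row by a power of $D$ to send $\Gamma(D)\mapsto I$. When $\gamma_{i}(D)$ is not a monomial this trick is unavailable: the only units of $\mathbb{F}_{2}[D,D^{-1}]$ are the powers of $D$, so no finite-depth column operation can rescale a single column by $1/\gamma_{i}(D)$. This is precisely the gap the infinite-depth operations of Section~\ref{sec:infinite-depth-ops} close. For each $i$ I would apply, to the $i$-th of the first $c$ columns, the infinite-depth operation that divides that column on the ``Z'' side by $\gamma_{i}(D)$ while multiplying it on the ``X'' side by the finite polynomial $\gamma_{i}(D^{-1})$; as in Section~\ref{sec:infinite-depth-ops} this is shift-invariant and preserves the shifted symplectic product. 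On the \emph{structured} matrix~(\ref{eq:final-quantum-check-matrix}) this sends $\Gamma(D)$ on the ``Z'' side to the identity, depositing only the finite-weight factor $\Gamma(D^{-1})$ on the ``X'' side (each column entry is an exact multiple of $\gamma_{i}$, so the division is exact), even though the very same circuit would carry a generic finite-weight generator to infinite weight---which is what certifies it as infinite-depth. The residual non-commutation between the block-$I$ ``Z'' rows and the ``X''-side identity rows is now carried by a finite-weight factor, so the entanglement augmentation of~(\ref{eq:bare-1st-stab})---adding $c$ ebit columns on the receiver's side---resolves it and yields a valid commuting stabilizer on $c$ ebits, $2(n-c)-k_{1}-k_{2}$ ancillas, and $k_{1}+k_{2}-n+c$ information qubits, with $c$ equal to the rank of $H_{1}(D)H_{2}^{T}(D^{-1})$ as in Theorem~\ref{thm:main}. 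Reading the finite-depth column operations of Lemmas~\ref{lemma:general-ops} and~\ref{lemma:fefd} together with these infinite-depth operations in reverse order then gives the encoding circuit, which is legitimate under the standing assumption that encoding is noiseless.

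It remains to exhibit a \emph{finite-depth} decoder, and here I would follow the template of the worked example in Section~\ref{sec:infinite-depth-ops}, where an infinite-depth encoding was inverted by finite-depth gates alone. The key is to propagate not just the stabilizer but also the information-qubit matrix through the circuit. Running the finite-depth column operations forward undoes every finite-depth part of the encoding; the only residue left by the infinite-depth step is a rational factor $1/\gamma_{i}(D)$ appearing in certain columns of both the stabilizer and the information-qubit matrix, exactly as $1/(1+D)$ appears in~(\ref{eq:id-encoded-info-qubits}). Because that factor sits in the \emph{same} columns of a stabilizer generator---compare~(\ref{eq:id-decoded-stabilizer})---I would add that generator to the offending logical-operator row, a ``mental'' stabilizer multiplication that leaves the encoded state invariant, whereupon the rational entries cancel and each logical operator localizes to the information-qubit position of a single frame, just as~(\ref{eq:id-decoded-info}) collapses to~(\ref{eq:encoded-info-qubit-1st-ex}). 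Every gate in this pass is finite-depth, so the decoder is finite-depth and an uncorrected channel error spreads to only finitely many decoded information qubits.

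I expect the genuine work to lie in this last step rather than in the algebraic reduction. Establishing finite-depth decodability requires proving, for arbitrary diagonal $\Gamma(D)$, that \emph{every} infinite (rational) factor introduced into the logical operators by the division-by-$\gamma_{i}(D)$ step lies in the row space of the rational stabilizer and is therefore removable by stabilizer multiplication. I would verify this by tracking how each $1/\gamma_{i}(D)$ enters the information-qubit matrix and pairing it against the stabilizer generator carrying the matching $1/\gamma_{i}(D)$---the same generator whose finite-weight representative (after multiplication by $\gamma_{i}(D)$) the receiver measures online. Making this pairing precise in general, and confirming that the resulting localized operators still obey the original commutation relations, is the crux; the remainder is the bookkeeping of Lemmas~\ref{lemma:general-ops} and~\ref{lemma:fefd} carried over essentially unchanged.
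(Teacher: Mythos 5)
Your opening step fails, and it fails precisely in the case this lemma is about. You claim that the ``identical sequence'' of operations from Lemma~\ref{lemma:fefd} drives the quantum check matrix into the layout of~(\ref{eq:final-quantum-check-matrix}), only now with a non-monomial $\Gamma(D)$ on the diagonal and \emph{zeros} everywhere else in those rows. But the step of Lemma~\ref{lemma:fefd} that clears the coupling block $F_{1}\left(  D\right)$---the entries linking the $\Gamma\left(  D\right)$ rows to the last $k_{2}$ columns---is a column operation that multiplies column $i$ by $\gamma_{i}^{-1}\left(  D\right)  \left(  F_{1}\right)  _{ij}\left(  D\right)$ and adds the result to column $j$; this coefficient is a Laurent polynomial (hence realizable by finite-depth gates) only because each $\gamma_{i}$ is a power of $D$ under the hypothesis of Lemma~\ref{lemma:fefd}. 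When some $\gamma_{i}\left(  D\right)$ is a genuine polynomial, the required coefficient is rational, and no alternative move is available: the only other rows with nonzero ``Z'' entries carry $F_{2}\left(  D\right)$, and column operations from the last $k_{2}$ columns back into the first $n-k_{2}$ columns would corrupt the identity blocks on the ``X'' side. The paper's own Example~\ref{ex:brun-example} is a concrete counterexample to your claimed reduction: its standard form has ``Z'' row $\left[  1+D+D^{2}\ \ 1\right]$, and the entry $1$ is not divisible by $1+D+D^{2}$, so it can never be cleared by finite-depth operations. This is exactly why the paper's proof splits the Smith form of $E\left(  D\right)$ into a monomial block $\Gamma_{1}\left(  D\right)$ and a non-monomial block $\Gamma_{2}\left(  D\right)$, clears only the couplings attached to monomial blocks, reduces the irreducible residue to a lower-triangular block $L\left(  D\right)$ using column operations among the last columns only, and then encodes the pair $\left[  \Gamma_{2}\left(  D\right)  \ \ L\left(  D\right)  \right]$ \emph{jointly} with $c-s$ ebits and infinite-depth operations, cf.~(\ref{eq:ieid-desired-matrix}). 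Your proposal contains no analogue of $L\left(  D\right)$ at all, and the matrix you propose to apply infinite-depth operations to is unattainable.

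The omission propagates into the rest of the argument. Your ebit-augmentation step is wrong as stated: after your infinite-depth operations the offending rows read $\left[  I\,|\,0\right]$ and $\left[  0\,|\,\Gamma\left(  D^{-1}\right)  \right]$ on Alice's side, and bolting on the bare ebit columns of~(\ref{eq:bare-1st-stab}) gives shifted symplectic product $1+\gamma_{i}\left(  D^{-1}\right)  \neq0$, not a commuting stabilizer; since Alice's encoding cannot touch Bob's columns, the consistent object is the one the paper obtains by pushing the \emph{unencoded} ebit stabilizer forward through Alice's operations (so commutation is automatic) and then noting that Alice's part is row-equivalent, via multiplication by the polynomials of $\Gamma_{2}\left(  D\right)$, to finite-weight measurable generators, cf.~(\ref{eq:next-stab-submatrix})--(\ref{eq:augmented-submatrix}). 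Finally, you yourself flag finite-depth decodability as the ``crux'' and leave it unproven; in the paper it is not an abstract pairing argument but an explicit circuit---Bob's Hadamards and CNOT\ gates corresponding to the entries of $L\left(  D\right)$, followed by the row operation adding $L^{T}\left(  D^{-1}\right)$ times a stabilizer row to the information-qubit matrix, cf.~(\ref{eq:decoding-iefd})---and it cannot even be formulated without the block $L\left(  D\right)$ that your reduction discarded. The infinite-depth ingredient you invoke is indeed the right one, but deployed on the wrong canonical form it does not yield a proof.
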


%

\begin{proof}%
We perform all of the operations from Lemma~\ref{lemma:general-ops}. The Smith
form of $E\left(  D\right)  $ is in general as follows%
\[
A_{E}\left(  D\right)
\begin{bmatrix}
\Gamma_{1}\left(  D\right)  & 0 & 0\\
0 & \Gamma_{2}\left(  D\right)  & 0\\
0 & 0 & 0
\end{bmatrix}
B_{E}\left(  D\right)  ,
\]
where $A_{E}\left(  D\right)  $ is $\left(  n-k_{1}\right)  \times\left(
n-k_{1}\right)  $, $\Gamma_{1}\left(  D\right)  $ is an $s\times s$ diagonal
matrix whose entries are powers of $D$, $\Gamma_{2}\left(  D\right)  $ is a
$\left(  c-s\right)  \times\left(  c-s\right)  $ diagonal matrix whose entries
are arbitrary polynomials, and $B_{E}\left(  D\right)  $ is $\left(
n-k_{2}\right)  \times\left(  n-k_{2}\right)  $. Perform the row operations in
$A_{E}^{-1}\left(  D\right)  $ and the column operations in $B_{E}^{-1}\left(
D\right)  $ on the quantum check matrix in (\ref{eq:last-eq-fefd}). Counteract
the effect of the column operations on the identity matrix in the
\textquotedblleft X\textquotedblright\ matrix by performing row operations.
The quantum check matrix in (\ref{eq:last-eq-fefd}) becomes%
\[
\left[  \left.
\begin{array}
[c]{cccc}%
\Gamma_{1}\left(  D\right)  & 0 & 0 & F_{1}\left(  D\right) \\
0 & \Gamma_{2}\left(  D\right)  & 0 & F_{2}\left(  D\right) \\
0 & 0 & 0 & F_{3}\left(  D\right) \\
0 & 0 & 0 & 0
\end{array}
\right\vert
\begin{array}
[c]{cc}%
0 & 0\\
0 & 0\\
0 & 0\\
I & 0
\end{array}
\right]  ,
\]
where $F_{1}\left(  D\right)  $, $F_{2}\left(  D\right)  $, and $F_{3}\left(
D\right)  $ are the respective $s$, $c-s$, and $n-k_{1}-c$ rows of $A_{E}%
^{-1}\left(  D\right)  F\left(  D\right)  $. The Smith form of $F_{3}\left(
D\right)  $ is as follows%
\[
F_{3}\left(  D\right)  =A_{F_{3}}\left(  D\right)
\begin{bmatrix}
\Gamma_{F_{3}}\left(  D\right)  & 0
\end{bmatrix}
B_{F_{3}}\left(  D\right)  ,
\]
where $A_{F_{3}}\left(  D\right)  $ is $\left(  n-k_{1}-c\right)
\times\left(  n-k_{1}-c\right)  $, $\Gamma_{F_{3}}\left(  D\right)  $ is an
$\left(  n-k_{1}-c\right)  \times\left(  n-k_{1}-c\right)  $ diagonal matrix
whose entries are powers of $D$, and $B_{F_{3}}\left(  D\right)  $ is
$k_{2}\times k_{2}$. The entries of $\Gamma_{F_{3}}\left(  D\right)  $ are
powers of $D$ because the original check matrix $H_{2}\left(  D\right)  $ is
noncatastrophic and column and row operations with Laurent polynomials change
the invariant factors only by a power of $D$. Perform the row operations in
$A_{F_{3}}^{-1}\left(  D\right)  $ and the column operations in $B_{F_{3}%
}^{-1}\left(  D\right)  $. The quantum check matrix becomes%
\[
\left[  \left.
\begin{array}
[c]{ccccc}%
\Gamma_{1}\left(  D\right)  & 0 & 0 & F_{1a}^{^{\prime}}\left(  D\right)  &
F_{1b}^{^{\prime}}\left(  D\right) \\
0 & \Gamma_{2}\left(  D\right)  & 0 & F_{2a}^{^{\prime}}\left(  D\right)  &
F_{2b}^{^{\prime}}\left(  D\right) \\
0 & 0 & 0 & \Gamma_{F_{3}}\left(  D\right)  & 0\\
0 & 0 & 0 & 0 & 0
\end{array}
\right\vert
\begin{array}
[c]{cc}%
0 & 0\\
0 & 0\\
0 & 0\\
I & 0
\end{array}
\right]  ,
\]
where $F_{1a}^{^{\prime}}\left(  D\right)  $, $F_{1b}^{^{\prime}}\left(
D\right)  $, $F_{2a}^{^{\prime}}\left(  D\right)  $, $F_{2b}^{^{\prime}%
}\left(  D\right)  $ are the matrices resulting from the column operations in
$B_{F_{3}}^{-1}\left(  D\right)  $. Perform row operations from the entries in
$\Gamma_{F_{3}}\left(  D\right)  $ to the rows above it to clear the entries
in $F_{1a}^{^{\prime}}\left(  D\right)  $ and $F_{2a}^{^{\prime}}\left(
D\right)  $. Use Hadamard and CNOT\ gates to clear the entries in
$F_{1b}^{^{\prime}}\left(  D\right)  $. The quantum check matrix becomes%
\[
\left[  \left.
\begin{array}
[c]{ccccc}%
\Gamma_{1}\left(  D\right)  & 0 & 0 & 0 & 0\\
0 & \Gamma_{2}\left(  D\right)  & 0 & 0 & F_{2b}^{^{\prime}}\left(  D\right)
\\
0 & 0 & 0 & \Gamma_{F_{3}}\left(  D\right)  & 0\\
0 & 0 & 0 & 0 & 0
\end{array}
\right\vert
\begin{array}
[c]{cc}%
0 & 0\\
0 & 0\\
0 & 0\\
I & 0
\end{array}
\right]  .
\]
We can reduce $F_{2b}^{^{\prime}}\left(  D\right)  $ to a lower triangular
form with an algorithm consisting of column operations only. The algorithm
operates on the last $k_{2}+k_{1}-n+c$ columns. It is similar to the Smith
algorithm but does not involve row operations. Consider the first row of
$F_{2b}^{^{\prime}}\left(  D\right)  $. Perform column operations between the
different elements of the row to reduce it to one non-zero entry. Swap this
non-zero entry to the leftmost position. Perform the same algorithm on
elements $2,\ldots,k_{2}+k_{1}-n+c$ of the second row. Continue on for all
rows of $F_{2b}^{^{\prime}}\left(  D\right)  $ to reduce it to a matrix of the
following form\bigskip%
\[
F_{2b}^{^{\prime}}\left(  D\right)  \rightarrow%
\begin{bmatrix}%
\raisebox{0ex}[1.5ex]{$\overbrace{L(D)}^{c-s}$}%
&
\raisebox{0ex}[1.5ex]{$\overbrace{0}^{k_1+k_2-n+s}$}%
\end{bmatrix}
,
\]
where $L\left(  D\right)  $ is a lower triangular matrix. The above quantum
check matrix becomes%
\[
\left[  \left.
\begin{array}
[c]{cccccc}%
\Gamma_{1}\left(  D\right)  & 0 & 0 & 0 & 0 & 0\\
0 & \Gamma_{2}\left(  D\right)  & 0 & 0 & L\left(  D\right)  & 0\\
0 & 0 & 0 & \Gamma_{F_{3}}\left(  D\right)  & 0 & 0\\
0 & 0 & 0 & 0 & 0 & 0
\end{array}
\right\vert
\begin{array}
[c]{cc}%
0 & 0\\
0 & 0\\
0 & 0\\
I & 0
\end{array}
\right]  .
\]
We have completed decomposition of the first set of $s$ rows with $\Gamma
_{1}\left(  D\right)  $, the third set of $n-k_{1}-c$ rows with $\Gamma
_{F_{3}}\left(  D\right)  $, and rows $n-k_{1}+1,\ldots,n-k_{1}+s$ with the
identity matrix on the \textquotedblleft X\textquotedblright\ side.

We now consider an algorithm with infinite-depth operations to encode the
following submatrix of the above quantum check matrix:%
\begin{equation}
\left[  \left.
\begin{array}
[c]{cc}%
\Gamma_{2}\left(  D\right)  & L\left(  D\right) \\
0 & 0
\end{array}
\right\vert
\begin{array}
[c]{cc}%
0 & 0\\
I & 0
\end{array}
\right]  . \label{eq:ieid-desired-matrix}%
\end{equation}
We begin with a set of $c-s$ ebits and $c-s$ information qubits. The following
matrix stabilizes the ebits%
\[
\left[  \left.
\begin{array}
[c]{ccc}%
I & I & 0\\
0 & 0 & 0
\end{array}
\right\vert
\begin{array}
[c]{ccc}%
0 & 0 & 0\\
I & I & 0
\end{array}
\right]  ,
\]
and the following matrix represents the information qubits%
\[
\left[  \left.
\begin{array}
[c]{ccc}%
0 & 0 & I\\
0 & 0 & 0
\end{array}
\right\vert
\begin{array}
[c]{ccc}%
0 & 0 & 0\\
0 & 0 & I
\end{array}
\right]  ,
\]
where all matrices have dimension $\left(  c-s\right)  \times\left(
c-s\right)  $ and Bob possesses the $c-s$ qubits on the \textquotedblleft
left\textquotedblright\ and Alice possesses the $2\left(  c-s\right)  $ qubits
on the \textquotedblleft right.\textquotedblright\ We track both the
stabilizer and the information qubits as they progress through some encoding
operations. Alice performs CNOT\ and Hadamard gates on her $2\left(
c-s\right)  $ qubits. These gates multiply the middle $c-s$ columns of the
\textquotedblleft Z\textquotedblright\ matrix by $L\left(  D\right)  $ and add
the result to the last $c-s$ columns and multiply the last $c-s$ columns of
the \textquotedblleft X\textquotedblright\ matrix by $L^{T}\left(
D^{-1}\right)  $ and add the result to the middle $c-s$ columns. The
stabilizer becomes%
\[
\left[  \left.
\begin{array}
[c]{ccc}%
I & I & L\left(  D\right) \\
0 & 0 & 0
\end{array}
\right\vert
\begin{array}
[c]{ccc}%
0 & 0 & 0\\
I & I & 0
\end{array}
\right]  ,
\]
and the information-qubit matrix becomes%
\[
\left[  \left.
\begin{array}
[c]{ccc}%
0 & 0 & I\\
0 & 0 & 0
\end{array}
\right\vert
\begin{array}
[c]{ccc}%
0 & 0 & 0\\
0 & L^{T}\left(  D^{-1}\right)  & I
\end{array}
\right]  .
\]
Alice performs infinite-depth operations on her first $c-s$ qubits
corresponding to the rational polynomials $\gamma_{2,1}^{-1}\left(
D^{-1}\right)  $, $\ldots$, $\gamma_{2,c-s}^{-1}\left(  D^{-1}\right)  $ in
$\Gamma_{2}^{-1}\left(  D^{-1}\right)  $. The stabilizer matrix becomes%
\[
\left[  \left.
\begin{array}
[c]{ccc}%
I & \Gamma_{2}\left(  D\right)  & L\left(  D\right) \\
0 & 0 & 0
\end{array}
\right\vert
\begin{array}
[c]{ccc}%
0 & 0 & 0\\
I & \Gamma_{2}^{-1}\left(  D^{-1}\right)  & 0
\end{array}
\right]  ,
\]
and the information-qubit matrix becomes%
\[
\left[  \left.
\begin{array}
[c]{ccc}%
0 & 0 & I\\
0 & 0 & 0
\end{array}
\right\vert
\begin{array}
[c]{ccc}%
0 & 0 & 0\\
0 & L^{T}\left(  D^{-1}\right)  \Gamma_{2}^{-1}\left(  D^{-1}\right)  & I
\end{array}
\right]  .
\]
Alice's part of the above stabilizer matrix is equivalent to the quantum check
matrix in (\ref{eq:ieid-desired-matrix}) by row operations (premultiplying the
second set of rows in the stabilizer by $\Gamma_{2}\left(  D\right)  $.) Bob
can therefore make stabilizer measurements that have finite weight and that
are equivalent to the desired stabilizer.

We now describe a method to decode the above encoded stabilizer and
information-qubit matrix so that the information qubits appear at the output
of the decoding circuit. Bob performs Hadamard gates on his first and third
sets of $c-s$ qubits, performs CNOT\ gates from the first set of qubits to the
third set of qubits corresponding to the entries in $L\left(  D\right)  $, and
performs the Hadamard gates again. The stabilizer becomes%
\begin{equation}
\left[  \left.
\begin{array}
[c]{ccc}%
I & \Gamma_{2}\left(  D\right)  & 0\\
0 & 0 & 0
\end{array}
\right\vert
\begin{array}
[c]{ccc}%
0 & 0 & 0\\
I & \Gamma_{2}^{-1}\left(  D^{-1}\right)  & 0
\end{array}
\right]  , \label{eq:decoding-iefd}%
\end{equation}
and the information-qubit matrix becomes%
\[
\left[  \left.
\begin{array}
[c]{ccc}%
0 & 0 & I\\
0 & 0 & 0
\end{array}
\right\vert
\begin{array}
[c]{ccc}%
0 & 0 & 0\\
L^{T}\left(  D^{-1}\right)  & L^{T}\left(  D^{-1}\right)  \Gamma_{2}%
^{-1}\left(  D^{-1}\right)  & I
\end{array}
\right]  .
\]
Bob finishes decoding at this point because we can equivalently express the
information-qubit matrix as follows%
\[
\left[  \left.
\begin{array}
[c]{ccc}%
0 & 0 & I\\
0 & 0 & 0
\end{array}
\right\vert
\begin{array}
[c]{ccc}%
0 & 0 & 0\\
0 & 0 & I
\end{array}
\right]  ,
\]
by multiplying the last $c-s$ rows of the stabilizer by $L^{T}\left(
D^{-1}\right)  $ and adding to the last $c-s$ rows of the information-qubit matrix.

The overall procedure for encoding is to begin with a set of $c$ ebits,
$2\left(  n-c\right)  -k_{1}-k_{2}$ ancilla qubits, and $k_{1}+k_{2}-n+c$
information qubits. We perform the infinite-depth operations detailed in the
paragraph with\ (\ref{eq:ieid-desired-matrix}) for $c-s$ of the ebits. We then
perform the finite-depth operations detailed in the proofs of this lemma and
Lemma~\ref{lemma:general-ops} in reverse order. The resulting stabilizer has
equivalent error-correcting properties to the quantum check matrix in
(\ref{eq:orig-gens}).

The receiver decodes by first performing all of the finite-depth operations in
the encoding circuit in reverse order. The receiver then decodes the
infinite-depth operations by the procedure listed in the paragraph with
(\ref{eq:decoding-iefd}) so that the original $k_{1}+k_{2}-n+c$ information
qubits per frame are available for processing at the receiving end.%
\end{proof}%

\subsection{Special Case of Entanglement-Assisted Codes with Infinite-Depth
Encoding Circuits and Finite-Depth Decoding Circuits}

We now detail a special case of the above codes in this final section. These
codes are interesting because the information qubits teleport coherently to
other physical qubits when encoding and decoding is complete.

\begin{lemma}
\label{lemma:iefd}Suppose that the Smith form of $F\left(  D\right)  $ in
(\ref{eq:last-eq-fefd}) is%
\[
F\left(  D\right)  =A_{F}\left(  D\right)
\begin{bmatrix}
\Gamma_{F}\left(  D\right)  & 0
\end{bmatrix}
B_{F}\left(  D\right)  ,
\]
where $A_{F}\left(  D\right)  $ is $\left(  n-k_{1}\right)  \times\left(
n-k_{1}\right)  $, $\Gamma_{F}\left(  D\right)  $ is an $\left(
n-k_{1}\right)  \times\left(  n-k_{1}\right)  $\ diagonal matrix whose entries
are powers of $D$, and $B_{F}\left(  D\right)  $ is $k_{2}\times k_{2}$. Then
the resulting entanglement-assisted code admits an encoding circuit with both
infinite-depth and finite-depth operations and admits a decoding circuit with
finite-depth operations only. The information qubits also teleport coherently
to other physical qubits for this special case of codes.
\end{lemma}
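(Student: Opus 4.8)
The plan is to continue from the decomposition established in Lemma~\ref{lemma:general-ops}, the quantum check matrix in (\ref{eq:last-eq-fefd}) with blocks $E(D)$ and $F(D)$, and to exploit the hypothesis that the Smith form of $F(D)$ has a square, invertible $\Gamma_F(D)$ of size $(n-k_1)\times(n-k_1)$ whose diagonal entries are powers of $D$ (so $F(D)$ has full row rank). First I would premultiply the top $n-k_1$ rows by $A_F^{-1}(D)$, a free row operation that sends $E(D)\mapsto A_F^{-1}(D)E(D)$ and $F(D)\mapsto\left[\Gamma_F(D)\ 0\right]B_F(D)$, and then postmultiply the last $k_2$ columns by $B_F^{-1}(D)$ with CNOT, SWAP, and Hadamard gates. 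The key point is that the last $k_2$ columns of the ``X'' matrix in (\ref{eq:last-eq-fefd}) are identically zero, so these column operations do not disturb the identity block and simply reduce $F(D)$ to $\left[\Gamma_F(D)\ 0\right]$.

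Next I would normalize $\Gamma_F(D)$ to the identity by multiplying each top row by the appropriate power $D^{-a_i}$, absorbing these monomials into the transformed $E$-block. Because every pivot is a power of $D$, its inverse is a Laurent monomial, so the subsequent column operations that use these pivots to clear the $E$-block against the ``Z'' side are all realizable with finite-depth CNOTs (using advances as well as delays). This is precisely what distinguishes the present construction from the general Lemma~\ref{lemma:ieid}: here the power-of-$D$ pivots arise from $F(D)$ rather than from $E(D)$, and it is this feature that keeps the decoder finite-depth. Since the code still uses infinite-depth operations in encoding, it sits within the infinite-depth-encoding, finite-depth-decoding family of Lemma~\ref{lemma:ieid}; what is genuinely new is the clean pivot structure produced by $F(D)$ and the resulting coherent teleportation.

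To establish the teleportation I would begin from the unencoded state of $c$ ebits, $2(n-c)-k_1-k_2$ ancilla qubits, and $k_1+k_2-n+c$ information qubits, and track \emph{both} the stabilizer and the information-qubit matrix through the encoding circuit, exactly as in the worked example of Section~\ref{sec:infinite-depth-ops}. The finite-depth operations (the Laurent column operations above together with Hadamard and CNOT gates) install the $F$-part of the code, while an infinite-depth operation installs the remaining, non-power-of-$D$ part coming from $E(D)$. I would then run the finite-depth decoding circuit and follow the information-qubit matrix: because the infinite-depth encoding operation cannot be inverted in place by any finite-depth circuit, the finite-depth decoder necessarily extracts the logical operators $\bar{Z}_i,\bar{X}_i$ onto physical qubits disjoint from those they started on, just as the logical operators migrated from the third qubit to the first in (\ref{eq-unencoded-info-qubits})--(\ref{eq:encoded-info-qubit-1st-ex}). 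This relocation is the coherent teleportation asserted by the lemma.

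The main obstacle is the bookkeeping needed to confirm that every operation preserves the shifted symplectic product, so that the final object is a genuine commuting stabilizer with the same error-correcting properties as the original quantum check matrix in (\ref{eq:orig-gens}) and requires exactly $c$ ebits. In particular, clearing the $E$-block on the ``Z'' side pushes its time-reversed transpose onto the ``X'' side of the generators, and after the $c$ ebit columns are adjoined one must check that the augmented generators commute for every shift. Verifying that the information-qubit logical operators truly land on fresh physical qubits---rather than merely being smeared out by the infinite-depth encoder---while remaining orthogonal to the entire stabilizer under the shifted symplectic product is the delicate part, and the power-of-$D$ structure of $\Gamma_F(D)$ is exactly what makes this achievable with a finite-depth decoder.
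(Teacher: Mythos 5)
Your opening step coincides with the paper's: apply the row operations $A_{F}^{-1}\left(D\right)$ and the column operations $B_{F}^{-1}\left(D\right)$ to bring the quantum check matrix in (\ref{eq:last-eq-fefd}) to the form $\left[\left.\begin{smallmatrix}E^{\prime}\left(D\right) & \Gamma_{F}\left(D\right) & 0\\ 0 & 0 & 0\end{smallmatrix}\right\vert \begin{smallmatrix}0 & 0 & 0\\ I & 0 & 0\end{smallmatrix}\right]$ with $E^{\prime}\left(D\right)=A_{F}^{-1}\left(D\right)E\left(D\right)$. After that, however, your plan diverges in a way that does not work. You propose to use the power-of-$D$ pivots of $\Gamma_{F}\left(D\right)$ to ``clear'' the $E$-block with finite-depth CNOTs. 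But every such gate is a symplectic (Clifford) column operation and therefore preserves the shifted symplectic product; the anticommutativity between the first $n-k_{1}$ rows and the rows carrying the identity on the ``X'' side is exactly $E\left(D\right)$ (of rank $c$), so it can never be cleared, only relocated. Concretely, zeroing $E^{\prime}\left(D\right)$ on the ``Z'' side deposits a dense matrix of Laurent polynomials into the $\Gamma_{F}$ columns on the ``X'' side of the identity rows, and you are then no closer to a form in which ebits can be adjoined: each ebit must pair with a single isolated anticommuting generator pair, which requires the noncommutativity to be \emph{diagonalized}, not moved. The missing idea is precisely the paper's next step: take the Smith form of $E^{\prime}\left(D\right)$ itself, $E^{\prime}\left(D\right)=A_{E^{\prime}}\left(D\right)\mathrm{diag}\left(\Gamma_{1}\left(D\right),\Gamma_{2}\left(D\right),0\right)B_{E^{\prime}}\left(D\right)$, where $\Gamma_{1}\left(D\right)$ is $s\times s$ with power-of-$D$ entries and $\Gamma_{2}\left(D\right)$ is $\left(c-s\right)\times\left(c-s\right)$ with arbitrary polynomial entries. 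This decomposition is what isolates the $s$ generator pairs that need only finite-depth operations from the $c-s$ pairs requiring infinite-depth operations $\Gamma_{2}^{-1}\left(D\right)$, produces the count of exactly $c$ ebits, and yields the submatrix (\ref{eq:submatrix}) on which the explicit ebit-assisted encoding is performed. Note also that the hypothesis on $F\left(D\right)$ enters the paper's proof not through clearing $E$ but through the remnant diagonal blocks $\Gamma_{2}^{\prime}\left(D\right)$ of $\Gamma_{F}\left(D\right)$: since their entries are powers of $D$, one has $\Gamma_{2}^{\prime}\left(D^{-1}\right)=\Gamma_{2}^{\prime-1}\left(D\right)$, and this identity is what allows Bob's decoding circuit to be finite-depth.

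The teleportation claim is also not established by your argument. You infer it from the assertion that ``the infinite-depth encoding operation cannot be inverted in place by any finite-depth circuit,'' but that is a heuristic, not a proof: a priori the finite-depth decoder might fail to localize the logical operators at all, or might return them to their original qubits after multiplication by stabilizer rows. The paper proves coherent teleportation constructively, by tracking the information-qubit matrix from (\ref{eq:init-info-matrix-sub}) through the infinite-depth encoding to (\ref{eq:final-encoded-info-sub}), then through Bob's finite-depth decoding to (\ref{eq:second-class-info-decoded}), and observing that the resulting logical operators (equal to $\Gamma_{2}^{\prime}\left(D^{-1}\right)$ blocks, i.e.\ shifts) act on the second set of $c-s$ qubits in each frame rather than the last set where the information qubits began. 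Your proposal would need both the Smith decomposition of $E^{\prime}\left(D\right)$ and this explicit stabilizer/logical-operator bookkeeping to close the gap.
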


%

\begin{proof}%
We perform all the operations in Lemma~\ref{lemma:general-ops} to obtain the
quantum check matrix in (\ref{eq:last-eq-fefd}). Then perform the row
operations in $A_{F}^{-1}\left(  D\right)  $ and the column operations in
$B_{F}^{-1}\left(  D\right)  $. The quantum check matrix becomes%
\[
\left[  \left.
\begin{array}
[c]{ccc}%
E^{\prime}\left(  D\right)  & \Gamma_{F}\left(  D\right)  & 0\\
0 & 0 & 0
\end{array}
\right\vert
\begin{array}
[c]{ccc}%
0 & 0 & 0\\
I & 0 & 0
\end{array}
\right]  ,
\]
where $E^{\prime}\left(  D\right)  =A_{F}^{-1}\left(  D\right)  E\left(
D\right)  $. The Smith form of $E^{\prime}\left(  D\right)  $ is%
\[
E^{\prime}\left(  D\right)  =A_{E^{\prime}}\left(  D\right)
\begin{bmatrix}
\Gamma_{1}\left(  D\right)  & 0 & 0\\
0 & \Gamma_{2}\left(  D\right)  & 0\\
0 & 0 & 0
\end{bmatrix}
B_{E^{\prime}}\left(  D\right)  ,
\]
where $A_{E^{\prime}}\left(  D\right)  $ is $\left(  n-k_{1}\right)
\times\left(  n-k_{1}\right)  $, $\Gamma_{1}\left(  D\right)  $ is an $s\times
s$ diagonal matrix whose entries are powers of $D$, $\Gamma_{2}\left(
D\right)  $ is a $\left(  c-s\right)  \times\left(  c-s\right)  $ diagonal
matrix whose entries are arbitrary polynomials, and $B_{E^{\prime}}\left(
D\right)  $ is $\left(  n-k_{2}\right)  \times\left(  n-k_{2}\right)  $.

Now perform the row operations in $A_{E^{\prime}}^{-1}\left(  D\right)  $ and
the column operations in $B_{E^{\prime}}^{-1}\left(  D\right)  $. It is
possible to counteract the effect of the row operations on $\Gamma_{F}\left(
D\right)  $ by performing column operations, and it is possible to counteract
the effect of the column operations on the identity matrix in the
\textquotedblleft X\textquotedblright\ matrix by performing row operations.
The quantum check matrix becomes%
\[
\left[  \left.
\begin{array}
[c]{ccccccc}%
\Gamma_{1} & 0 & 0 & \Gamma_{1}^{^{\prime}} & 0 & 0 & 0\\
0 & \Gamma_{2} & 0 & 0 & \Gamma_{2}^{^{\prime}} & 0 & 0\\
0 & 0 & 0 & 0 & 0 & \Gamma_{3}^{^{\prime}} & 0\\
0 & 0 & 0 & 0 & 0 & 0 & 0
\end{array}
\right\vert
\begin{array}
[c]{ccc}%
0 & 0 & 0\\
0 & 0 & 0\\
0 & 0 & 0\\
I & 0 & 0
\end{array}
\right]  ,
\]
where $\Gamma_{1}^{^{\prime}}$, $\Gamma_{2}^{^{\prime}}$, and $\Gamma
_{3}^{^{\prime}}$ represent the respective $s\times s$, $\left(  c-s\right)
\times\left(  c-s\right)  $, and $\left(  n-k_{1}-c\right)  \times\left(
n-k_{1}-c\right)  $ diagonal matrices resulting from counteracting the effect
of row operations $A_{E^{\prime}}^{-1}\left(  D\right)  $ on $\Gamma
_{F}\left(  D\right)  $. (We suppress the $D$ argument in all of the matrices
in the above equation.) We use Hadamard and CNOT\ gates to clear the entries
in $\Gamma_{1}^{^{\prime}}\left(  D\right)  $. The quantum check matrix
becomes%
\[
\left[  \left.
\begin{array}
[c]{ccccccc}%
\Gamma_{1} & 0 & 0 & 0 & 0 & 0 & 0\\
0 & \Gamma_{2} & 0 & 0 & \Gamma_{2}^{^{\prime}} & 0 & 0\\
0 & 0 & 0 & 0 & 0 & \Gamma_{3}^{^{\prime}} & 0\\
0 & 0 & 0 & 0 & 0 & 0 & 0
\end{array}
\right\vert
\begin{array}
[c]{ccc}%
0 & 0 & 0\\
0 & 0 & 0\\
0 & 0 & 0\\
I & 0 & 0
\end{array}
\right]  .
\]
The first $s$ rows with $\Gamma_{1}$ and rows $n-k_{1}-c+1,\ldots,n-k_{1}-c+s$
with the identity matrix on the \textquotedblleft X\textquotedblright\ side
stabilize a set of $s$ ebits. The $n-k_{1}-c$ rows with $\Gamma_{3}^{\prime}$
and the $n-k_{2}-c$ rows with identity in the \textquotedblleft
X\textquotedblright\ matrix stabilize a set of $2\left(  n-c\right)
-k_{1}-k_{2}$\ ancilla qubits (up to Hadamard gates). The $s$ and
$k_{2}-n+k_{1}$ columns with zeros in both the \textquotedblleft
Z\textquotedblright\ and \textquotedblleft X\textquotedblright\ matrices
correspond to information qubits. The decomposition of these rows is now complete.

We need to finish processing the $c-s$ rows with $\Gamma_{2}\left(  D\right)
$ and $\Gamma_{2}^{^{\prime}}\left(  D\right)  $ as entries and the $c-s$ rows
of the identity in the \textquotedblleft X\textquotedblright\ matrix. We
construct a submatrix of the above quantum check matrix:%
\begin{equation}
\left[  \left.
\begin{array}
[c]{cc}%
\Gamma_{2}\left(  D\right)  & \Gamma_{2}^{^{\prime}}\left(  D\right) \\
0 & 0
\end{array}
\right\vert
\begin{array}
[c]{cc}%
0 & 0\\
I & 0
\end{array}
\right]  . \label{eq:submatrix}%
\end{equation}

We describe a procedure to encode the above entries with $c-s$ ebits and $c-s$
information qubits using infinite-depth operations. Consider the following
stabilizer matrix%
\begin{equation}
\left[  \left.
\begin{array}
[c]{ccc}%
I & I & 0\\
0 & 0 & 0
\end{array}
\right\vert
\begin{array}
[c]{ccc}%
0 & 0 & 0\\
I & I & 0
\end{array}
\right]  , \label{eq:initial-stab-submatrix}%
\end{equation}
where all identity and null matrices are $\left(  c-s\right)  \times\left(
c-s\right)  $. The above matrix stabilizes a set of $c-s$ ebits and $c-s$
information qubits. Bob's half of the ebits are the $c-s$ columns on the left
in both the \textquotedblleft Z\textquotedblright\ and \textquotedblleft
X\textquotedblright\ matrices and Alice's half are the next $c-s$ columns. We
also track the logical operators for the information qubits to verify that the
circuit encodes and decodes properly. The information-qubit matrix is as
follows%
\begin{equation}
\left[  \left.
\begin{array}
[c]{ccc}%
0 & 0 & I\\
0 & 0 & 0
\end{array}
\right\vert
\begin{array}
[c]{ccc}%
0 & 0 & 0\\
0 & 0 & I
\end{array}
\right]  , \label{eq:init-info-matrix-sub}%
\end{equation}
where all matrices are again $\left(  c-s\right)  \times\left(  c-s\right)  $.
Alice performs Hadamard gates on her first $c-s$ qubits and then performs CNOT
gates from her first $c-s$ qubits to her last $c-s$ qubits to transform
(\ref{eq:initial-stab-submatrix}) to the following stabilizer%
\[
\left[  \left.
\begin{array}
[c]{ccc}%
I & 0 & 0\\
0 & I & 0
\end{array}
\right\vert
\begin{array}
[c]{ccc}%
0 & I & \Gamma_{2}^{^{\prime}}\left(  D\right) \\
I & 0 & 0
\end{array}
\right]  .
\]
The information-qubit matrix in (\ref{eq:init-info-matrix-sub})\ becomes%
\[
\left[  \left.
\begin{array}
[c]{ccc}%
0 & \Gamma_{2}^{^{\prime}}\left(  D^{-1}\right)  & I\\
0 & 0 & 0
\end{array}
\right\vert
\begin{array}
[c]{ccc}%
0 & 0 & 0\\
0 & 0 & I
\end{array}
\right]  .
\]
Alice then performs infinite-depth operations on her last $c-s$ qubits. These
infinite-depth operations correspond to the elements of $\Gamma_{2}%
^{-1}\left(  D\right)  $. She finally performs Hadamard gates on her $2\left(
c-s\right)  $ qubits. The stabilizer becomes%
\begin{equation}
\left[  \left.
\begin{array}
[c]{ccc}%
I & I & \Gamma_{2}^{-1}\left(  D\right)  \Gamma_{2}^{^{\prime}}\left(
D\right) \\
0 & 0 & 0
\end{array}
\right\vert
\begin{array}
[c]{ccc}%
0 & 0 & 0\\
I & I & 0
\end{array}
\right]  , \label{eq:next-stab-submatrix}%
\end{equation}
and the information-qubit matrix becomes%
\begin{equation}
\left[  \left.
\begin{array}
[c]{ccc}%
0 & 0 & 0\\
0 & 0 & \Gamma_{2}^{-1}\left(  D\right)
\end{array}
\right\vert
\begin{array}
[c]{ccc}%
0 & \Gamma_{2}^{^{\prime}}\left(  D^{-1}\right)  & \Gamma_{2}\left(
D^{-1}\right) \\
0 & 0 & 0
\end{array}
\right]  . \label{eq:final-encoded-info-sub}%
\end{equation}
The stabilizer in (\ref{eq:next-stab-submatrix})\ is equivalent to the
following stabilizer by row operations (premultiplying the first $c-s$ rows by
$\Gamma_{2}\left(  D\right)  $):%
\begin{equation}
\left[  \left.
\begin{array}
[c]{ccc}%
\Gamma_{2}\left(  D\right)  & \Gamma_{2}\left(  D\right)  & \Gamma
_{2}^{^{\prime}}\left(  D\right) \\
0 & 0 & 0
\end{array}
\right\vert
\begin{array}
[c]{ccc}%
0 & 0 & 0\\
I & I & 0
\end{array}
\right]  . \label{eq:augmented-submatrix}%
\end{equation}
The measurements that Bob performs have finite weight because the row
operations are multiplications of the rows by the arbitrary polynomials in
$\Gamma_{2}\left(  D\right)  $. Alice thus encodes a code equivalent to the
desired quantum check matrix in\ (\ref{eq:submatrix})\ using $c-s$ ebits and
$c-s$ information qubits.

We now discuss decoding the stabilizer in (\ref{eq:next-stab-submatrix}) and
information qubits. Bob performs CNOTs from the first $c-s$ qubits to the next
$c-s$ qubits. The stabilizer becomes%
\begin{equation}
\left[  \left.
\begin{array}
[c]{ccc}%
0 & I & \Gamma_{2}^{-1}\left(  D\right)  \Gamma_{2}^{^{\prime}}\left(
D\right) \\
0 & 0 & 0
\end{array}
\right\vert
\begin{array}
[c]{ccc}%
0 & 0 & 0\\
I & 0 & 0
\end{array}
\right]  , \label{eq:first-decode-stab-submatrix}%
\end{equation}
and the information-qubit matrix does not change. Bob uses Hadamard and
finite-depth CNOT\ gates to multiply the last $c-s$ columns in the
\textquotedblleft Z\textquotedblright\ matrix by $\Gamma_{2}^{^{\prime}%
}\left(  D^{-1}\right)  \Gamma_{2}\left(  D\right)  $ and add the result to
the middle $c-s$\ columns. It is possible to use finite-depth operations
because the entries of $\Gamma_{2}^{^{\prime}}\left(  D\right)  $ are all
powers of $D$ so that $\Gamma_{2}^{^{\prime}}\left(  D^{-1}\right)
=\Gamma_{2}^{^{\prime}-1}\left(  D\right)  $. The stabilizer in
(\ref{eq:first-decode-stab-submatrix})\ becomes%
\[
\left[  \left.
\begin{array}
[c]{ccc}%
0 & 0 & \Gamma_{2}^{-1}\left(  D\right)  \Gamma_{2}^{^{\prime}}\left(
D\right) \\
0 & 0 & 0
\end{array}
\right\vert
\begin{array}
[c]{ccc}%
0 & 0 & 0\\
I & 0 & 0
\end{array}
\right]  ,
\]
and the information-qubit matrix in (\ref{eq:final-encoded-info-sub})\ becomes%
\[
\left[  \left.
\begin{array}
[c]{ccc}%
0 & 0 & 0\\
0 & \Gamma_{2}^{^{\prime}}\left(  D^{-1}\right)  & \Gamma_{2}^{-1}\left(
D\right)
\end{array}
\right\vert
\begin{array}
[c]{ccc}%
0 & \Gamma_{2}^{^{\prime}}\left(  D^{-1}\right)  & 0\\
0 & 0 & 0
\end{array}
\right]  .
\]
We premultiply the first $c-s$ rows of the stabilizer by $\Gamma_{2}%
^{^{\prime}}\left(  D^{-1}\right)  $ and add the result to the second $c-s$
rows of the information-qubit matrix. These row operations from the stabilizer
to the information-qubit matrix result in the information-qubit matrix having
pure logical operators for the middle $c-s$ qubits. Perform Hadamard gates on
the second set of $c-s$ qubits. The resulting information-qubit matrix is as
follows%
\begin{equation}
\left[  \left.
\begin{array}
[c]{ccc}%
0 & \Gamma_{2}^{^{\prime}}\left(  D^{-1}\right)  & 0\\
0 & 0 & 0
\end{array}
\right\vert
\begin{array}
[c]{ccc}%
0 & 0 & 0\\
0 & \Gamma_{2}^{^{\prime}}\left(  D^{-1}\right)  & 0
\end{array}
\right]  , \label{eq:second-class-info-decoded}%
\end{equation}
so that the information qubits are available at the end of decoding.
Processing may delay or advance them with respect to their initial locations
because the matrix $\Gamma_{2}^{^{\prime}}\left(  D^{-1}\right)  $ is diagonal
with powers of $D$. We can determine that the information qubits teleport
coherently\ from the last set of $c-s$ qubits to the second set of $c-s$
qubits in every frame by comparing (\ref{eq:second-class-info-decoded}) to
(\ref{eq:init-info-matrix-sub}).

The overall procedure for encoding is to begin with a set of $c$ ebits,
$2\left(  n-c\right)  -k_{1}-k_{2}$ ancilla qubits, and $k_{1}+k_{2}-n+c$
information qubits. We perform the infinite-depth operations detailed
in\ (\ref{eq:submatrix}-\ref{eq:augmented-submatrix}) for $c-s$ of the ebits.
We then perform the finite-depth operations detailed in the proofs of this
lemma and Lemma~\ref{lemma:general-ops} in reverse order. The resulting
stabilizer has equivalent error-correcting properties to the quantum check
matrix in (\ref{eq:orig-gens}).

The receiver decodes by first performing all of the finite-depth operations in
reverse order. The receiver then decodes the infinite-depth operations by the
procedure listed in (\ref{eq:first-decode-stab-submatrix}%
-\ref{eq:second-class-info-decoded}) so that the original $k_{1}+k_{2}-n+c$
information qubits per frame are available for processing at the receiving
end.%
\end{proof}%

\begin{example}
\label{ex:brun-example}Consider a classical convolutional code with the
following check matrix:%
\[
H\left(  D\right)  =\left[
\begin{array}
[c]{cc}%
1 & 1+D
\end{array}
\right]  .
\]
We can use the above check matrix in an entanglement-assisted quantum
convolutional code to correct for both bit flips and phase flips. We form the
following quantum check matrix:%
\begin{equation}
\left[  \left.
\begin{array}
[c]{cc}%
1 & 1+D\\
0 & 0
\end{array}
\right\vert
\begin{array}
[c]{cc}%
0 & 0\\
1 & 1+D
\end{array}
\right]  . \label{eq:brun-desired-stab}%
\end{equation}
We first perform some manipulations to put the above quantum check matrix into
a standard form. Perform a CNOT\ from qubit one to qubit two in the same frame
and in the next frame. The above matrix becomes%
\[
\left[  \left.
\begin{array}
[c]{cc}%
D^{-1}+1+D & 1+D\\
0 & 0
\end{array}
\right\vert
\begin{array}
[c]{cc}%
0 & 0\\
1 & 0
\end{array}
\right]  .
\]
Perform a Hadamard gate on qubits one and two. The matrix becomes%
\[
\left[  \left.
\begin{array}
[c]{cc}%
0 & 0\\
1 & 0
\end{array}
\right\vert
\begin{array}
[c]{cc}%
D^{-1}+1+D & 1+D\\
0 & 0
\end{array}
\right]  .
\]
Perform a CNOT\ from qubit one to qubit two. The matrix becomes%
\[
\left[  \left.
\begin{array}
[c]{cc}%
0 & 0\\
1 & 0
\end{array}
\right\vert
\begin{array}
[c]{cc}%
D^{-1}+1+D & D^{-1}\\
0 & 0
\end{array}
\right]  .
\]
Perform a row operation that delays the first row by $D$. Perform a Hadamard
on both qubits. The stabilizer becomes%
\[
\left[  \left.
\begin{array}
[c]{cc}%
1+D+D^{2} & 1\\
0 & 0
\end{array}
\right\vert
\begin{array}
[c]{cc}%
0 & 0\\
1 & 0
\end{array}
\right]  .
\]
The above matrix is now in standard form. The matrix $F\left(  D\right)  =1$
as in (\ref{eq:last-eq-fefd}) so that its only invariant factor is equal to
one. The code falls into the second class of entanglement-assisted quantum
convolutional codes. We begin encoding with one ebit and one information qubit
per frame. The stabilizer matrix for the unencoded stream is as follows:%
\[
\left[  \left.
\begin{array}
[c]{ccc}%
1 & 1 & 0\\
0 & 0 & 0
\end{array}
\right\vert
\begin{array}
[c]{ccc}%
0 & 0 & 0\\
1 & 1 & 0
\end{array}
\right]  ,
\]
and the information-qubit matrix is as follows:%
\[
\left[  \left.
\begin{array}
[c]{ccc}%
0 & 0 & 0\\
0 & 0 & 1
\end{array}
\right\vert
\begin{array}
[c]{ccc}%
0 & 0 & 1\\
0 & 0 & 0
\end{array}
\right]  .
\]
Perform a Hadamard on qubit two and a CNOT\ from qubit two to qubit three so
that the above stabilizer becomes%
\[
\left[  \left.
\begin{array}
[c]{ccc}%
1 & 0 & 0\\
0 & 1 & 0
\end{array}
\right\vert
\begin{array}
[c]{ccc}%
0 & 1 & 1\\
1 & 0 & 0
\end{array}
\right]  ,
\]
and the information-qubit matrix becomes%
\[
\left[  \left.
\begin{array}
[c]{ccc}%
0 & 0 & 0\\
0 & 1 & 1
\end{array}
\right\vert
\begin{array}
[c]{ccc}%
0 & 0 & 1\\
0 & 0 & 0
\end{array}
\right]  .
\]%
\begin{figure*}
[ptb]
\begin{center}
\includegraphics[
natheight=10.380300in,
natwidth=15.453300in,
height=4.2393in,
width=6.3002in
]
{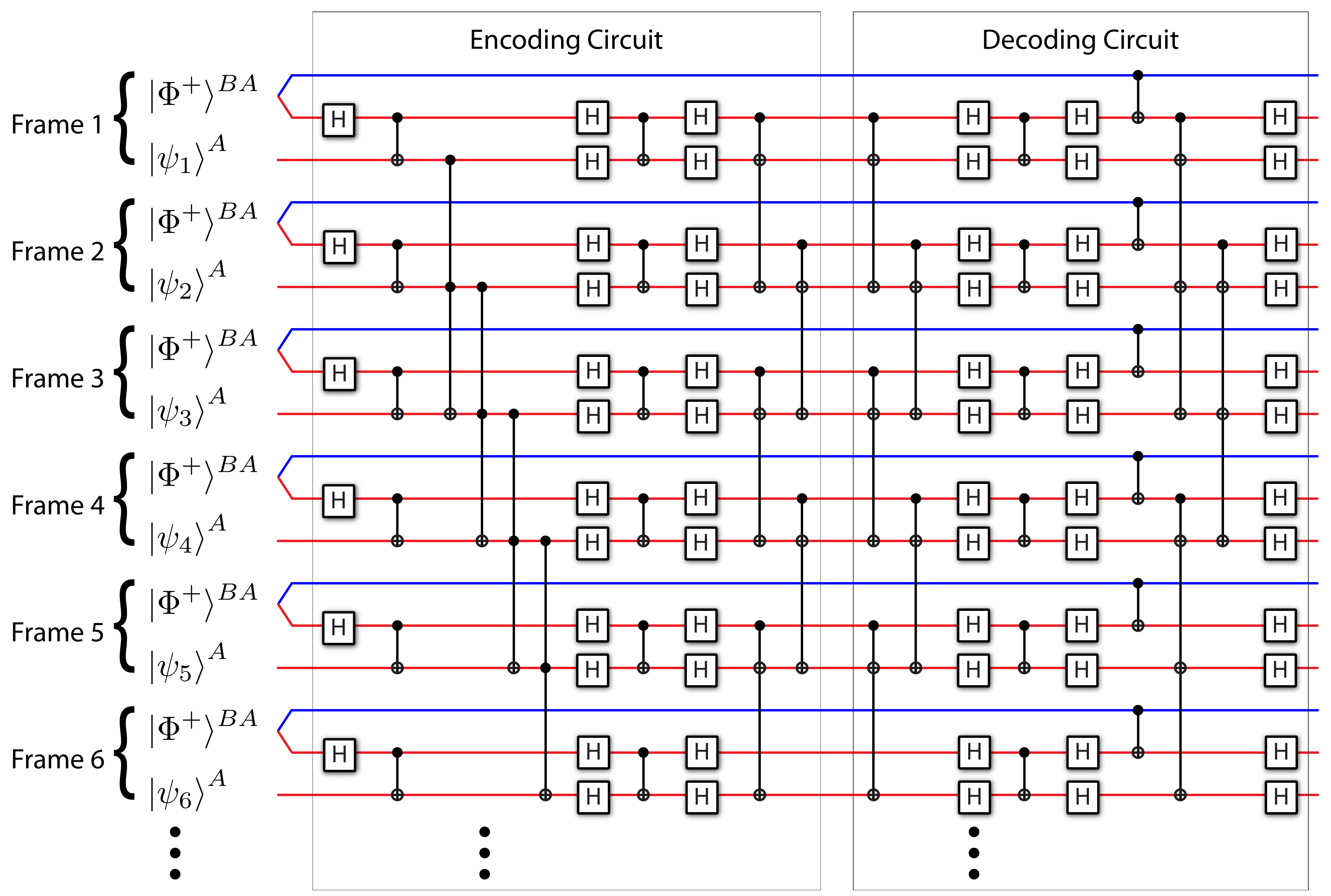}
\caption{(Color online) The encoding and decoding circuits for the
entanglement-assisted quantum convolutional code in
Example~\ref{ex:brun-example}. The third series of gates in the above encoding
circuit is an infinite-depth operation. The other operations in the encoding
circuit are finite-depth. The decoding circuit has finite-depth operations
only.}
\label{fig:example-brun}
\end{center}
\end{figure*}%
Perform an infinite-depth operation corresponding to the rational polynomial
$1/\left(  1+D+D^{2}\right)  $ on qubit three. Follow with a Hadamard gate on
qubits two and three. The stabilizer matrix becomes
\begin{equation}
\left[  \left.
\begin{array}
[c]{ccc}%
1 & 1 & 1/\left(  1+D+D^{2}\right) \\
0 & 0 & 0
\end{array}
\right\vert
\begin{array}
[c]{ccc}%
0 & 0 & 0\\
1 & 1 & 0
\end{array}
\right]  , \label{eq:brun-stab}%
\end{equation}
and the information-qubit matrix becomes%
\begin{equation}
\left[  \left.
\begin{array}
[c]{ccc}%
0 & 0 & 1/\left(  1+D+D^{2}\right) \\
0 & 0 & 0
\end{array}
\right\vert
\begin{array}
[c]{ccc}%
0 & 0 & 0\\
0 & 1 & 1+D^{-1}+D^{-2}%
\end{array}
\right]  . \label{eq:brun-info}%
\end{equation}
Perform the finite-depth operations above in reverse order so that the
stabilizer becomes%
\[
\left[  \left.
\begin{array}
[c]{ccc}%
D^{-1} & \frac{1}{1+D+D^{2}} & \frac{1+D}{1+D+D^{2}}\\
0 & 0 & 0
\end{array}
\right\vert
\begin{array}
[c]{ccc}%
0 & 0 & 0\\
1 & 1 & 1+D
\end{array}
\right]  ,
\]
and the information-qubit matrix becomes%
\[
\left[  \left.
\begin{array}
[c]{ccc}%
0 & \frac{D^{-1}+D^{-2}}{1+D+D^{2}} & \frac{1}{1+D+D^{2}}\\
0 & 0 & 0
\end{array}
\right\vert
\begin{array}
[c]{ccc}%
0 & 0 & 0\\
0 & D^{-1}+D^{-2} & D^{-1}%
\end{array}
\right]  .
\]
The above stabilizer is equivalent to the desired quantum check matrix in
(\ref{eq:brun-desired-stab}) by a row operation that multiplies its first row
by $1+D+D^{2}$.\newline\newline The receiver decodes by performing the
finite-depth encoding operations in reverse order and gets the stabilizer in
(\ref{eq:brun-stab})\ and the information-qubit matrix in (\ref{eq:brun-info}%
). The receiver performs a CNOT\ from qubit one to qubit two and follows with
a CNOT from qubit two to qubit three in the same frame, in an advanced frame,
and in a twice-advanced frame. Finally perform a Hadamard gate on qubits two
and three. The stabilizer becomes%
\[
\left[  \left.
\begin{array}
[c]{ccc}%
0 & 0 & 0\\
0 & 0 & 0
\end{array}
\right\vert
\begin{array}
[c]{ccc}%
0 & 0 & 1/\left(  1+D+D^{2}\right) \\
0 & 0 & 0
\end{array}
\right]  ,
\]
and the information-qubit matrix becomes%
\[
\left[  \left.
\begin{array}
[c]{ccc}%
0 & 0 & 0\\
0 & 1 & 0
\end{array}
\right\vert
\begin{array}
[c]{ccc}%
0 & 1 & 1/\left(  1+D+D^{2}\right) \\
0 & 0 & 0
\end{array}
\right]  .
\]
The receiver decodes the information qubits successfully because a row
operation from the first row of the stabilizer to the first row of the
information-qubit matrix gives the proper logical operators for the
information qubits. Figure~\ref{fig:example-brun}\ details the above encoding
and decoding operations for this entanglement-assisted quantum convolutional code.
\end{example}

\subsection{Discussion}

\label{sec:2nd-class-assumpts}This second class of codes assumes that
noiseless encoding is available. We require this assumption because the
encoding circuit employs infinite-depth encoding operations.

If an error does occur during the encoding process, it can propagate
infinitely through the encoded qubit stream. The result of a single encoding
error can distort both the encoded quantum information, the syndromes that
result from measurements, and the final recovery operations based on the syndromes.

We may be able to relax the noiseless encoding assumption if nearly noiseless
encoding is available. The probability of an error would have to be negligible
in order to ensure that the probability for a catastrophic failure is
negligible. One way to lower the probability of an encoding error is to encode
first with a quantum block code and then further encode with our quantum
convolutional coding method. Many classical coding systems exploit this
technique, the most popular of which is a Reed-Solomon encoder followed by a
convolutional encoder.

\section{Conclusion and Current Work}

This work develops the theory of entanglement-assisted quantum convolutional
coding. We show several methods for importing two arbitrary classical binary
convolutional codes for use in an entanglement-assisted quantum convolutional
code. Our methods outline different ways for encoding and decoding our
entanglement-assisted quantum convolutional codes.

We introduce the notion of an infinite-depth operation for encoding circuits.
We use these infinite-depth operations in both encoding and decoding. These
operations are acceptable if we assume that noiseless processing is available
both at the sender's end and on the receiver's half of shared ebits.

Our first class of codes employs only finite-depth operations in their
encoding and decoding procedures. These codes are the most useful in practice
because they do not have the risk of catastrophic error propagation. An error
that occurs during encoding, measurement, recovery, or decoding propagates
only to a finite number of neighboring qubits.

Our second class of codes uses infinite-depth operations during encoding. This
assumption is reasonable only if noiseless encoding is available. The method
of concatenated coding is one way to approach nearly noiseless encoding in practice.

We suggest several lines of inquiry from here. Our codes are not only useful
for quantum communication, but should also be useful for private classical
communication because of the well-known connection between a quantum channel
and private classical channel \cite{ieee2005dev}. It may make sense from a
practical standpoint to begin investigating the performance of our codes for
encoding secret classical messages. The commercial success of quantum key
distribution for the generation of a private shared secret key motivates this
investigation. It is also interesting to determine which entanglement-assisted
codes can correct for errors on the receiver's side. Codes that possess this
property will be more useful in practice.

The authors thank Hari Krovi and Markus Grassl for useful discussions. They
thank Shesha Raghunathan and Markus Grassl for useful comments on the
manuscript. MMW\ acknowledges support from NSF Grant CCF-0545845,\ and
TAB\ acknowledges support from NSF Grant CCF-0448658.

\bibliographystyle{IEEEtran}
\bibliography{eaqcc-ieee}

\end{document}